\documentclass[letter,11pt]{article}
\usepackage{float,graphicx,verbatim,fullpage,hyperref,amssymb,amsmath,amsthm,amsfonts,enumerate,multicol, xspace,comment,xcolor,thm-restate,url, cite}
\usepackage[T1]{fontenc}
\usepackage[margin=1in]{geometry}
\usepackage{algo}

\usepackage{subfig}
\usepackage[margin=1in]{geometry}
\newcommand*{\email}[1]{\texttt{#1}}
\usepackage{hyperref}

\newtheorem{theorem}{Theorem}[section]
\newtheorem{lemma}{Lemma}[section]

\newtheorem{proposition}{Proposition}[section]
\newtheorem{claim}{Claim}[section]
\newtheorem{remark}{Remark}[section]

\def\final{0}  
\def\iflong{\iffalse}
\ifnum\final=0  
\newcommand{\cnote}[1]{{\color{red}[{\tiny Chandra: \bf #1}]\marginpar{*}}}
\newcommand{\knote}[1]{{\color{red}[{\tiny Karthik: \bf #1}]\marginpar{\color{red}*}}}
\newcommand{\todo}[1]{{\color{red}[{\tiny TODO: \bf #1}]\marginpar{\color{red}*}}}
\else 
\newcommand{\cnote}[1]{}
\newcommand{\knote}[1]{}
\newcommand{\todo}[1]{}
\fi  

\newcommand{\R}{\mathbb{R}}
\newcommand{\E}{\mathbb{E}}

\newcommand{\prob}{\mathsf{Pr}}
\newcommand{\collection}{\mathcal{C}}
\newcommand{\opt}{OPT_k}
\newcommand{\cost}{\text{cost}}
\newcommand{\core}{C}
\newcommand{\countzero}{\lambda^0}
\newcommand{\countone}{\lambda^1}
\newcommand{\countQ}{\lambda^Q}
\newcommand{\countintersect}{\lambda^{W_0 \cap Q}}
\newcommand{\countlhs}{\ell}
\newcommand{\countrhs}{r}
\newcommand{\positives}{\text{Positives}}
\newcommand{\negatives}{\text{Negatives}}
\newcommand{\deltacard}{d}

\newcommand{\hkpart}{\textsc{Hypergraph-$k$-Partition}\xspace}
\newcommand{\hkcut}{\textsc{Hypergraph-$k$-Cut}\xspace}
\newcommand{\hcut}{\textsc{Hypergraph-MinCut}\xspace}
\newcommand{\gkcut}{\textsc{Graph-$k$-Cut}\xspace}
\newcommand{\gcut}{\textsc{Graph-MinCut}\xspace}
\newcommand{\submodkpartl}{\textsc{Submodular-$k$-Partition}\xspace}
\newcommand{\submodkpart}{\textsc{Submod-$k$-Part}\xspace} 

\newcommand{\symsubmodkpart}{\textsc{Sym-Submod-$k$-Part}\xspace}

\newcommand{\mypara}[1]{\medskip \noindent {\bf #1}}

\def\complement#1{\overline{#1}}
\def\set#1{\{ #1 \}}
\usepackage[utf8]{inputenc}

\title{Hypergraph $k$-cut for fixed $k$ in deterministic polynomial
  time\footnote{University of Illinois, Urbana-Champaign. Email:
    \email{\{karthe,chekuri\}@illinois.edu}. Supported in part by NSF grant CCF-1907937.
		}}
\author{Karthekeyan Chandrasekaran
\and Chandra Chekuri
}
\date{}

\begin{document}

\maketitle

\begin{abstract}
  We consider the \hkcut problem. The input consists of a hypergraph  $G=(V,E)$ with non-negative hyperedge-costs $c: E\rightarrow \R_+$ and a positive integer $k$. The objective is to find a least-cost subset $F\subseteq E$ such that the number of connected components in $G-F$ is at least $k$. An alternative formulation of the objective is to find a partition of $V$ into $k$ non-empty sets $V_1,V_2,\ldots,V_k$ so as to minimize the cost of the hyperedges that cross the partition. \gkcut, the special case of \hkcut obtained by restricting to graph inputs, has received considerable attention. Several different approaches lead to a polynomial-time algorithm for \gkcut when $k$ is fixed, starting with the work of Goldschmidt and Hochbaum (1988) \cite{GH88,GH94}. In contrast, it is only recently that a randomized polynomial time algorithm for \hkcut was developed \cite{CXY19} via a subtle generalization of Karger's random contraction approach for graphs. In this work, we develop the first deterministic polynomial time algorithm for \hkcut for all fixed $k$. We describe two algorithms both of which are based on a divide and conquer approach. The first algorithm is simpler and runs in $n^{O(k^2)}$ time while the second one runs in $n^{O(k)}$ time. Our proof relies on new structural results that allow for efficient recovery of the parts of an optimum $k$-partition by solving minimum $(S,T)$-terminal cuts. Our techniques give new insights even for \gkcut. 
\end{abstract}

\newpage
\setcounter{page}{1}

\section{Introduction}
\label{sec:intro}
A hypergraph $G=(V,E)$ consists of a finite set $V$ of vertices and a
finite set $E$ of hyperedges where each $e \in E$ is a subset of $V$.  In this work, we consider
the \hkcut problem, in particular when $k$ is a fixed
constant. The input to this problem consists of a hypergraph $G=(V,E)$
with non-negative hyperedge-costs $c: E\rightarrow \R_+$ and a positive integer $k$. The objective is to find a minimum-cost subset of hyperedges whose
removal results in at least $k$ connected components. 
An equivalent partitioning formulation turns out to be quite important. 
In this formulation, the
objective is to find a partition of $V$ into $k$ non-empty sets
$V_1,V_2,\ldots, V_k$ so as to minimize the cost of the hyperedges that
cross the partition. A hyperedge $e \in E$ crosses a partition
$(V_1,V_2,\ldots,V_k)$ if it has vertices in more than two parts,
that is, there exist distinct $i, j\in [k]$ 
such that
$e \cap V_i \neq \emptyset$ and $e \cap V_j \neq \emptyset$.

Cut and partitioning problems in graphs, hypergraphs, and related
structures including submodular functions are extensively studied in
algorithms and combinatorial optimization literature for their theoretical
importance and numerous applications. \hkcut is a problem that is of
inherent interest not only for its applications and simplicity but also because
of its close connections to a special case, namely in graphs, and to a
generalization, namely in submodular functions. For this reason the complexity
of \hkcut has been an intriguing open problem for several years with
some important recent progress. First we describe these closely
related problems and some prior work on them.

\mypara{\gkcut:} This is a special case of \hkcut where the input is a
graph instead of a hypergraph. When $k=2$, \gkcut is the global minimum
cut problem (\gcut) which is a fundamental and well-known problem.  It
is easy to see that \gcut can be solved in polynomial time via
reduction to min $s$-$t$ cuts but there is more structure in
$\gcut$, and this can be exploited to obtain faster deterministic and
randomized algorithms \cite{NI92,SW97,KS96,Karger00}. The complexity
of \gkcut for $k \ge 3$ has also been extensively investigated
with substantial recent work. 
Goldschmidt and Hochbaum (1988) \cite{GH88,GH94} showed that \gkcut is NP-Hard when $k$ is part of the
input and that it is polynomial-time solvable when $k$ is any fixed
constant (this is not obvious even for $k=3$). They used a
divide-and-conquer approach for \gkcut which resulted in an algorithm
with a running time of $n^{O(k^2)}$. We will describe the technical
aspects of this approach in more detail later. This approach
has been refined over several papers culminating in an algorithm of
Kamidoi, Yoshida, and Nagamochi \cite{KYN07} that ran in
$n^{(4 + o(1))k}$ time. Two very different approaches also give
polynomial-time algorithms for fixed $k$. The first approach is the random contraction approach of Karger that, via the improvement in Karger and Stein's work, led to a Monte Carlo randomized algorithm with a running time of $\tilde{O}(n^{2k-2})$; very recently Gupta, Lee, and Li \cite{GLL20-STOC} showed that the Karger-Stein algorithm in fact runs in $\hat{O}(n^k)$ time (where $\hat{O}(\cdot)$ hides
$2^{O(\text{ln ln }n)^2}$); $n^{(1-o(1))k}$ appears to be lower bound on the run-time via a reduction from the problem of finding a maximum-weight clique of size $k$ (see \cite{Li19}). 
The second approach is the tree packing approach which was introduced by Karger for \gcut. 
Thorup \cite{Th08} showed that tree
packings can also be used to obtain a polynomial-time algorithm for \gkcut.  His
algorithm runs in deterministic $n^{2k + O(1)}$ time; his approach was
clarified in \cite{CQX19} via an LP relaxation and this also resulted
in a slight improvement in the run-time and currently yields the
fastest deterministic algorithm. We defer discussion of approximation
algorithms for \gkcut when $k$ is part of the input to the related
work section.

\mypara{Submodular Partition Problems:} Graph and hypergraph cut
functions are submodular and one can view \gkcut and \hkcut as
special cases of a more general problem called \submodkpartl
(abbreviated to \submodkpart) that we define now. We recall that a
real-valued set function $f:2^V \rightarrow \R$ is submodular iff
$f(A \cap B) + f(A \cup B) \le f(A) + f(B)$ for all
$A, B \subseteq V$. Zhao, Nagamochi, and Ibaraki \cite{ZNI05} defined
\submodkpart as follows: given $f$ specified via a value oracle and a positive
integer $k$, the goal is to partition $V$ into non-empty sets $V_1,V_2,\ldots,V_k$ so as to
minimize $\sum_{i=1}^k f(V_i)$. A special case of \submodkpart is
\symsubmodkpart when $f$ is symmetric (that is
$f(A) = f(V\setminus A)$ for all $A \subseteq V$). It is not hard to
see that \gkcut is a special case of \symsubmodkpart. However, \hkcut
is not a special case of \symsubmodkpart even though the hypergraph
cut function is itself symmetric;\footnote{\symsubmodkpart when the input function $f$ is the cut function of a hypergraph is known as \hkpart in the literature \cite{ZNI05, OFN12}. We emphasize that the objective in \hkpart is different from the objective in \hkcut.} as observed in \cite{OFN12}, one can
reduce \hkcut to \submodkpart. \submodkpart and \symsubmodkpart are very general problems. For $k=2$, they can be solved in polynomial-time
via submodular function minimization. It is a very interesting open
problem to decide whether they admit polynomial-time algorithms for
all fixed $k$. Okumoto, Fukunaga, and Nagamochi \cite{OFN12} showed
that \submodkpart is polynomial-time solvable for $k=3$. They
generalized the work of Xiao \cite{Xi08} who showed that \hkcut
is polynomial-time solvable for $k=3$. Queyranne claimed, in 1999, a
polynomial-time algorithm for \symsubmodkpart when $k$ is fixed \cite{Q99},
however the claim was retracted subsequently. This is reported in
\cite{GQ} where it is also shown that
\symsubmodkpart has a polynomial-time algorithm for $k \le 4$.

\smallskip
\noindent 
\emph{Multiterminal variants:} We also mention that \gkcut, \hkcut, and
\submodkpart have natural variants involving separating specified
terminal vertices $s_1,s_2,\ldots,s_k$. These versions are NP-hard for $k\ge 3$. We discuss approximation algorithms for these problems in the related work section.

\mypara{\hkcut and main result:} The complexity of \hkcut for fixed $k$ has been
open since the work of Goldschmidt and Hochbaum for graphs (1988) \cite{GH88}. For $k=2$, this is the \hcut
problem and can be solved via reduction to min $s$-$t$ cuts in
directed graphs \cite{La73} or via other approaches that take
advantage of the submodularity structure of the hypergraph cut
function (see \cite{ChekuriX18} and references therein). For $k \ge 3$ and bounded rank hypergraphs, 
Fukunaga \cite{F10} generalized
Thorup's tree packing approach \cite{Th08} to solve \hkcut for fixed $k$ --- the run-time depends exponentially in the rank (rank is the maximum cardinality of a
hyperedge in the input hypergraph). It was also observed that Karger's random
contraction approach for graphs easily extends to give a randomized algorithm for bounded rank hypergraphs. As we noted earlier, Xiao
\cite{Xi08} obtained a polynomial-time algorithm for \hkcut when
$k=3$. In fairly recent work, Chandrasekaran, Xu, and Yu \cite{CXY19}
obtained the first randomized polynomial-time algorithm for \hkcut for
any fixed $k$; their Monte Carlo algorithm runs in
$\tilde{O}(pn^{2k-1})$ time where $p = \sum_{e \in E} |e|$ is the
representation size of the input hypergraph. Subsequently, Fox, Panigrahi, and Zhang \cite{FPZ19} improved the
randomized run-time to $\tilde{O}(mn^{2k-2})$, where $m$ is the number
of hyperedges in the input hypergraph.  Both these randomized
algorithms are based on random contraction of hyperedges and are
inspired partly by earlier work in \cite{GKP17} for \hcut.

The existence of a randomized algorithm for \hkcut raises
the question of the existence of a deterministic algorithm.  Random
contraction based algorithms do not lend themselves naturally to
derandomization. Perhaps, more pertinent is our interest in addressing
the complexity of \submodkpart. There is no natural random contraction
approach for this more general problem. For \gkcut, two distinct
approaches lead to deterministic algorithms and among these,  the tree
packing approach, like the random contraction approach, does not
appear to apply to \submodkpart. This leaves the divide and conquer
approach initiated in the paper of Goldschmidt and Hochbaum
\cite{GH88,GH94}. Is there a variant of this approach that works for
\hkcut and \submodkpart? 
%
We discovered certain structural properties of \hkcut (that do not hold for other submodular functions) to prove our main result stated below. 
\begin{theorem}
  \label{thm:main-intro}
  There is a deterministic polynomial-time algorithm for \hkcut for any fixed $k$.
\end{theorem}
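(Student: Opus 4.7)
The plan is to follow a divide-and-conquer scheme in the spirit of Goldschmidt and Hochbaum, but built around a new structural lemma tailored to the hypergraph cut function. The algorithmic shape is as follows: enumerate pairs of small seed sets $(S, T) \subseteq V$; for each pair compute a minimum $(S, T)$-terminal cut in the hypergraph (polynomial-time via Lawler's reduction to directed $s$-$t$ cuts); treat the $S$-side of the cut as a candidate part of an optimum $k$-partition, peel it off, and recurse on the remaining subhypergraph seeking an optimum $(k-1)$-partition. The base case $k=2$ is \hcut, which is polynomial-time solvable. The structural lemma will guarantee that whenever the enumeration contains a ``correct'' $(S,T)$, one branch of the recursion returns a true optimum. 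With $|S|, |T| = O(k)$, the enumeration has size $n^{O(k)}$ per recursion level and depth $k$ yields $n^{O(k^2)}$ overall for the simpler algorithm; a more careful amortization of the seed guesses across recursion levels is what shaves the bound down to $n^{O(k)}$ for the refined algorithm.

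The crux is the structural lemma. Fix an optimum $k$-partition $(V_1, \ldots, V_k)$. The target statement is that there exist an index $i$ and sets $S \subseteq V_i$, $T \subseteq V \setminus V_i$, each of size $O(k)$, such that some minimum $(S,T)$-terminal cut in $G$ has the $S$-side equal to $V_i$. The natural recipe is to place one representative vertex in $T$ from each part $V_j$ with $j \ne i$, which forces any $(S,T)$-cut to separate $V_i$ from every other part; then one needs to show, via an uncrossing/exchange argument, that $V_i$ itself is an optimum among such cuts. Concretely, starting from an arbitrary minimum $(S,T)$-cut $W$ with $S \subseteq W \subseteq V \setminus T$, I would compare $c(W)$ and $c(V_i)$ using submodularity of the hypergraph cut function together with the optimality of $(V_1,\ldots,V_k)$ among $k$-partitions. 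The index $i$ itself would be picked by an averaging or counting argument that identifies the part that uncrosses most cleanly with a minimum terminal cut.

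The main obstacle is precisely that the standard submodular uncrossing, which drives the graph analogue of this argument, is strictly weaker for the hypergraph cut function when hyperedges straddle more than two of the parts $V_1,\ldots,V_k$: such a hyperedge may contribute to $c(W)$ but not to $c(V_i)$, or vice versa, and this asymmetry breaks the naive exchange step that underlies Goldschmidt--Hochbaum. Resolving this will require a hypergraph-specific structural property of optimum partitions, roughly saying that there is always at least one part $V_i$ whose ``private'' boundary --- the hyperedges that separate $V_i$ from the rest without being further split by the partition inside $V \setminus V_i$ --- is controlled by the overall optimum cost; with such a property one can argue that uncrossing $W$ with $V_i$ does not increase the cut value even in the presence of multi-way crossing hyperedges. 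Proving this structural claim, and matching it with an enumeration strategy that both identifies the correct $i$ and pins down small seed sets $S$ and $T$ without blowing up the $n^{O(k)}$ budget per level, is what I expect to be the heart of the argument.
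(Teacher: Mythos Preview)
Your algorithmic shell --- enumerate small seed pairs $(S,T)$, compute minimum $(S,T)$-terminal cuts, peel off a candidate part, recurse --- matches the paper exactly, and your diagnosis that naive submodular uncrossing fails once hyperedges span three or more parts is correct. The gap is in the mechanism you propose for the structural lemma.

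You suggest selecting a distinguished part $V_i$ by an averaging argument (one whose ``private boundary'' is controlled) and then uncrossing a minimum $(S,T)$-cut $W$ directly against $V_i$. The paper does neither. First, its structural theorem holds for \emph{every} part of a \emph{maximal} minimum $k$-partition, with no need to single out a cheap or well-behaved one; maximality of $V_1$ is what drives the containment step (any minimum $(S,T)$-cut with $S\subseteq V_1$ and $T$ hitting all other parts has its source side contained in $V_1$), and this hypothesis does not appear in your plan. Second, and more importantly, the existence of a small witnessing $S$ is not obtained by comparing a single cut $W$ to $V_i$. It is proved by contradiction: if no $S$ of size $2k-2$ recovers $V_1$, then the best such $S$ together with $2k-3$ one-vertex perturbations of it yields a family of $2k-2$ minimum terminal cuts $(\overline{A_i},A_i)$, each of value strictly below $d(V_1)$ and with $u_i\in A_i\setminus\bigcup_{j\neq i}A_j$. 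The hypergraph-specific engine (Theorem~\ref{theorem:hypergraph-uncrossing}) then uncrosses this \emph{family of cuts against each other} --- not against $V_1$ --- to manufacture a $k$-partition of cost at most $\tfrac{1}{2}\min_{i\neq j}\bigl(d(A_i)+d(A_j)\bigr)<d(V_1)\le\opt$, contradicting optimality. That uncrossing theorem is proved by a hyperedge-by-hyperedge count via an auxiliary function $\sigma$ and is explicitly shown to fail for general symmetric submodular functions; it has no counterpart in the ``private boundary'' picture you sketch.

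So the missing idea is not a property of one good part but a many-set uncrossing inequality specific to the hypergraph cut function, deployed inside a contradiction argument.
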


Our work raises the hope for a polynomial-time algorithm for \submodkpart when $k$ is fixed.

\subsection{Technical overview and structural results}
We focus on the unit-cost variant of the problem in the rest of this
work for the sake of notational simplicity. Note that we allow
multigraphs and hence this is without loss of generality. All our
algorithms extend in a straightforward manner to arbitrary hyperedge
costs. They rely only on minimum $(s,t)$-cut computations and hence, they
are strongly polynomial.

A key algorithmic tool will be the use of terminal cuts.  We need some
notation. Let $G=(V,E)$ be a hypergraph. For a subset $U$ of vertices,
we will use $\complement{U}$ to denote $V\setminus U$, $\delta(U)$
to denote the set of hyperedges crossing $U$, 
and $\deltacard(U):=|\delta(U)|$ to denote the value of $U$.
More generally, given a partition
$(V_1,V_2,\ldots,V_h)$, we denote 
the number of hyperedges crossing the partition by 
$\cost(V_1, V_2, \ldots, V_h)$. 
Let $S$, $T$ be
disjoint subsets of vertices. A $2$-partition $(U, \complement{U})$ is
an $(S,T)$-terminal cut if $S\subseteq U\subseteq V\setminus T$. Here,
the set $U$ is known as the source set and the set $\complement{U}$ is
known as the sink set.  A minimum valued $(S,T)$-terminal cut is known
as a minimum $(S,T)$-terminal cut.  Since there could be multiple
minimum $(S,T)$-terminal cuts, we will be interested in \emph{source
  maximal} minimum $(S,T)$-terminal cuts and \emph{source minimal}
minimum $(S,T)$-terminal cuts.  These cuts are unique and can be found
in polynomial-time via standard maxflow algorithms. In fact, these
definitions extend to general submodular functions. Given $f:2^{V}\rightarrow \R$ and disjoint sets 
$S, T\subseteq V$, we can define a minimum $(S, T)$-terminal cut for $f$
as $\min_{U: S \subseteq U, T \subseteq \complement{U}} f(U)$. Uniqueness
of source-maximal and source-minimal $(S,T)$-terminal cuts follow from submodularity and one can also find these in polynomial-time via submodular function minimization.

Our algorithm follows the divide-and-conquer approach that was first
used by Goldschmidt and Hochbaum \cite{GH88,GH94} for \gkcut, and in a more general fashion by Kamidoi, Yoshida, and Nagamochi \cite{KYN07} to improve the running time for \gkcut.  The goal in this
approach is to identify one part of some fixed optimum $k$-partition
$(V_1,V_2,\ldots,V_k)$, say $V_1$ without loss of generality,
and then recursively find a $(k-1)$ partition of $V \setminus V_1$.
How do we find such a part? Goldschmidt and Hochbaum proved a key
structural lemma for \gkcut: Suppose $(V_1,V_2,\ldots,V_k)$ is
an optimum $k$-partition such that $V_1$ is the part with the smallest cut value (i.e., 
$|\delta(V_1)| \le |\delta(V_i)|$ for all $i\in [k]$) and
$V_1$ is maximal subject to this condition. Then, either
$|V_1| \le k-2$ or there exist disjoint sets $S, T$ such that
$S \subseteq V_1, T\subseteq \complement{V_1}$ with $|S|\le k-1$ and $|T\cap V_j|=1$ for every $j\in \{2,\ldots, k\}$ so that the source
maximal minimum $(S,T)$-terminal cut is $(V_1, \complement{V_1})$. One can guess/enumerate all
small-sized $(S,T)$-pairs to find an $O(n^{2k-2})$-sized collection of sets containing $V_1$ and recursively find an optimum $(k-1)$-partition of $V\setminus U$ for each $U$ in the collection.   
This leads to an $n^{O(k^2)}$-time algorithm for \gkcut.

Queyranne \cite{Q99} claimed that a natural generalization of the preceding structural lemma holds in the more general setting of \symsubmodkpart. However, as reported in \cite{GQ}, the claimed proof was incorrect and it was only proved for $k=3, 4$.
More importantly, as also noted in \cite{GQ}, this structural lemma (even if true for arbitrary $k$) is not useful for \symsubmodkpart 
because one cannot recurse on
$V\setminus V_1$; the function $f$ restricted to $V \setminus V_1$
is no longer symmetric! The reader might now wonder how the approach works for \gkcut?
Interestingly, \gkcut has the very nice property that the graph cut function restricted to $V\setminus V_1$ is still symmetric!

However, \hkcut, the problem of interest here, is \emph{not} a special case
of \symsubmodkpart.  Nevertheless, we are able to prove a strong
structural characterization. We state the structural characterization now. 
We consider the partition viewpoint of \hkcut. We will denote a $k$-partition by an ordered tuple. 
A $k$-partition is
a minimum $k$-partition if it has the minimum number of crossing
hyperedges among all possible $k$-partitions. Since there could be
multiple minimum $k$-partitions, we will be interested in the
$k$-partition $(V_1, \ldots, V_k)$ for which $V_1$ is maximal:
formally, we define a minimum $k$-partition $(V_1, \ldots, V_k)$ to be
a \emph{maximal minimum $k$-partition} if there is no other minimum
$k$-partition $(V_1', \ldots, V_k')$ such that $V_1$ is strictly
contained in $V_1'$. The following is our main structural result.

\begin{restatable}{theorem}{thmSmallWitness}
  \label{theorem:small-witness-inside-V_1-for-arbitrary-T}
  Let $G=(V,E)$ be a hypergraph and let $(V_1, \ldots, V_k)$ be a
  maximal minimum $k$-partition in $G$ for an integer $k\ge
  2$. Suppose $|V_1|\ge 2k-2$. Then, for every subset
  $T\subseteq \complement{V_1}$ such that $T$ intersects $V_j$ for
  every $j\in \set{2,\ldots, k}$,
  there exists a subset $S\subseteq V_1$ of size $2k-2$ such that
  $(V_1, \complement{V_1})$ is the source maximal minimum
  $(S,T)$-terminal cut.
\end{restatable}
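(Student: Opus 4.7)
The plan is to prove the contrapositive: find $S \subseteq V_1$ of size $2k-2$ such that no subset $U \neq V_1$ with $S \subseteq U$, $T \cap U = \emptyset$, and $|\delta(U)| \le |\delta(V_1)|$ exists (such a $U$ would yield a minimum $(S,T)$-cut distinct from $V_1$ or a strictly larger source-maximal witness). The first step is to rule out $V_1 \subseteq U$: if $V_1 \subsetneq U$, then $(U, V_2 \setminus U, \ldots, V_k \setminus U)$ is a valid $k$-partition (using $\emptyset \neq T \cap V_j \subseteq V_j \setminus U$) whose first part strictly contains $V_1$, so maximality of the minimum $k$-partition $(V_1, \ldots, V_k)$ forces $\cost(U, V_2 \setminus U, \ldots, V_k \setminus U) > \cost(V_1, \ldots, V_k)$; an edge-by-edge comparison of these costs then yields $|\delta(U)| > |\delta(V_1)|$, contradicting the assumption. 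Hence $V_1 \setminus U \neq \emptyset$.

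Letting $A := U \cap V_1 \subsetneq V_1$, the case $U \subseteq V_1$ gives $A = U$ with $|\delta(A)| \le |\delta(V_1)|$ directly, while the case $U \not\subseteq V_1$ (i.e., $U \cup V_1 \supsetneq V_1$) applies the argument above to $U \cup V_1$, giving $|\delta(U \cup V_1)| > |\delta(V_1)|$; combined with submodularity $|\delta(U)| + |\delta(V_1)| \ge |\delta(U \cap V_1)| + |\delta(U \cup V_1)|$ this gives $|\delta(A)| < |\delta(V_1)|$. In either case $V_1 \setminus U = V_1 \setminus A$, so the problem reduces to finding $S \subseteq V_1$ of size $2k-2$ intersecting $V_1 \setminus A$ for every $A \subsetneq V_1$ with $|\delta(A)| \le |\delta(V_1)|$. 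The minimality of the $k$-partition further constrains such $A$: for each $j \in \{2, \ldots, k\}$, the $k$-partition $(V_1 \setminus A, V_2, \ldots, V_j \cup A, \ldots, V_k)$ is valid and at least as costly as the optimum, which gives $\beta_1(A) \ge \alpha_j(A)$, where $\beta_1(A)$ counts hyperedges inside $V_1$ that straddle $A$ and $\alpha_j(A)$ counts hyperedges $e$ with $\emptyset \neq e \cap V_1 \subseteq A$ and $\emptyset \neq e \cap \overline{V_1} \subseteq V_j$; meanwhile $|\delta(A)| \le |\delta(V_1)|$ translates into $\beta_1(A) \le \gamma(A)$, where $\gamma(A)$ counts hyperedges joining $V_1 \setminus A$ to $\overline{V_1}$ without touching $A$.

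The construction of $S$ would pick, for each $j \in \{2, \ldots, k\}$, two vertices $a_j, b_j \in V_1$ arising from extremal minimum-cut witnesses with respect to a terminal $t_j \in T \cap V_j$ (a source-minimal and a source-maximal witness in a hypergraph obtained by suitably contracting the other parts), and set $S := \bigcup_{j=2}^k \{a_j, b_j\}$; this has size at most $2k-2$, and the slack $|V_1| \ge 2k-2$ allows padding to exactly $2k-2$. I expect the hardest step to be the verification: one must show that if $S \subseteq A$ for some $A$ in the above family, then one of the inequalities $\beta_1(A) \ge \alpha_j(A)$ or $\beta_1(A) \le \gamma(A)$ must fail for some $j$, delivering the desired contradiction. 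The factor of $2$ in $2k-2$ (as opposed to the $k-1$ that suffices for graphs, where each $V_j$ needs only one witness) intuitively reflects that a hyperedge in $\delta(V_1)$ can distribute its vertices across $V_1$ and several $V_j$'s simultaneously, so two extremal witnesses per $V_j$ are needed to pin down both the "entering" and "leaving" behaviors of alternative cuts near $V_1$.
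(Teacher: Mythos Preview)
Your first stage is correct and matches the paper's Lemma~4.1: any feasible $(S,T)$-cut $(U,\complement{U})$ with $d(U)\le d(V_1)$ and $U\neq V_1$ must satisfy $V_1\setminus U\neq\emptyset$, and via submodularity the problem reduces to finding $S\subseteq V_1$ of size $2k-2$ that meets $V_1\setminus A$ for every $A\subsetneq V_1$ with $d(A)\le d(V_1)$. The auxiliary inequalities $\alpha_j(A)\le\beta_1(A)\le\gamma(A)$ you derive are also correct.

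The gap is entirely in the second stage. You propose to build $S$ by taking, for each $j\in\{2,\ldots,k\}$, two vertices $a_j,b_j$ coming from ``extremal minimum-cut witnesses with respect to $t_j$'' in some contracted hypergraph, and you explicitly concede that the verification is not done. This is not a sketch but a hope: the family $\{A\subsetneq V_1:d(A)\le d(V_1)\}$ is not indexed by the parts $V_2,\ldots,V_k$ in any natural way, and your inequalities $\alpha_j(A)\le\beta_1(A)\le\gamma(A)$ do not obviously combine to force $S\not\subseteq A$. More tellingly, every inequality you use holds for arbitrary symmetric submodular functions, whereas the paper explicitly notes that its argument fails in that generality---so an argument at your level of abstraction cannot close the gap without a further hypergraph-specific ingredient.

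The paper does \emph{not} construct $S$ explicitly. It argues by contradiction: assuming no $S$ of size $2k-2$ works, choose $S$ maximizing $d(V_S)$ (where $(V_S,\complement{V_S})$ is the source-maximal minimum $(S,T)$-cut), so that $V_S\subsetneq V_1$ and $d(V_S)<d(V_1)$; adjoin one more vertex $u_{2k-1}\in V_1\setminus V_S$, form $C=S\cup\{u_{2k-1}\}$, and let $(B_i,\complement{B_i})$ be the source-maximal minimum $(C-u_i,T)$-cuts for $i\in[2k-1]$. A short submodularity argument gives $u_i\in\complement{B_i}$ for each $i$. The decisive step is then an \emph{uncrossing theorem} (the paper's Theorem~3.1): from any $2k-2$ minimum terminal cuts with these incidence properties one can manufacture a $k$-partition of cost at most half the sum of the two smallest $d(\complement{B_i})$, hence strictly less than $d(V_1)\le\opt$, contradicting optimality. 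That uncrossing theorem is proved by a careful hyperedge-by-hyperedge counting specific to the hypergraph cut function, and it is precisely the missing idea in your proposal.
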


Some important remarks regarding the preceding theorem are in order. 
Firstly, this is surprising: for instance, if the optimum $k$-partition is unique, then the theorem allows us to find any part $V_i$ of the optimum $k$-partition $(V_1, \ldots, V_k)$ by solving minimum $(S,T)$-terminal cuts for $S$ and $T$ of bounded sizes (by noting that the reordered $k$-partition $(V_i, V_1, \ldots, V_{i-1}, V_{i+1}, \ldots, V_k)$ is also a maximal minimum $k$-partition due to uniqueness and by applying Theorem \ref{theorem:small-witness-inside-V_1-for-arbitrary-T} to this reordered $k$-partition). Such a result was not known even for graphs. 
Secondly, our structural theorem differs crucially from the structural lemma of Goldschmidt and Hochbaum \cite{GH94} for
\gkcut in that it does not rely on $V_1$ being the part with the smallest cut value. 
This also explains why we need $S$ to be of size $2k-2$ instead of $k-1$: one can show that $2k-2$ is tight for our structural theorem if we want to identify an arbitrary part even when considering \gkcut. 
Thirdly, our structural theorem does not hold for
general submodular functions. 
The theorem statement was partly inspired by experiments on small sized instances and the proof is partly inspired by a structural theorem in \cite{KYN07} for graphs.

Theorem~\ref{theorem:small-witness-inside-V_1-for-arbitrary-T}
implies, relatively easily, an $n^{O(k^2)}$-time algorithm for
\hkcut. We improve the running time
to $n^{O(k)}$ using a similar but more involved structural result that allows us to recover the union of $k/2$ parts of an optimum $k$-partition. This high-level approach of recovering the union of $k/2$ parts of an optimum $k$-partition was developed in \cite{KYN07} for \gkcut. As we already mentioned
in the preceding paragraph, a proof of a key structural lemma in \cite{KYN07} was an inspiration for our proofs though the precise statement of our structural theorem is
different from the structural lemma of \cite{KYN07} and more subtle. 
We clarify this subtlety: the key structural lemma in \cite{KYN07} for graphs is that any $2$-partition whose cut value is strictly smaller than half the optimum $k$-cut value can be recovered as a minimum $(S,T)$-terminal cut for $S$ and $T$ of sizes at most $k-1$. In contrast, our structural theorem (Theorem \ref{theorem:small-witness-inside-V_1-for-arbitrary-T}) states that $V_1$---whose cut value need not necessarily be smaller than half the optimum $k$-cut value---can be recovered as a minimum $(S,T)$-terminal cut for $S$ and $T$ of sizes at most $2k-2$. We emphasize that the factor $2$ in the conclusion of our structural result (i.e., in the size of $S$) is not simply a consequence of weakening the hypothesis by a factor of $2$ compared to that of \cite{KYN07}. 

\medskip
\noindent
{\bf Organization.} In Section~\ref{section:recursive}, we formally
describe and analyze the basic recursive algorithm that utilizes our
main structural theorem (Theorem \ref{theorem:small-witness-inside-V_1-for-arbitrary-T}). We prove an important uncrossing property of
the hypergraph cut function in Section
\ref{section:uncrossing-for-hypergraph-cut-function} and use it to
prove Theorem \ref{theorem:small-witness-inside-V_1-for-arbitrary-T}
in Section \ref{section:structural-theorem-proof}. In
Section~\ref{section:structural-theorem-for-DC}, we 
prove 
a refined structural theorem and use it in Section \ref{section:DC-algo} to derive a faster algorithm based on divide-and-conquer. 

\subsection{Other related work}
Our main focus is on \hkcut and \gkcut when $k$ is fixed.  As we
mentioned already, \gkcut is NP-Hard when $k$ is part of the input
\cite{GH88}. A $2(1-1/k)$ approximation is known for \gkcut
\cite{SV95}; several other approaches also give a $2$-approximation
(see \cite{Q19,CQX19} and references therein). Manurangsi \cite{Ma17} showed
that there is no polynomial-time $(2-\epsilon)$-approximation for any
constant $\epsilon>0$ assuming the \emph{Small Set Expansion
  Hypothesis} \cite{RS10}. In contrast, \hkcut was recently shown
\cite{CL15} to be at least as hard as the \emph{densest
  $k$-subgraph} problem. Combined with results in \cite{Ma17dks}, this shows
that \hkcut is unlikely to have a sub-polynomial factor approximation
ratio and illustrates that \hkcut differs significantly from \gkcut
when $k$ is part of the input.

As we mentioned earlier, terminal versions of \submodkpart and its special
cases such as Multiway-Cut in graphs have been extensively studied. 
The most general version here is the following: given a submodular function $f:2^V \rightarrow \mathbb{R}$ (by value oracle) and
terminals $\{s_1,s_2,\ldots,s_k\} \subset V$ the goal is to find a partition $(V_1,\ldots,V_k)$ to minimize $\sum_i f(V_i)$ subject to the constraint that $s_i \in V_i$ for $1 \le i \le k$. These problems are NP-Hard even for $k=3$ and the main focus has been
on approximation algorithms.  We refer the reader to
\cite{BSW19,ZNI05,CE11,EneVW13} for further references. We mention that for
non-negative $f$ and fixed $k$, the best approximation algorithms for
\submodkpart and \symsubmodkpart are via the terminal versions; a
$(1.5-1/k)$ for \symsubmodkpart and a $2(1-1/k)$-approximation for
\submodkpart \cite{CE11,EneVW13}.

Fixed parameter tractability of \gkcut has also been investigated.  It
is known that \gkcut is $W[1]$-hard (and hence not likely to be FPT)
parameterized by $k$ \cite{DEFPR03} while it is FPT when parameterized
by $k$ and the solution size \cite{KT11}.  We observed, via a
simple reduction from a result of Marx on vertex separators
\cite{M06}, that \hkcut is $W[1]$ hard even when parameterized by $k$
\emph{and} the solution size. This also demonstrates that \hkcut differs in
complexity from \gkcut.

Another problem closely related to \hkcut is the \hkpart problem.  The
input to \hkpart is a hypergraph $G=(V,E)$ and a positive integer $k$ and the
goal is to partition $V$ into $k$ non-empty sets $V_1,\ldots,V_k$ but
the objective is to minimize $\sum_{i=1}^k|\delta_G(V_i)|$; this means that
a hyperedge $e$ that crosses $h \ge 2$ parts pays $h$ instead of only
once (as is the case in $\hkcut$).  \hkpart is a special case of \symsubmodkpart and
its complexity status for fixed $k \ge 5$ is open. \hkpart in constant rank hypergraphs is solvable in polynomial-time by relying on the fact that the number of constant-approximate minimum $k$-cuts in a constant rank hypergraph is polynomial.

\section{Recursive Algorithm}
\label{section:recursive}
Theorem \ref{theorem:small-witness-inside-V_1-for-arbitrary-T} allows
us to design a recursive algorithm for hypergraph $k$-cut that we
describe now. For a hypergraph $G=(V,E)$ and for a subset $U$ of
vertices, let $G[U]$ denote the hypergraph obtained from $G$ by
discarding the vertices in $\complement{U}$ and by discarding all
hyperedges $e\in E$ that intersect $\complement{U}$. We describe the
formal algorithm in Figure \ref{fig:k-cut-algorithm}. It follows the
high-level outline given in the technical overview. It enumerates
$n^{O(k)}$ minimum $(S,T)$-terminal cuts, one of which is guaranteed to
identify one part of an optimum $k$-partition, and then recursively
finds an optimum $(k-1)$-partition after removing the found part. The
run-time guarantee is given in Theorem \ref{theorem:k-cut-algorithm}.

\begin{figure*}[ht]
\centering\small
\begin{algorithm}
\textul{Algorithm CUT$(G,k)$}\+
\\  {\bf Input:} Hypergraph $G=(V,E)$ and an integer $k\ge 1$
\\  {\bf Output:} A $k$-partition corresponding to a minimum $k$-cut in $G$
\\  If $k=1$\+
\\      Return $V$\-
\\  else\+
\\      Initialize $\mathcal{C}\leftarrow \set{U\subset V: |U|\le 2k-3}$ and $\mathcal{R}\leftarrow \emptyset$
\\      For every disjoint $S, T\subset V$ with $|S|=2k-2$ and $|T|=k-1$\+
\\          Compute the source maximal minimum $(S,T)$-terminal cut $(U,\complement{U})$
\\          $\mathcal{C}\leftarrow \mathcal{C}\cup \set{U}$\-
\\      For each $U\in \mathcal{C}$\+
\\          $\mathcal{P}_{\complement{U}}:=\text{CUT}(G[\complement{U}],k-1)$
\\          $\mathcal{P}:=$ Partition of $V$ obtained by concatenating $U$ with $\mathcal{P}_{\complement{U}}$
\\          $\mathcal{R}\leftarrow \mathcal{R}\cup \set{\mathcal{P}}$\-
\\      Among all $k$-partitions in $\mathcal{R}$, pick the one with minimum cost and return it \-
\end{algorithm}
\caption{Algorithm to compute minimum $k$-cut in hypergraphs.}
\label{fig:k-cut-algorithm}
\end{figure*}

\begin{theorem}\label{theorem:k-cut-algorithm}
Let $G=(V,E)$ be a $n$-vertex hypergraph of size $p$ and let $k$ be an integer. Then, algorithm CUT$(G,k)$  returns a partition corresponding to a minimum $k$-cut in $G$ and it can be implemented to run in $n^{O(k^2)}T(n, p)$ time, where $T(n,p)$ denotes the time complexity for computing the source maximal minimum $(s,t)$-terminal cut in a $n$-vertex hypergraph of size $p$.
\end{theorem}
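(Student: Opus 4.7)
\medskip\noindent\textbf{Proof proposal.}
My plan is to prove correctness by induction on $k$ and then to bound the running time by solving a simple recurrence on $|\mathcal{C}|$. The base case $k=1$ is immediate, since the algorithm returns the only valid $1$-partition $(V)$. For the inductive step with $k\ge 2$, I will fix an arbitrary \emph{maximal minimum} $k$-partition $(V_1, V_2, \ldots, V_k)$ of $G$ and argue that $V_1$ must appear in the collection $\mathcal{C}$. If $|V_1|\le 2k-3$ then $V_1$ is inserted during the initialization step. Otherwise $|V_1| \ge 2k-2$, and I will choose $T \subseteq \complement{V_1}$ to contain exactly one vertex from each of $V_2, \ldots, V_k$, so $|T|=k-1$ and $T$ meets each $V_j$ with $j\in\{2,\ldots,k\}$. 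Theorem~\ref{theorem:small-witness-inside-V_1-for-arbitrary-T} then hands me a subset $S \subseteq V_1$ of size $2k-2$ such that the source-maximal minimum $(S,T)$-terminal cut is exactly $(V_1, \complement{V_1})$; because the algorithm enumerates every disjoint pair $(S,T)$ with $|S|=2k-2$ and $|T|=k-1$, this particular pair is tried and $V_1$ is added to $\mathcal{C}$.

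Next I will show that when the algorithm processes $U = V_1 \in \mathcal{C}$, the concatenated $k$-partition is optimal. The key observation is the cost decomposition
\[
\cost(V_1, \ldots, V_k) \;=\; \deltacard(V_1) \;+\; \cost_{G[\complement{V_1}]}(V_2, \ldots, V_k),
\]
since a hyperedge of $G$ crossing $(V_1, \ldots, V_k)$ either intersects $V_1$ (contributing to $\deltacard(V_1)$) or lies entirely in $\complement{V_1}$ and crosses $(V_2, \ldots, V_k)$ (these are precisely the hyperedges of $G[\complement{V_1}]$ crossing $(V_2, \ldots, V_k)$). By the inductive hypothesis, the recursive call on $(G[\complement{V_1}], k-1)$ returns a minimum $(k-1)$-cut of $G[\complement{V_1}]$, whose cost is at most that of $(V_2, \ldots, V_k)$. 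Concatenating with $V_1$ therefore yields a $k$-partition of cost at most $\cost(V_1, \ldots, V_k)$, which is optimum. Since every partition added to $\mathcal{R}$ is a valid $k$-partition, the minimum over $\mathcal{R}$ is an optimum $k$-cut.

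For the running time, I bound $|\mathcal{C}| \le O(n^{2k-3}) + O(n^{(2k-2)+(k-1)}) = O(n^{3k-3})$, where the two terms count the initialization and the enumerated $(S,T)$-pairs respectively. Constructing $\mathcal{C}$ therefore takes $O(n^{3k-3}) \cdot T(n,p)$ time. Each $U \in \mathcal{C}$ triggers a recursive call on a hypergraph with at most $n$ vertices and size at most $p$ with parameter $k-1$. Writing $R(n,p,k)$ for the total running time, I get
\[
R(n,p,k) \;\le\; O(n^{3k-3})\bigl(T(n,p) + R(n,p,k-1)\bigr),\qquad R(n,p,1) = O(1),
\]
which unrolls to $R(n,p,k) = n^{\sum_{j=2}^{k}(3j-3)} \cdot T(n,p) = n^{O(k^2)} T(n,p)$, matching the claim. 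The only real subtlety is the first step: choosing the \emph{maximal} minimum $k$-partition at the start (so that the hypothesis of Theorem~\ref{theorem:small-witness-inside-V_1-for-arbitrary-T} is satisfied) and picking a $T$ that hits every remaining part. Once that is done, the cost decomposition and the recurrence are essentially routine.
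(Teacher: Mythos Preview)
Your proposal is correct and follows essentially the same approach as the paper: induction on $k$, invoking Theorem~\ref{theorem:small-witness-inside-V_1-for-arbitrary-T} on a maximal minimum $k$-partition to place $V_1$ in $\mathcal{C}$, then using the cost decomposition $\cost(V_1,\ldots,V_k)=\deltacard(V_1)+\cost_{G[\complement{V_1}]}(V_2,\ldots,V_k)$ together with the inductive hypothesis, and finally solving the recurrence $|\mathcal{C}|=O(n^{3k-3})$ to obtain $n^{3k(k-1)/2}T(n,p)=n^{O(k^2)}T(n,p)$. If anything, you are a bit more explicit than the paper in two places: you spell out the case split $|V_1|\le 2k-3$ versus $|V_1|\ge 2k-2$ (the paper's one-line appeal to the structural theorem tacitly relies on this), and you explicitly construct $T$ as one vertex per part $V_2,\ldots,V_k$.
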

\begin{proof}
  We first show the correctness of the algorithm. All candidates
  considered by the algorithm correspond to a $k$-partition, so we
  only have to show that the algorithm returns a $k$-partition
  corresponding to a minimum $k$-cut. We show this by induction on
  $k$. The base case of $k=1$ is trivial. We show the induction
  step. Assume that $k\ge 2$. Let $(V_1, \ldots, V_k)$ be a maximal
  minimum $k$-partition with cost $\opt$. By Theorem
  \ref{theorem:small-witness-inside-V_1-for-arbitrary-T}, the
  $2$-partition $(V_1, \complement{V_1})$ is in $\mathcal{C}$. By
  induction hypothesis, the algorithm will return a minimum
  $(k-1)$-partition $(Q_1, \ldots, Q_{k-1})$ of
  $G[\complement{V_1}]$. Hence,
\[
  \cost_{G[\complement{V_1}]}(Q_1, \ldots, Q_{k-1})\le \cost_{G[\complement{V_1}]}(V_2, \ldots, V_{k}).
\]
Therefore, the cost of the $k$-partition $(V_1, Q_1, \ldots, Q_{k-1})$ is
\begin{align*}
  \deltacard(V_1) + \cost_{G[\complement{V_1}]}(Q_1, \ldots, Q_{k-1})
  &\le \deltacard(V_1) + \cost_{G[\complement{V_1}]}(V_2, \ldots, V_{k})
    = \opt.
\end{align*}
Moreover, the $k$-partition $(V_1, Q_1, \ldots, Q_{k-1})$ is in $\mathcal{R}$. Hence, the algorithm returns a $k$-partition with cost at most $\opt$.

Next, we bound the run-time of the algorithm. Let $N(k,n)$ denote the
number of source maximal minimum $(s,t)$-terminal cut computations
executed by the algorithm CUT$(G,k)$ on a $n$-vertex hypergraph
$G$. We note that $|\mathcal{R}|
=|\mathcal{C}|=O(n^{3k-3})$. Therefore,
\begin{align*}
  N(k,n) &\le O(n^{3k-3})(1+N(k-1, n)) \text{ and}\\
  N(1, n) &= O(1).
\end{align*}
Hence, $N(k,n)=O(n^{3k(k-1)/2})$. The total run-time is dominated by
the time to implement these minimum $(s,t)$-terminal cuts and hence it is
$O(n^{3k(k-1)/2})T(n,p)$.

\end{proof}


\section{Uncrossing properties of the hypergraph cut
  function}
\label{section:uncrossing-for-hypergraph-cut-function}

In this section, we show the following uncrossing theorem which will be
useful to prove the main structural theorem. See Figure
\ref{figure:uncrossing} for an illustration of the sets that appear in
the statement of Theorem \ref{theorem:hypergraph-uncrossing}.
The motivation for the statement of this uncrossing theorem will be clearer
in the proof of
Theorem~\ref{theorem:small-witness-inside-V_1-for-arbitrary-T}.
The reader may want to skip the rather
long and technical proof of the uncrossing theorem in the first
reading and come back to it after seeing its use in the proof of Theorem \ref{theorem:small-witness-inside-V_1-for-arbitrary-T}. 
\begin{restatable}{theorem}{thmHypergraphUncrossing}
\label{theorem:hypergraph-uncrossing}
Let $G=(V,E)$ be a hypergraph, $k\ge 2$ be an integer and
$\emptyset\neq R\subsetneq U\subsetneq V$. Let
$S=\{u_1,\ldots, u_p\}\subseteq U\setminus R$ for $p\ge 2k-2$. Let
$(\complement{A_i}, A_i)$ be a minimum
$((S\cup R)\setminus \set{u_i}, \complement{U})$-terminal cut. Suppose
that $u_i\in A_i\setminus (\cup_{j\in [p]\setminus \set{i}}A_j)$ for
every $i\in [p]$.  Then, there exists a $k$-partition
$(P_1, \ldots, P_k)$ of $V$ with $\complement{U}\subsetneq P_k$ such
that
\[
\cost(P_1, \ldots, P_k) \le \frac{1}{2}\min\{\deltacard(A_i) + \deltacard(A_j): i, j\in [p], i\neq j\}.
\]
\end{restatable}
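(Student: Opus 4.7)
My plan is to construct the $k$-partition explicitly from the sets $A_1, \ldots, A_p$ and then bound its cost by a charging argument that exploits both the submodularity of the hypergraph cut function and the minimality of each $A_i$ as a terminal cut. A natural candidate is the \emph{Voronoi-like} partition based on the private regions
\[
  B_i := A_i \cap \bigcap_{j \in [p]\setminus\{i\}} \complement{A_j},\qquad i \in [p],
\]
which are pairwise disjoint with $u_i \in B_i$ and $B_i \subseteq U$ (since $\complement{U} \subseteq A_j$ for every $j$). After relabelling so that $\deltacard(A_1)+\deltacard(A_2) = \min_{i \neq j}(\deltacard(A_i)+\deltacard(A_j))$, I would take $P_\ell := B_\ell$ for $\ell \in [k-1]$ and $P_k := V \setminus \bigcup_{\ell<k} B_\ell$. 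Then $\complement{U} \subseteq P_k$, and strictness follows from $p \geq 2k-2$: the leftover vertices $u_k,\ldots,u_p$ lie in $U$ but in none of $B_1,\ldots,B_{k-1}$, so they land in $P_k$.

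To bound $\cost(P_1,\ldots,P_k)$, I would associate to every vertex $v$ its index set $I(v) := \{i \in [p] : v \in A_i\}$. A vertex $v$ lies in $B_\ell$ iff $I(v) \cap [k-1] = \{\ell\}$ and in $P_k$ iff $|I(v) \cap [k-1]| \neq 1$. Consequently a hyperedge $e$ crossing the partition contains two vertices whose $[k-1]$-patterns disagree, forcing $e \in \delta(A_i)$ for each index $i$ on which they disagree. I would then exploit the minimality of each $A_i$: because $A_i \cap A_j$ remains a valid $((S\cup R)\setminus\{u_i\}, \complement{U})$-terminal cut and also a valid $((S\cup R)\setminus\{u_j\},\complement{U})$-terminal cut, submodularity forces
\[
   \deltacard(A_i \cap A_j) \;\geq\; \max\{\deltacard(A_i), \deltacard(A_j)\},\qquad
   \deltacard(A_i \cup A_j) \;\leq\; \min\{\deltacard(A_i), \deltacard(A_j)\}.
\]
Combining the counting above with these inequalities, together with the hypergraph-specific strengthening
\[
   \deltacard(A_i) + \deltacard(A_j) \;=\; \deltacard(A_i \cap A_j) + \deltacard(A_i \cup A_j) + 2\beta(A_i, A_j),
\]
where $\beta(A_i,A_j)$ counts hyperedges lying entirely in $A_i \triangle A_j$ with vertices on both sides, I would aggregate into the desired bound $\cost(P_1,\ldots,P_k) \leq \tfrac{1}{2}(\deltacard(A_1)+\deltacard(A_2))$.

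The main technical obstacle is this multiplicity step: a hyperedge crossing the partition may appear in very few $\delta(A_i)$'s, and the naive averaging over all $p$ indices is loose by a factor of order $k$. The hypothesis $p \geq 2k-2$ (twice the number of nontrivial parts) is precisely what provides the necessary slack---intuitively, each of the $k-1$ constructed parts must be paired off against a ``spare'' $A_j$ of comparable weight, forcing $p$ to be at least $2(k-1)$. The extra $2\beta(A_i,A_j)$ term in the hypergraph identity above is the other crucial ingredient: it is precisely this slack, which vanishes for general symmetric submodular functions, that buys the factor $\tfrac{1}{2}$ in the conclusion, and it is consistent with the remark in the technical overview that the main structural theorem does not hold in the general submodular setting. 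Should the plain Voronoi partition turn out to be too coarse in a corner case, I would refine the construction to a sequential variant $P_\ell := A_{\pi(\ell)} \setminus \bigcup_{\ell' < \ell} A_{\pi(\ell')}$ for a carefully chosen permutation $\pi$, but the essential bookkeeping is the same.
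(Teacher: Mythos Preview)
Your construction is the right one: the private regions $B_i$ are exactly the sets $Y_i = A_i \setminus W$ that the paper builds, and taking $k-1$ of them as singleton parts with everything else forming $P_k$ is also what the paper does. The gap is in the charging step.

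First, the identity you state,
\[
  \deltacard(A_i) + \deltacard(A_j) \;=\; \deltacard(A_i \cap A_j) + \deltacard(A_i \cup A_j) + 2\beta(A_i,A_j),
\]
is a graph identity, not a hypergraph identity. For a hyperedge that meets $A_i\cap A_j$, $A_i\setminus A_j$, and $A_j\setminus A_i$ but misses $\complement{A_i\cup A_j}$, the left side minus the first two right-hand terms equals $1$, not $0$ or $2$, so the slack is not captured by $\beta$. This is exactly the place where hypergraphs diverge from graphs, and it is why the paper cannot simply invoke submodularity-plus-slack but instead introduces the auxiliary function $\sigma(Y_1,\ldots,Y_p,W,Z)$ that tracks, hyperedge by hyperedge, whether it should be charged once or twice. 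The paper then proves $\sigma \le \deltacard(A_1)+\deltacard(A_2)$ by induction on $p$ (Lemma~3.2), with the inductive step (Lemma~3.1) carried out by an exhaustive case analysis over hyperedge intersection patterns; the minimality of $A_{q+1}$ enters precisely through $\deltacard(W_0\cap A_{q+1})\ge \deltacard(A_{q+1})$, which is the analogue of your observation $\deltacard(A_i\cap A_j)\ge \max\{\deltacard(A_i),\deltacard(A_j)\}$ but deployed one set at a time.

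Second, even granting a correct multiplicity bound, your choice of \emph{which} $k-1$ private regions to keep is not justified: you relabel so that $\deltacard(A_1)+\deltacard(A_2)$ is minimal and then take $B_1,\ldots,B_{k-1}$, but the cost of the resulting $k$-partition depends on $B_3,\ldots,B_{k-1}$ as well, and nothing in your outline explains why that cost is controlled by $\deltacard(A_1)+\deltacard(A_2)$ alone. The paper handles this separately (Lemma~3.3): it first reduces, via a minimal-counterexample argument, to the case where every crossing hyperedge is a genuine edge with one end in some $Y_i$ and the other in $W\cup Z$, and only then picks the $k-1$ parts with smallest $\deltacard(Y_i)$ and uses $p\ge 2(k-1)$ to average. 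Your intuition that $p\ge 2k-2$ provides the slack is correct, but the mechanism through which it is cashed in is this averaging over the $Y_i$'s, not a pairing of the $A_i$'s.
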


\begin{figure}[htb]
\centering
\includegraphics[width=0.6\textwidth]{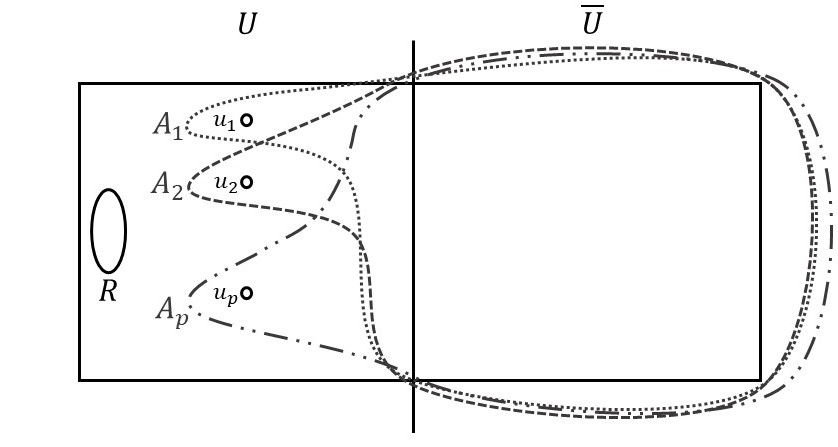}
\caption{Illustration of the sets that appear in Theorem \ref{theorem:hypergraph-uncrossing} and Lemma \ref{lemma:uncrossing}.}
\label{figure:uncrossing}
\end{figure}

The rest of the section is devoted to the proof of Theorem
\ref{theorem:hypergraph-uncrossing}.
We begin with some background on the hypergraph cut function.  Let
$G=(V,E)$ be a hypergraph. For a subset $A$ of vertices, we recall
that $\deltacard(A)$ denotes the number of hyperedges that intersect
both $A$ and $\complement{A}$. The function
$\deltacard: 2^V\rightarrow \R_+$ is known as the hypergraph cut
function. The hypergraph cut function is symmetric, i.e.,
\[
\deltacard(A) = \deltacard(\complement{A}) \text{ for all $A\subseteq V$},
\]
and submodular, i.e.,
\[
\deltacard(A) + \deltacard(B) \ge \deltacard(A \cap B) + \deltacard(A \cup B) \text{ for all subsets $A, B\subseteq V$}.
\]

For our purposes, it will help to count the hyperedges more accurately
than employ the submodularity inequality. We define some notation that
will help in more accurate counting.  Let $(Y_1, \ldots, Y_p, W, Z)$
be a partition of $V$. We recall that $\cost(Y_1, \ldots, Y_p, W, Z)$
denotes the number of hyperedges that cross the partition. We note
that when considering these hyperedges it is convenient to visualize
each part of the partition as a single vertex obtained by
contracting the part. 
We define
the following quantities:
\begin{enumerate}
\item Let
  $\cost(W, Z) = |\{ e \mid e \subseteq W \cup Z, e \cap W \neq \emptyset, e
  \cap Z \neq \emptyset\}|$ be the number of hyperedges contained in
  $W\cup Z$ that intersect both $W$ and $Z$.
    \item Let $\alpha(Y_1, \ldots, Y_p, W, Z)$ be the number of hyperedges that intersect $Z$ and at least two of the sets in $\set{Y_1, \ldots, Y_p, W}$.
    \item Let $\beta(Y_1, \ldots, Y_p, Z)$ be the number of hyperedges that are disjoint from $Z$ but intersect at least two of the sets in $\set{Y_1, \ldots, Y_p}$.
\end{enumerate}
For a partition $(Y_1, \ldots, Y_p, W, Z)$, we will be interested in
the sum of $\cost(Y_1, \ldots, Y_p, W, Z)$ with the three quantities
defined above which we denote as $\sigma(Y_1, \ldots, Y_p, W, Z)$,
i.e.,
\[
\sigma(Y_1, \ldots, Y_p, W, Z) := \cost(Y_1, \ldots, Y_p, W, Z) + \cost(W, Z) + \alpha(Y_1, \ldots, Y_p, W, Z) + \beta(Y_1, \ldots, Y_p, Z).
\]
We note that $\sigma(Y_1, \ldots, Y_p, W, Z)$ counts every hyperedge
that crosses the partition twice except for those hyperedges that
intersect exactly one of the sets in $\set{Y_1, \ldots, Y_p}$ and
exactly one of the sets in $\set{W, Z}$ which are counted exactly once
(see Figure \ref{figure:sigma-count}).

\begin{figure}[htb]
\centering
\includegraphics[width=0.5\textwidth]{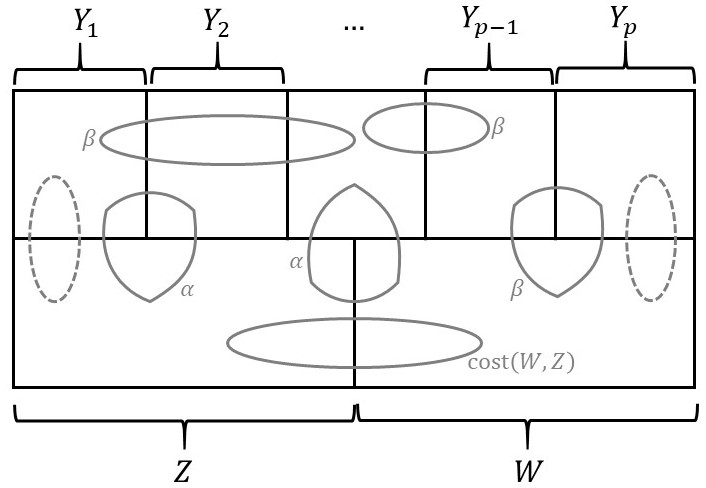}
\caption{Hyperedges counted by $\sigma(Y_1, \ldots, Y_p, W, Z)$: The dashed hyperedges are counted only by $\cost(Y_1, \ldots, Y_p, W, Z)$. The rest of the hyperedges are counted twice in $\sigma(Y_1, \ldots, Y_p, W, Z)$: once by the term $\cost(Y_1, \ldots, Y_p, W, Z)$ and once more by the indicated term.}
\label{figure:sigma-count}
\end{figure}

The motivation behind considering the function
$\sigma(Y_1, \ldots, Y_p, W, Z)$ comes from Proposition
\ref{prop:base-case}. We emphasize that the interpretation for
$\sigma(Y_1, \ldots, Y_p, W, Z)$ given in the proposition holds only
for $p=2$.

\begin{proposition}\label{prop:base-case}
  Let $(Y_1, Y_2, W, Z)$ be a partition of $V$ and let
  $A_1 := Y_1 \cup W$ and $A_2 := Y_2 \cup W$. Then,
  \[
    \deltacard(A_1) + \deltacard(A_2) = \sigma(Y_1, Y_2, W, Z).
  \]
\end{proposition}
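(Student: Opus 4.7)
The plan is a direct case analysis based on the intersection pattern of each hyperedge with the partition. For $e \in E$, let $I(e) \subseteq \{Y_1, Y_2, W, Z\}$ denote the collection of parts that $e$ meets nontrivially. Both $\deltacard(A_1) + \deltacard(A_2)$ and $\sigma(Y_1, Y_2, W, Z)$ decompose as sums over $e \in E$ of contributions depending only on $I(e)$, so it suffices to verify the identity hyperedge-by-hyperedge, and only finitely many cases need to be checked.

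On the LHS, the contribution of $e$ is the number of indices $i \in \{1,2\}$ for which $e$ meets both $A_i = Y_i \cup W$ and $\complement{A_i} = Y_{3-i} \cup Z$. On the RHS, I would unpack the four summands: $\cost(Y_1,Y_2,W,Z)$ contributes $1$ as soon as $|I(e)| \ge 2$; $\cost(W,Z)$ contributes only when $I(e) = \{W, Z\}$; $\alpha$ contributes when $e$ meets $Z$ together with at least two of $\{Y_1, Y_2, W\}$; and $\beta$ contributes when $e$ avoids $Z$ and meets both $Y_1$ and $Y_2$.

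The verification then splits naturally into three regimes. (i) If $|I(e)| \le 1$, both sides vanish. (ii) If $I(e)$ is one of the four sets $\{Y_i, W\}$ or $\{Y_i, Z\}$, then $e$ lies entirely inside exactly one of $A_i$, $\complement{A_i}$, so only one of the two cut terms counts it, and on the RHS only the $\cost$ term fires; both sides contribute $1$. (iii) In every remaining case both LHS terms count $e$, giving $2$, and one checks that exactly one of $\cost(W,Z), \alpha, \beta$ fires in addition to $\cost$: namely, $\cost(W,Z)$ handles $I(e) = \{W, Z\}$; $\beta$ handles the case when $e$ avoids $Z$ and meets both $Y_1$ and $Y_2$; and $\alpha$ handles the case when $e$ meets $Z$ along with at least two of $\{Y_1, Y_2, W\}$. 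The only real bookkeeping step is to check that these three conditions are pairwise exclusive (the first forces $I(e) \subseteq \{W,Z\}$, and the latter two are separated by whether $e$ meets $Z$) and that they jointly cover all of regime (iii); once that is done, summing the per-hyperedge identity over $e \in E$ yields the proposition.
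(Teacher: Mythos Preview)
Your proposal is correct and follows essentially the same approach as the paper: a hyperedge-by-hyperedge counting argument based on which parts of $(Y_1,Y_2,W,Z)$ a hyperedge meets. The paper organizes the cases first by whether $e$ meets $Z$ and then by how many of $\{Y_1,Y_2,W\}$ it meets, whereas you group the intersection patterns into three regimes according to the common contribution value $(0,1,2)$; the substance is identical.
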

\begin{proof}
  We show the equality by a counting argument. We prove that each
  hyperedge is counted the same number of times in LHS and RHS. We
  note that both LHS and RHS count only hyperedges that cross the
  partition $(Y_1, Y_2, W, Z)$. Let $e$ be a hyperedge that crosses
  the partition $(Y_1, Y_2, W,
  Z)$. Figure~\ref{figure:sigma-count-base-case} can be used to verify
  the equality. Formally we have the following cases:
\begin{enumerate}
\item Suppose $e$ intersects $Z$ and exactly one of the sets in $\set{Y_1, Y_2, W}$.
  \begin{enumerate}
  \item Suppose $e$ intersects $W$.
    Then, $e$ is counted twice in the LHS: by both $\deltacard(A_1)$ and $\deltacard(A_2)$. Moreover, $e$ is counted twice in the RHS: by $\cost(Y_1, Y_2, W, Z)$ and by $\cost(W, Z)$.
  \item Suppose $e$ intersects exactly one of the sets in $\set{Y_1, Y_2}$.
    Then, $e$ is counted once in the LHS: by exactly one of $\deltacard(A_1)$ and $\deltacard(A_2)$. Moreover, $e$ is counted exactly once in the RHS: by $\cost(Y_1, Y_2, W, Z)$.
  \end{enumerate}
\item Suppose $e$ intersects $Z$ and at least two of the sets in $\set{Y_1, Y_2, W}$.
  Then, $e$ is counted twice in the LHS: by both $\deltacard(A_1)$ and $\deltacard(A_2)$. Moreover, $e$ is counted twice in the RHS: by $\cost(Y_1, Y_2, W, Z)$ and by $\alpha(Y_1, Y_2, W, Z)$.
\item Suppose $e$ is disjoint from $Z$ and intersects both $Y_1$ and $Y_2$.
  Then, $e$ is counted twice in the LHS: by both $\deltacard(A_1)$ and $\deltacard(A_2)$. Moreover, $e$ is counted twice in the RHS: by $\cost(Y_1, Y_2, W, Z)$ and by $\beta(Y_1, Y_2, Z)$.
\item Suppose $e$ is disjoint from $Z$ and intersects exactly one of the sets in $\set{Y_1, Y_2}$. Since $e$ is crossing the partition $(Y_1, Y_2, W, Z)$, it has to  intersect $W$. Then, $e$ is counted once in the LHS: by exactly one of $\deltacard(A_1)$ and $\deltacard(A_2)$. Moreover, $e$ is counted exactly once in the RHS: by $\cost(Y_1, Y_2, W, Z)$.
\end{enumerate}
\begin{figure}[htb]
\centering
\includegraphics[width=0.5\textwidth]{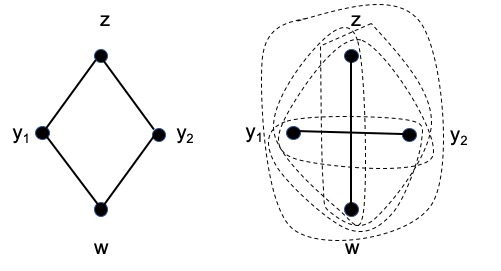}
\caption{Pictorial representation of hyperedges counted by
  $\sigma(Y_1, Y_2, W, Z)$. Contract each part to a single vertex.
  Figure on left shows hyperedges that are counted once and
  on the right all the rest that are counted twice; edges are shown as lines
  and hyperedges of size $\ge 3$ are shown in dashed lines. One can
  verify that only hyperedges that are counted once in $d(A_1) +
  d(A_2)$ correspond to precisely those in the left figure.}
\label{figure:sigma-count-base-case}
\end{figure}
\end{proof}

The next lemma will help in obtaining a $(p+3)$-partition from a
$(p+2)$-partition while controlling the increase in
$\sigma$-value. This will be used in a subsequent inductive argument.
See Figure \ref{figure:sigma-partition-set-uncross}
for an illustration of the sets appearing in the statement of the
lemma. 
Our proof of Lemma \ref{lemma:sigma-partition-set-uncross} is through case analysis. Currently we do not know how to prove this lemma without a somewhat
laborious case analysis. We remark that this is partly due to the fact
that hyperedges can have different cardinalities as well as due to the
fact that we cannot rely only on submodularity of the hypergraph cut
function. 

\begin{lemma}\label{lemma:sigma-partition-set-uncross}
Let $G=(V,E)$ be a hypergraph and let $(X_1, \ldots, X_p, W_0, Z_0)$ be a partition for some integer $p\ge 1$. Let $Q\subset V$ be a set such that
\[
Y_i:=X_i-Q\neq \emptyset\ \forall\ i\in [p],\ Y_{p+1}:=Q\cap Z_0 \neq \emptyset,\ Z:=Z_0-Q\neq \emptyset,\ \text{and}\ W:=W_0\cup(Q\setminus Z_0)\neq \emptyset.
\]
Then, $(Y_1, \ldots, Y_{p}, Y_{p+1}, W, Z)$ is a partition of $V$ such that
\[
\sigma(Y_1, \ldots, Y_{p}, Y_{p+1}, W, Z) \le \sigma(X_1, \ldots, X_{p}, W_0, Z_0) + \deltacard(Q) - \deltacard(W_0\cap Q).
\]
\end{lemma}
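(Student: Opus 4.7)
The plan is to prove the lemma by a hyperedge-by-hyperedge counting argument, mirroring the style of the proof of Proposition \ref{prop:base-case}. I would first verify that $(Y_1, \ldots, Y_{p+1}, W, Z)$ is indeed a partition of $V$: this is immediate from the definitions, since the $Y_i$'s partition $(X_1 \cup \cdots \cup X_p) \setminus Q$, the set $Y_{p+1}=Q\cap Z_0$ and $Z=Z_0\setminus Q$ together partition $Z_0$, and $W = W_0\cup(Q\setminus Z_0)$ absorbs $W_0$ together with the portions of $Q$ originally in $X_1,\ldots,X_p$ (and any portion in $W_0$, which was already there). Disjointness of the new parts follows similarly. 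With the partition in hand, the goal reduces to showing, for each hyperedge $e$,
\[
\Delta(e) := \sigma_e^{\text{new}} - \sigma_e^{\text{old}} - \deltacard_e(Q) + \deltacard_e(W_0 \cap Q) \le 0,
\]
where subscript-$e$ denotes the contribution of $e$ to each quantity.

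Next, I would classify hyperedges along two axes: (i) which of the old parts $X_1,\ldots,X_p,W_0,Z_0$ the hyperedge $e$ meets, and (ii) the behavior of $e$ with respect to $Q$ and to $W_0\cap Q$. The cleanest case is when $e\cap Q=\emptyset$: then $e$ sees $(Y_i,W,Z)$ exactly as it saw $(X_i,W_0,Z_0)$, so $\sigma_e^{\text{new}}=\sigma_e^{\text{old}}$, and both $\deltacard_e(Q)$ and $\deltacard_e(W_0\cap Q)$ are $0$, giving $\Delta(e)=0$. A symmetric case is $e\subseteq Q\cap W_0$, where $e$ contributes to neither cut function and leaves $\sigma$ unchanged (since $e\subseteq W$ in the new partition and $e\subseteq W_0$ in the old). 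The genuine work lies in hyperedges that cross $Q$: for such $e$ the term $\deltacard_e(Q)=1$ supplies a budget of $+1$ on the RHS, which must cover any extra counts of $e$ arising in $\sigma^{\text{new}}$ compared to $\sigma^{\text{old}}$. The $-\deltacard_e(W_0\cap Q)$ term on the RHS is a debit of $1$ whenever $e$ crosses $W_0\cap Q$; for those hyperedges I must exhibit a matching saving inside $\sigma^{\text{new}}$ itself.

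The bulk of the proof is then a case analysis over the subsets $e$ meets among the six refined regions
\[
X_1\cup\cdots\cup X_p\setminus Q,\quad Q\cap(X_1\cup\cdots\cup X_p),\quad W_0\setminus Q,\quad W_0\cap Q,\quad Q\cap Z_0,\quad Z_0\setminus Q.
\]
For each combination I would tabulate the contribution of $e$ to each of the four terms of $\sigma^{\text{old}}$ (namely $\cost$, $\cost(W_0,Z_0)$, $\alpha$, $\beta$), to the four corresponding terms of $\sigma^{\text{new}}$, and to $\deltacard(Q)$ and $\deltacard(W_0\cap Q)$, and verify $\Delta(e)\le 0$. Natural subcases to isolate are: $e$ avoids $Z_0$ entirely; $e$ meets $Z_0\setminus Q$ but not $Q\cap Z_0$; $e$ meets $Q\cap Z_0$ but not $Z_0\setminus Q$; and $e$ meets both. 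Within each, one splits further by whether $e$ touches $W_0\cap Q$ and how many of the $X_i$-regions $e$ hits (which governs $\beta$).

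The main obstacle, as the authors foreshadow, is the mechanical nature of the case analysis: there is no shortcut via submodularity alone because the four $\sigma$-terms count certain hyperedges only once and other hyperedges twice, so bare submodularity of $\deltacard$ loses the crucial factor-of-one-half structure. The substantive cases are those where $e$ enters $Y_{p+1}=Q\cap Z_0$, because $e$'s role flips from "hits $Z$-part" (in $\sigma^{\text{old}}$, via $Z_0$) to "hits a $Y$-part" (in $\sigma^{\text{new}}$), potentially moving counts between the $\alpha$ and $\beta$ terms. A clean bookkeeping device for these cases is to write $\sigma_e^{\text{new}}-\sigma_e^{\text{old}}$ as a sum of indicator differences and pair each positive indicator with either the $+\deltacard_e(Q)$ budget or with a corresponding decrease caused by $e$'s intersection pattern with $W_0\cap Q$. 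I expect about a dozen subcases, each resolved by a short arithmetic check, and the $-\deltacard(W_0\cap Q)$ correction to be exactly what makes the pairing balance in the cases where $e$ separates $W_0\cap Q$ from the rest.
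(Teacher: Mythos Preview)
Your proposal is correct and takes essentially the same approach as the paper: a hyperedge-by-hyperedge counting argument establishing $\Delta(e)\le 0$ via case analysis on which refined regions $e$ meets. The paper organizes the same argument slightly differently, splitting into two claims (showing $r_e=1\Rightarrow \ell_e\ge 1$ and $\ell_e\le -1\Rightarrow r_e\le\ell_e$, in your notation $\ell_e=-\Delta(e)+r_e$), but the substance and the case distinctions are the same.
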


\begin{proof}
By definition $(Y_1, \ldots, Y_p, Y_{p+1}, W, Z)$ is a partition of $V$.

\begin{figure}[htb]
\centering
\includegraphics[width=0.6\textwidth]{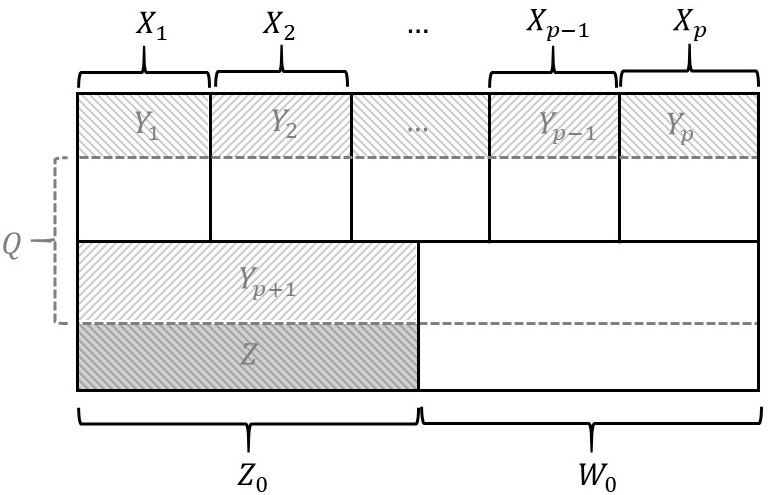}
\caption{Sets appearing in Lemma \ref{lemma:sigma-partition-set-uncross}. The unshaded portion corresponds to $W$.}
\label{figure:sigma-partition-set-uncross}
\end{figure}

We rewrite the required inequality in the following form as it becomes convenient to prove:
\begin{equation}
\sigma(X_1, \ldots, X_{p}, W_0, Z_0) - \sigma(Y_1, \ldots, Y_{p}, Y_{p+1}, W, Z)
\ge
\deltacard(W_0\cap Q) - \deltacard(Q). \label{ineq:to-show}
\end{equation}
For a hyperedge $e\in E$, let $\countzero_{e}\in \set{0,1,2}$ and $\countone_e\in \set{0,1,2}$ be the number of times that $e$ is counted by  $\sigma(X_1, \ldots, X_{p}, W_0, Z_0)$ and $\sigma(Y_1, \ldots, Y_{p}, Y_{p+1}, W, Z)$ respectively, and let $\countQ_e\in \set{0,1}$ and $\countintersect_e\in \set{0,1}$ be the number of times that $e$ is counted by $\deltacard(Q)$ and $\deltacard(W_0\cap Q)$ respectively.

Let $\countlhs_e:=\countzero_e - \countone_e$ and $\countrhs_e:=\countintersect_e - \countQ_e$. Thus, $\countlhs_e $ and $\countrhs_e$ denote the number of times the hyperedge $e$ is counted in the LHS and RHS of inequality (\ref{ineq:to-show}) respectively and moreover $\countlhs_e \in \set{0, \pm 1, \pm 2}$ and $\countrhs_e\in \set{0, \pm 1}$ for every hyperedge $e\in E$. Let
\begin{align*}
    \positives(\countlhs)&:=\sum_{e\in E: \countlhs_e\ge 1} \countlhs_e,\\
    \negatives(\countlhs)&:=\sum_{e\in E: \countlhs_e\le -1} \countlhs_e,\\
    \positives(\countrhs)&:=\sum_{e\in E: \countrhs_e= 1} \countrhs_e, \text{ and}\\
    \negatives(\countrhs)&:=\sum_{e\in E: \countrhs_e= -1} \countrhs_e.
\end{align*}

Claims \ref{claim:positives} and \ref{claim:negatives} complete the proof of the lemma.
\end{proof}

\begin{claim}\label{claim:positives}
\[
\positives(\countlhs)\ge \positives(\countrhs).
\]
\end{claim}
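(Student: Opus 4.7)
The plan is to prove the stronger pointwise inequality: every hyperedge $e$ with $\countrhs_e = 1$ already satisfies $\countlhs_e \ge 1$. Since $\positives(\countlhs) \ge |\{e : \countlhs_e \ge 1\}|$, this pointwise bound yields $\positives(\countlhs) \ge \positives(\countrhs)$ immediately.

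First I would characterize the hyperedges $e$ with $\countrhs_e = 1$: these are exactly the $e$ satisfying $\countintersect_e = 1$ and $\countQ_e = 0$. The second condition says $e$ does not cross $Q$, so $e \subseteq Q$ or $e \subseteq \complement{Q}$; the first says $e$ crosses $W_0 \cap Q \subseteq Q$, which forces $e \subseteq Q$. Moreover $e$ must hit both $W_0 \cap Q$ and $Q \setminus W_0 = Q \cap (X_1 \cup \cdots \cup X_p \cup Z_0)$.

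Next, for such an $e$ I would compute $\countzero_e$ and $\countone_e$ directly. The key observation is that, in the refined partition $(Y_1, \ldots, Y_{p+1}, W, Z)$, the condition $e \subseteq Q$ forbids $e$ from meeting any $Y_i = X_i \setminus Q$ or $Z = Z_0 \setminus Q$, and all vertices of $e$ that used to lie in $Q \cap X_i$ or in $W_0$ are now absorbed into the single part $W$. Thus in the new partition $e$ touches only $W$ and possibly $Y_{p+1} = Q \cap Z_0$. I would split on whether $e$ meets $Y_{p+1}$. In the first case, one checks that $\countone_e = 1$ (coming purely from $\cost$, while $\cost(W,Z)$, $\alpha$, and $\beta$ all vanish because $e$ avoids $Z$ and meets only one of the $Y_i$'s), and on the old side $\cost(X_1,\ldots,W_0,Z_0)$ contributes $1$ together with exactly one additional unit from either $\cost(W_0,Z_0)$ (if $e \subseteq W_0 \cup Z_0$) or from $\alpha$ (if $e$ also meets some $X_i$), giving $\countzero_e = 2$ and so $\countlhs_e \ge 1$. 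In the opposite case, $e \subseteq W$ in the new partition so $\countone_e = 0$, whereas $\cost$ alone in the old partition already contributes $1$ (since $e$ meets both $W_0$ and some $X_i$), yielding $\countlhs_e \ge 1$.

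The main obstacle is purely combinatorial bookkeeping: the four summands of $\sigma$ each impose different occupancy conditions on $e$ (whether $e$ meets $Z_0$, whether $e \subseteq W_0 \cup Z_0$, how many $X_i$'s it touches), so each subcase must be verified by hand to confirm that the ``extra'' unit of $\countzero_e$ beyond the base $\cost$ contribution is supplied by exactly one of $\cost(W_0,Z_0)$, $\alpha$, or $\beta$. As the authors warn, there is no clean shortcut through submodularity alone, because hyperedges of varying cardinalities can straddle the parts in irreducibly different ways; still, in each subcase the verification is a direct counting check.
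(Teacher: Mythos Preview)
Your proposal is correct and follows essentially the same approach as the paper: both argue pointwise that every hyperedge $e$ with $\countrhs_e=1$ must satisfy $\countlhs_e\ge 1$, first characterizing such $e$ as those contained in $Q$ and meeting both $W_0\cap Q$ and $Q\setminus W_0$, and then splitting on whether $e$ meets $Y_{p+1}=Z_0\cap Q$. The paper subdivides each of your two cases once more (by how many of the $X_i\cap Q$ are hit), obtaining the exact values of $\countzero_e$ and $\countone_e$ in every subcase, but your coarser bounds already suffice for the claim.
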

\begin{proof}
Let $e$ be a hyperedge such that $\countrhs_e=1$. Then, $e$ is counted by $\deltacard(W_0\cap Q)$ but not $\deltacard(Q)$. This means that $e\subseteq Q$, $e\cap (W_0\cap Q)\neq \emptyset$, and $e\cap (Q\setminus W_0)\neq \emptyset$. Thus, $e$ intersects $W_0\cap Q$ and at least one of the sets in $\set{X_1\cap Q, \ldots, X_p\cap Q, Z_0\cap Q}$. It suffices to show that $\countlhs_e\ge 1$. We consider different cases for $e$ below and show that $\countlhs_e\ge 1$ in all cases.

\begin{enumerate}
    \item Suppose $e$ intersects $Z_0\cap Q$.
    \begin{enumerate}
        \item Suppose $e$ is disjoint from $X_1\cap Q, \ldots, X_p \cap Q$.
        Then, $\countzero_e=2$ since $e$ is counted by both $\cost(X_1, \ldots, X_p, W_0, Z_0)$ and by $\cost(W_0, Z_0)$. However, $\countone_e=1$ since $e$ is counted only by $\cost(Y_1, \ldots, Y_{p+1}, W, Z)$. Hence, $\countlhs_e=\countzero_e-\countone_e\ge 1$.

        \item Suppose $e$ intersects at least one of the sets in $\set{X_1\cap Q, \ldots, X_p \cap Q}$.
        Then, $\countzero_e=2$ since $e$ is counted by both $\cost(X_1, \ldots, X_p, W_0, Z_0)$ and by $\alpha(X_1, \ldots, X_p, W_0, Z_0)$. However, $\countone_e=1$ since $e$ is counted only by $\cost(Y_1, \ldots, Y_{p+1}, W, Z)$. Hence, $\countlhs_e=\countzero_e-\countone_e\ge 1$.


    \end{enumerate}
    \item Suppose $e$ is disjoint from $Z_0\cap Q$. Then $e$ has to intersect at least one of the sets in $\set{X_1\cap Q, \ldots, X_p \cap Q}$.
    \begin{enumerate}
        \item Suppose $e$ intersects exactly one of the sets in $\set{X_1\cap Q, \ldots, X_p \cap Q}$.
        Then, $\countzero_e=1$ since $e$ is counted only by $\cost(X_1, \ldots, X_p, W_0, Z_0)$. However, $\countone_e=0$ since $e$ does not cross the partition $(Y_1, \ldots, Y_{p+1}, W, Z)$.       Hence, $\countlhs_e=\countzero_e-\countone_e\ge 1$.
        \item Suppose $e$ intersects at least two of the sets in $\set{X_1\cap Q, \ldots, X_p \cap Q}$.
        Then, $\countzero_e=2$ since $e$ is counted by both $\cost(X_1, \ldots, X_p, W_0, Z_0)$ and by $\beta(X_1, \ldots, X_p, Z_0)$. However, $\countone_e=0$ since $e$ does not cross the partition $(Y_1, \ldots, Y_{p+1}, W, Z)$. Hence, $\countlhs_e=\countzero_e-\countone_e=2\ge 1$.
    \end{enumerate}
\end{enumerate}
\end{proof}
\begin{claim}\label{claim:negatives}
\[
\negatives(\countlhs)\ge \negatives(\countrhs).
\]
\end{claim}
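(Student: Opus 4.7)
My plan is to establish the inequality by showing two facts: first, that $\{e : \countlhs_e \le -1\} \subseteq \{e : \countrhs_e = -1\}$; second, that every such $e$ actually has $\countlhs_e = -1$, never $-2$. Together these give
\[
\negatives(\countlhs) = -|\{e : \countlhs_e = -1\}| \ge -|\{e : \countrhs_e = -1\}| = \negatives(\countrhs),
\]
which is exactly the claim. The organization mirrors Claim \ref{claim:positives}, except that we classify edges by their $\countlhs_e$ value rather than their $\countrhs_e$ value, and rule out a $-2$ contribution explicitly.

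The first step is to argue $\countlhs_e \ge -1$ for every hyperedge $e$. Since $\countone_e \le 2$ by the definition of $\sigma$, the only dangerous pair is $(\countzero_e,\countone_e) = (0,2)$, which would require $e$ to lie in a single part of $(X_1,\ldots,X_p,W_0,Z_0)$ while being counted twice by $\sigma(Y_1,\ldots,Y_{p+1},W,Z)$. Going through each containment ($e \subseteq X_i$, $e \subseteq W_0$, or $e \subseteq Z_0$), one sees that $e$ touches at most two parts of the refined partition and that none of $\cost(W,Z)$, $\alpha$, $\beta$ fires on $e$, so $\countone_e \le 1$ in every such scenario.

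The second step is to check that $\countlhs_e = -1$ forces $\countrhs_e = -1$. There are two possibilities: $(\countzero_e,\countone_e) = (0,1)$ or $(1,2)$. In the $(0,1)$ case, the first-step analysis pins the viable scenarios down to $e \subseteq X_i$ with $e$ meeting both $Y_i$ and $X_i \cap Q$, or symmetrically $e \subseteq Z_0$ with $e$ meeting both $Y_{p+1}$ and $Z$; in each, $e$ hits both $Q$ and $V \setminus Q$ while being disjoint from $W_0$. In the $(1,2)$ case, $\countzero_e = 1$ forces $s=1$ and $t_Z=a=b=0$, which pins $e$ down to either $e \subseteq X_i \cup Z_0$ with $e \cap W_0 = \emptyset$, or $e \subseteq X_i \cup W_0$ with $e \cap Z_0 = \emptyset$. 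The second subcase is vacuous since $e \cap Z_0 = \emptyset$ kills all of $t'_Z, a', b'$, forcing $\countone_e \le 1$; in the first subcase I would enumerate the three ways $\countone_e$ can reach $2$ (via $t'_Z$, $a'$, or $b'$) and verify in each that the viable splits of $e$ across $\{Y_i,\ X_i \cap Q,\ Y_{p+1},\ Z\}$ put a point in $Q$ (via $X_i \cap Q$ or $Y_{p+1}$) and a point in $V\setminus Q$ (via $Y_i$ or $Z$), while $e \cap W_0 = \emptyset$ gives $\countintersect_e = 0$.

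The main obstacle is the enumeration in the $(1,2)$ subcase with $e \subseteq X_i \cup Z_0$: each of the three firing patterns for $\countone_e = 2$ admits several splits of $e$, and one must confirm $\countrhs_e = -1$ in each. This parallels the bookkeeping of Claim \ref{claim:positives}, and the recurring structural observation that every edge with $\countlhs_e = -1$ is disjoint from $W_0$ is what drives $\countintersect_e = 0$ in every surviving case.
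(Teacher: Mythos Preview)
Your approach is correct and amounts to the same per-edge inequality the paper proves: for every $e$ with $\countlhs_e \le -1$, you establish $\countrhs_e \le \countlhs_e$. The organization differs slightly---you first rule out $(\countzero_e,\countone_e)=(0,2)$ globally and then classify by containment of $e$ in a part of the \emph{coarse} partition $(X_1,\ldots,X_p,W_0,Z_0)$, whereas the paper classifies directly by how $e$ sits in the \emph{refined} partition $(Y_1,\ldots,Y_{p+1},W,Z)$ and handles the contradiction cases inline. Your unifying observation that every surviving $e$ is disjoint from $W_0$ (hence $\countintersect_e = 0$ automatically) is a nice consolidation that shortens the bookkeeping, but the substance of the two arguments is the same case analysis.
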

\begin{proof}
Let $e$ be a hyperedge such that $\countlhs_e\le -1$, i.e., $\countone_e\ge \countzero_e+1$. Then $\countone_e \ge 1$ and hence, $e$ crosses the partition $(Y_1, \ldots, Y_{p+1}, W, Z)$. It suffices to show that $\countrhs_e\le \ell_e$, i.e., $\countQ_e \ge \countintersect_e+\countone_e - \countzero_e$. We consider different cases for $e$ below and for each case, we show that either $\countQ_e \ge \countintersect_e+\countone_e - \countzero_e$ or the case is impossible.
\begin{enumerate}
    \item Suppose $e$ is disjoint from $Z$. Then, $e$ intersects at least one of the sets in $\set{Y_1, \ldots, Y_{p+1}}$ since $e$ crosses the partition $(Y_1, \ldots, Y_{p+1}, W, Z)$.
    \begin{enumerate}
        \item Suppose $e$ intersects exactly one of the sets in $\set{Y_1, \ldots, Y_{p+1}}$, say $Y_i$ for some $i\in [p+1]$.
        Then, $e$ intersects $W$ and consequently, $\countone_e=1$ since $e$ is counted only by $\cost(Y_1, \ldots, Y_{p+1}, W, Z)$.
        Since $1=\countone_e\ge \countzero_e+1$, it follows that $\countzero_e=0$. This implies that $e$ does not cross the partition $(X_1, \ldots, X_p, W_0, Z_0)$. Therefore, $i\in [p]$ and  $e\subseteq X_i$ with $e$ intersecting $X_i\cap Q$ and $Y_i=X_i\setminus Q$. Consequently, $\countQ_e=1$ and $\countintersect_e=0$. Hence $\countQ_e\ge \countintersect_e+\countone_e - \countzero_e$.

        \item Suppose $e$ intersects at least two of the sets in $\set{Y_1, \ldots, Y_{p+1}}$. Then, $\countone_e=2$ since $e$ is counted by both $\cost(Y_1, \ldots, Y_{p+1}, W, Z)$ as well as $\beta(Y_1, \ldots, Y_{p+1}, Z)$.

        \begin{enumerate}
            \item Suppose $e$ intersects at least two of the sets in $\set{Y_1, \ldots, Y_{p}}$.
            If $e$ intersects $Z_0$, then $\countzero_e=2$ since $e$ is counted by both $\cost(X_1, \ldots, X_p, W_0, Z_0)$ and $\alpha(X_1, \ldots, X_p, W_0, Z_0)$. 
            If $e$ is disjoint from $Z_0$, then again $\countzero_e=2$ since $e$ is counted by both $\cost(X_1, \ldots, X_p, W_0, Z_0)$ and $\beta(X_1, \ldots, X_p, W_0, Z_0)$. In both cases, we have $2=\countone_e\ge \countzero_e+1=3$, a contradiction.

            \item Suppose $e$ intersects $Y_{p+1}$ and exactly one of the sets in $\set{Y_1,\ldots, Y_p}$, say $Y_i$ for some $i\in [p]$. Then, $\countzero_e\ge 1$ since $e$ crosses the partition $(X_1, \ldots, X_p, W_0, Z_0)$. Since $2=\countone_e\ge \countzero_e+1$, it follows that $\countzero_e=1$. This implies that none of $\cost(W_0,Z_0)$, $\alpha(X_1, \ldots, X_p, W_0, Z_0)$, and $\beta(X_1, \ldots, X_p, Z_0)$ count $e$. Therefore, $e$ is disjoint from $W$ and $e$ intersects $Y_{p+1}=Z_0\cap Q$ and $Y_i=X_i\setminus Q$. Thus, $e$ is counted by $\deltacard(Q)$ but not $\deltacard(W_0\cap Q)$. Consequently, $\countQ_e=1$ and $\countintersect_e=0$. Hence, $\countQ_e\ge \countintersect_e+\countone_e - \countzero_e$.
        \end{enumerate}
    \end{enumerate}

    \item Suppose $e$ intersects $Z$. Then, $e$ intersects at least one of the sets in $\set{Y_1, \ldots, Y_{p+1}, W}$ since $e$ crosses the partition $(Y_1, \ldots, Y_{p+1}, W, Z)$.
    \begin{enumerate}
        \item Suppose $e$ intersects exactly one of the sets in $\set{Y_1, \ldots, Y_{p+1}, W}$. Then, $\countone_e=1$ since $e$ is counted only by $\cost(Y_1, \ldots, Y_{p+1}, W)$.

        \begin{enumerate}
            \item Suppose $e$ is disjoint from $W$. Then, $e$ intersects exactly one of the sets in $\set{Y_1, \ldots, Y_{p+1}}$. Since $1=\countone_e\ge \countzero_e+1$, we have that $\countzero_e=0$. This implies that $e$ does not cross the partition $(X_1, \ldots, X_p, W_0, Z_0)$. Hence, $e$ can only intersect $Y_{p+1}$. Thus, $e\subseteq Z_0=Z\cup Y_{p+1}$ with $e$ intersecting $Z=Z_0\setminus Q$ and $Y_{p+1}=Z_0\cap Q$. Thus, $e$ is counted by $\deltacard(Q)$ but not $\deltacard(W_0\cap Q)$. Consequently, $\countQ_e=1$ and $\countintersect_e=0$. Hence, $\countQ_e\ge \countintersect_e+\countone_e - \countzero_e$.

            \item Suppose $e$ intersects $W$.
            Then, $e$ has to cross the partition $(X_1, \ldots, X_p, W_0, Z_0)$ and therefore, $\countzero_e\ge 1$. Thus, $1=\countone_e\ge \countzero_e+1=2$, a contradiction.
        \end{enumerate}

        \item Suppose $e$ intersects at least two of the sets in $\set{Y_1, \ldots, Y_{p+1}, W}$. Then, $\countone_e=2$ since $e$ is counted by both $\cost(Y_1, \ldots, Y_{p+1}, W, Z)$ and $\alpha(Y_1, \ldots, Y_{p+1}, W, Z)$.

        \begin{enumerate}
            \item Suppose $e$ intersects at least two of the sets in $\set{Y_1, \ldots, Y_p}$.
            Then $\countzero_e=2$ since $e$ is counted by $\cost(X_1, \ldots, X_p, W_0, Z_0)$ as well as $\alpha(X_1, \ldots, X_p, W_0, Z_0)$. Thus, $2=\countone_e\ge \countzero_e+1 = 3$, a contradiction.

            \item Suppose $e$ intersects exactly one of the sets in $\set{Y_1, \ldots, Y_p}$, say $Y_i$ for some $i\in [p]$, and $e$ intersects $Y_{p+1}$ but is disjoint from $W$.
            Then, $\countzero_e\ge 1$ since $e$ crosses the partition $(X_1, \ldots, X_p, W_0, Z_0)$. Since $2=\countone_e\ge \countzero_e+1$, it follows that $\countzero_e=1$. This implies that none of $\cost(W_0, Z_0)$, $\alpha(X_1, \ldots, X_p, W_0, Z_0)$, and $\beta(X_1, \ldots, X_p, Z_0)$ count $e$ and hence, $e$ is contained in $Y_i\cup Z_0 \subseteq X_i\cup Z_0$ with $e$ intersecting $Y_{p+1}=Z_0\cap Q$ and $Y_i=X_i\setminus Q$. Thus, $e$ is counted by $\deltacard(Q)$ but not $\deltacard(W_0\cap Q)$. Consequently, $\countQ_e=1$ and $\countintersect_e=0$. Hence, $\countQ_e\ge \countintersect +\countone_e - \countzero_e$.

            \item Suppose $e$ intersects exactly one of the sets in $\set{Y_1, \ldots, Y_p}$, say $Y_i$ for some $i\in [p]$, and $e$ intersects $W$ but is disjoint from $Y_{p+1}$.
            Then, $\countzero_e\ge 1$ since $e$ crosses the partition $(X_1, \ldots, X_p, W_0, Z_0)$. Since $2=\countone_e\ge \countzero_e+1$, it follows that $\countzero_e=1$. This implies that none of $\cost(W_0, Z_0)$, $\alpha(X_1, \ldots, X_p, W_0, Z_0)$, and $\beta(X_1, \ldots, X_p, Z_0)$ count $e$. Therefore, $e$ is contained in $X_i\cup Z$ and $e$ intersects $X_i\cap Q$ since $e$ has to intersect $W$. Moreoever, $e$ intersects $Y_i=X_i\setminus Q$. Thus, $e$ is counted by $\deltacard(Q)$ but not $\deltacard(W_0\cap Q)$. Consequently, $\countQ_e=1$ and $\countintersect_e=0$. Hence, $\countQ_e\ge \countintersect_e+\countone_e - \countzero_e$.

            \item Suppose $e$ is disjoint from $Y_1, \ldots, Y_p$ and intersects both $Y_{p+1}$ and $W$.

            \begin{enumerate}
                \item Suppose $e$ intersects at least two of the sets in $\set{X_1\cap Q, \ldots, X_p\cap Q}$.
                Then, $\countzero_e=2$ since $e$ is counted by $\cost(X_1, \ldots, X_p, W_0, Z_0)$ as well as $\alpha(X_1, \ldots, X_p, W_0, Z_0)$. Thus, $2=\countone_e\ge \countzero_e+1=3$, a contradiction.

                \item Suppose $e$ does not intersect $X_1\cap Q, \ldots, X_p\cap Q$.
                Then, $e$ intersects $W_0$ since $e$ is counted by both $\cost(Y_1, \ldots, Y_{p+1}, W, Z)$ and $\alpha(Y_1, \ldots, Y_{p+1}, W, Z)$ (recall that we are in case (b)). Moreoever, $e\subseteq W_0\cup Z_0$. Therefore, $\countzero_e=2$ since $e$ is counted by $\cost(X_1, \ldots, X_p, W_0, Z_0)$ as well as $\cost(W_0, Z_0)$. Thus, $2=\countone_e\ge \countzero_e+1=3$, a contradiction.

                \item Suppose $e$ intersects exactly one of the sets in $\set{X_1\cap Q, \ldots, X_p\cap Q}$, say $X_i\cap Q$ for some $i\in [p]$, and $e$ intersects $W_0\cap Q$.
                Then, $\countzero_e=2$ since $e$ is counted by both $\cost(X_1, \ldots, X_p, W_0, Z_0)$ and $\alpha(X_1, \ldots, X_p, W_0, Z_0)$. Thus, $2=\countone_e\ge \countzero_e+1=3$, a contradiction.

                \item Suppose $e$ intersects exactly one of the sets in $\set{X_1\cap Q, \ldots, X_p\cap Q}$, say $X_i\cap Q$ for some $i\in [p]$, and $e$ is disjoint from $W_0\cap Q$. Then, $\countzero_e\ge 1$ since $e$ crosses the partition $(X_1, \ldots, X_p, W_0, Z_0)$. Since $2=\countone_e\ge \countzero_e+1$, it follows that $\countzero_e=1$. This implies that none of $\cost(W_0, Z_0)$, $\alpha(X_1, \ldots, X_p, W_0, Z_0)$, and $\beta(X_1, \ldots, X_p, Z_0)$ count $e$. Therefore, $e$ is contained in $(X_i\cap Q) \cup Z_0$ and $e$ intersects $Y_{p+1}=Z_0\cap Q$ and $Z=Z_0\setminus Q$. Thus, $e$ is counted by $\deltacard(Q)$ but not $\deltacard(W_0\cap Q)$. Consequently, $\countQ_e=1$ and $\countintersect_e=0$. Hence, $\countQ_e\ge \countintersect_e+\countone_e - \countzero_e$.

            \end{enumerate}
        \end{enumerate}
    \end{enumerate}
\end{enumerate}
\end{proof}

The next lemma will help in uncrossing a collection of sets to obtain
a partition with small $\sigma$-value. See Figure
\ref{figure:uncrossing} for an illustration of the sets that appear in
the statement of the lemma. 
\begin{lemma}\label{lemma:uncrossing}
  Let $G=(V,E)$ be a hypergraph and
  $\emptyset\neq R\subsetneq U\subsetneq V$. Let
  $S=\{u_1,\ldots, u_p\}\subseteq U\setminus R$ for $p\ge 2$. Let
  $(\complement{A_i}, A_i)$ be a minimum
  $((S\cup R)\setminus \set{u_i}, \complement{U})$-terminal
  cut. Suppose that
  $u_i\in A_i\setminus (\cup_{j\in [p]\setminus \set{i}}A_j)$ for
  every $i\in [p]$.  Let
\[
Z:= \cap_{i=1}^p \complement{A_i},\ W:= \cup_{1\le i<j\le p}(A_i \cap A_j),\ \text{and}\ Y_i:=A_i-W\ \forall i\in [p].
\]
Then, $(Y_1, \ldots, Y_p, W, Z)$ is a $(p+2)$-partition of $V$ with
\[
\sigma(Y_1, \ldots, Y_p, W, Z) \le \min\{\deltacard(A_i) + \deltacard(A_j): i, j\in [p], i\neq j\}.
\]
\end{lemma}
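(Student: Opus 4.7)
The plan is to induct on $p$. The base case $p = 2$ is immediate from Proposition \ref{prop:base-case}: with $A_1 = Y_1 \cup W$ and $A_2 = Y_2 \cup W$ one gets $\sigma(Y_1, Y_2, W, Z) = \deltacard(A_1) + \deltacard(A_2)$. Before inducting, I would verify that $(Y_1, \ldots, Y_p, W, Z)$ is a $(p+2)$-partition with every block nonempty: each vertex lies in none of the $A_i$ (so in $Z$), in exactly one (so in the corresponding $Y_i$), or in at least two (so in $W$). Non-emptiness comes from four natural witnesses supplied by the hypothesis: $u_i \in Y_i$ since $u_i \in A_i \setminus \cup_{j \neq i} A_j$; $R \subseteq Z$ since each source side $\complement{A_i}$ contains $R$; and $\complement{U} \subseteq A_1 \cap A_2 \subseteq W$ since each sink side $A_i$ contains $\complement{U}$ and $p \ge 2$.

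Now relabel so that $\deltacard(A_1) + \deltacard(A_2) = \min_{i \neq j}(\deltacard(A_i) + \deltacard(A_j))$; this is legitimate because both hypothesis and conclusion are symmetric in the $A_i$'s. For $\ell \in \{2, \ldots, p\}$, let $(Y_1^{(\ell)}, \ldots, Y_\ell^{(\ell)}, W^{(\ell)}, Z^{(\ell)})$ be the partition analogously defined using only $A_1, \ldots, A_\ell$. I would pass from stage $\ell - 1$ to stage $\ell$ by applying Lemma \ref{lemma:sigma-partition-set-uncross} with $Q := A_\ell$: a direct set-theoretic calculation shows that the lemma's output coincides with the stage-$\ell$ partition (in particular, $W^{(\ell - 1)} \cup (A_\ell \setminus Z^{(\ell - 1)}) = \cup_{1 \le i < j \le \ell}(A_i \cap A_j)$), and its non-emptiness preconditions hold via the same four witnesses $u_i, u_\ell, R, \complement{U}$. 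The lemma yields $\sigma^{(\ell)} \le \sigma^{(\ell - 1)} + \deltacard(A_\ell) - \deltacard(W^{(\ell - 1)} \cap A_\ell)$.

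The crux is to argue the increment is non-positive, i.e., $\deltacard(W^{(\ell - 1)} \cap A_\ell) \ge \deltacard(A_\ell)$. I would show this by exhibiting $(\complement{W^{(\ell - 1)} \cap A_\ell},\, W^{(\ell - 1)} \cap A_\ell)$ as a valid $((S \cup R) \setminus \{u_\ell\},\, \complement{U})$-terminal cut, competing against the given minimum cut $(\complement{A_\ell}, A_\ell)$. The sink side contains $\complement{U}$ because $\complement{U} \subseteq A_\ell$ and $\complement{U} \subseteq A_1 \cap A_2 \subseteq W^{(\ell - 1)}$; the source side contains $(S \cup R) \setminus \{u_\ell\}$ because $(S \cup R) \setminus \{u_\ell\} \subseteq \complement{A_\ell} \subseteq \complement{W^{(\ell - 1)} \cap A_\ell}$. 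Minimality of $A_\ell$ then forces $\deltacard(W^{(\ell - 1)} \cap A_\ell) \ge \deltacard(A_\ell)$, and telescoping gives $\sigma^{(p)} \le \sigma^{(2)} = \deltacard(A_1) + \deltacard(A_2) = \min_{i \neq j}(\deltacard(A_i) + \deltacard(A_j))$, as required.

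The main obstacle is less conceptual than bookkeeping: carefully checking that each iterated application of Lemma \ref{lemma:sigma-partition-set-uncross} produces the desired stage-$\ell$ partition, and that its non-emptiness preconditions continue to hold throughout. The conceptual heart of the argument—the terminal-cut competition bounding $\deltacard(W^{(\ell - 1)} \cap A_\ell)$—relies on both the fact that every $A_i$ is a sink side (so $\complement{U} \subseteq A_i$) and on $p \ge 2$ (so $W^{(\ell - 1)} \supseteq A_1 \cap A_2 \supseteq \complement{U}$ from the very first inductive step onward).
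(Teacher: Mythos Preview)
Your proposal is correct and follows essentially the same approach as the paper: both argue by induction on $p$ (you present it as an explicit staging $\ell=2,\ldots,p$, the paper as a formal induction), using Proposition~\ref{prop:base-case} for the base case, Lemma~\ref{lemma:sigma-partition-set-uncross} with $Q=A_\ell$ for the step, and the terminal-cut minimality of $(\complement{A_\ell},A_\ell)$ against the competitor $(\complement{W^{(\ell-1)}\cap A_\ell},\,W^{(\ell-1)}\cap A_\ell)$ to show the increment is nonpositive. Your verification of the non-emptiness preconditions and of the identity $W^{(\ell-1)}\cup(A_\ell\setminus Z^{(\ell-1)})=\cup_{1\le i<j\le\ell}(A_i\cap A_j)$ is exactly the bookkeeping the paper leaves implicit.
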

\begin{proof}
  For every $i\in [p]$, the set $Y_i$ is non-empty since $u_i\in
  Y_i$. The set $W$ is non-empty since $\complement{U}\subseteq
  W$. The set $Z$ is non-empty since $R\subseteq Z$. By definition,
  the sets $Y_1, \ldots, Y_p, W, Z$ are all disjoint and their union
  contains all vertices. Hence, $(Y_1, \ldots, Y_p, W, Z)$ is a
  partition of $V$.  Without loss of generality, let
  $\deltacard(A_1)\le \deltacard(A_2)\le \ldots \le \deltacard(A_p)$.
  We bound the $\sigma$-value of the partition by induction on $p$.


The base case of $p=2$ follows from Proposition \ref{prop:base-case}.
We show the induction step. Suppose that the statement holds for
$p=q$. We prove that it holds for $p=q+1$. Consider
$R_0:=R\cup \set{u_{q+1}}$ and $S_0:=S\setminus \set{u_{q+1}}$. Then,
$(\overline{A_i}, A_i)$ is still a minimum
$((S_0\cup R_0)\setminus \set{u_i}, \complement{U})$-terminal cut for
every $i\in [q]$ and moreover,
$u_i\in A_i\setminus \cup_{j\in [q]\setminus \set{i}}A_j$ for every
$i\in [q]$. By induction hypothesis, we get that for the sets
\[
Z_0:= \cap_{i=1}^q \complement{A_i},\ W_0:= \cup_{1\le i<j\le q}(A_i \cap A_j),\ \text{and}\ X_i:=A_i-W\ \forall i\in [q],
\]
we have
\[
\sigma(X_1, \ldots, X_q, W_0, Z_0) \le \deltacard(A_1) + \deltacard(A_2).
\]

The partition $(X_1, \ldots, X_q, W_0, Z_0)$ and the set $Q:=A_{q+1}$
satisfy the conditions of Lemma
\ref{lemma:sigma-partition-set-uncross}. By Lemma
\ref{lemma:sigma-partition-set-uncross}, we obtain that
\[
\sigma(Y_1, \ldots, Y_{q}, Y_{q+1}, W, Z) \le \sigma(X_1, \ldots, X_{q}, W_0, Z_0) +\deltacard(A_{q+1}) - \deltacard(W_0\cap A_{q+1}).
\]
Since $(\complement{W_0\cap A_{q+1}}, W_0\cap A_{q+1})$ is a feasible $((S\cup R)\setminus \set{u_{q+1}}, \complement{U})$-terminal cut, we have that $\deltacard(A_{q+1})\le \deltacard(W_0\cap A_{q+1})$. Hence,
\begin{align*}
\sigma(Y_1, \ldots, Y_{q}, Y_{q+1}, W, Z)
&\le \sigma(X_1, \ldots, X_{q}, W_0, Z_0)
\le \deltacard(A_1) + \deltacard(A_2).
\end{align*}
\end{proof}

The next lemma will help in aggregating the parts of a $2k$-partition $\mathcal{P}$ to a $k$-partition $\mathcal{K}$ so that the cost of $\mathcal{K}$ is at most half the $\sigma$-value of $\mathcal{P}$.
\begin{restatable}{lemma}{lemmaRecoverkCutFrompCut}
\label{lemma:recover-k-cut-from-p-cut}
Let $G=(V,E)$ be a hypergraph, $k\ge 2$ be an integer, and $(Y_1, \ldots, Y_p, W, Z)$ be a partition of $V$ for some integer $p\ge 2k-2$. Then, there exist distinct $i_1, \ldots, i_{k-1}\in [p]$ such that
\[
2\cost\left(Y_{i_1}, \ldots, Y_{i_{k-1}}, V\setminus (\cup_{j=1}^{k-1}Y_{i_j})\right)
\le
\cost(Y_1, \ldots, Y_p, W, Z) + \alpha(Y_1, \ldots, Y_p, W, Z) + \beta(Y_1, \ldots, Y_p, Z).
\]
\end{restatable}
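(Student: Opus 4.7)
The plan is to apply an averaging argument over all $(k-1)$-element subsets $T = \set{i_1, \ldots, i_{k-1}}$ of $[p]$. For each such $T$, let $\mathcal{P}_T := (Y_{i_1}, \ldots, Y_{i_{k-1}}, V \setminus \cup_{j=1}^{k-1} Y_{i_j})$ denote the associated $k$-partition. I would show that the average of $\cost(\mathcal{P}_T)$ over a uniformly random $T$ is at most half of the right-hand side of the target inequality; an averaging step then furnishes $i_1, \ldots, i_{k-1}$ as required.

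The analysis proceeds hyperedge by hyperedge. Fix $e \in E$ and set $S_e := \set{i \in [p] : e \cap Y_i \neq \emptyset}$, $s_e := |S_e|$, with $w_e, z_e \in \set{0,1}$ indicating whether $e$ intersects $W$ and $Z$, respectively. First, I would verify that if $e$ crosses the original partition $(Y_1, \ldots, Y_p, W, Z)$, then $e$ crosses $\mathcal{P}_T$ if and only if $S_e \cap T \neq \emptyset$. In the contrary case, every vertex of $e$ lies in $V \setminus \cup_{j} Y_{i_j}$, placing $e$ entirely in one part; otherwise $e$ meets some $Y_{i_j}$ and, since $e$ crosses the original partition, it must also meet a vertex outside that $Y_{i_j}$, which sits in a different part of $\mathcal{P}_T$. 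Hence
\[
\Pr_T[e \text{ crosses } \mathcal{P}_T] = 1 - \binom{p-s_e}{k-1}\Big/\binom{p}{k-1}.
\]
Hyperedges that do not cross the original partition contribute $0$ to both sides and can be ignored.

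Let $g_e$ denote the contribution of $e$ to the right-hand side; for $e$ crossing the original partition this is $1$ from the $\cost$ term, plus an additional $1$ if $z_e = 1$ and $s_e + w_e \ge 2$ (contributed by $\alpha$), plus an additional $1$ if $z_e = 0$ and $s_e \ge 2$ (contributed by $\beta$). The core step is the per-edge inequality
\[
2\left(1 - \binom{p-s_e}{k-1}\Big/\binom{p}{k-1}\right) \le g_e,
\]
which I would establish by a short case analysis on $s_e$. For $s_e = 0$ the LHS is $0$ while $g_e \ge 1$. For $s_e \ge 2$, either $z_e = 1$ so that $\alpha$ adds $1$ (since $s_e + w_e \ge 2$), or $z_e = 0$ so that $\beta$ adds $1$; in either subcase $g_e \ge 2$ and the LHS is at most $2$. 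For $s_e = 1$ the LHS simplifies to $2(k-1)/p$ and $g_e \ge 1$, reducing the inequality to the hypothesis $p \ge 2k-2$.

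The main mild obstacle is the tight case $s_e = 1$ where $e$ touches exactly one of $W$ and $Z$: here both $\alpha$ and $\beta$ contribute $0$, so $g_e = 1$, and the hypothesis $p \ge 2k-2$ is used with no slack. Once the per-edge inequality is established, summing over all hyperedges and applying linearity of expectation yields $2\,\E_T[\cost(\mathcal{P}_T)] \le \cost(Y_1, \ldots, Y_p, W, Z) + \alpha(Y_1, \ldots, Y_p, W, Z) + \beta(Y_1, \ldots, Y_p, Z)$, so by averaging some $T = \set{i_1, \ldots, i_{k-1}}$ achieves the desired bound.
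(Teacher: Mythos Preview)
Your argument is correct. It is, however, not the route taken in the body of the paper. There the proof proceeds by contradiction via a minimal counterexample: in a counterexample with $|V|+|E|$ minimum, every part must be a singleton, and any hyperedge that is double-counted on the right-hand side (i.e., contributes to $\alpha$ or $\beta$) or that lies inside $W\cup Z$ can be deleted while preserving strict failure of the inequality; what survives is a graph consisting solely of edges joining a single $Y_i$ to $W$ or to $Z$, so that the right-hand side equals $\sum_{i=1}^p d(Y_i)$, and picking the $k-1$ parts of smallest $d(Y_i)$ yields the contradiction via $p\ge 2(k-1)$.

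Your route is the direct probabilistic one: for a uniformly random $(k-1)$-subset $T$ of $[p]$, the per-edge bound $2\Pr[e\text{ is cut}]\le g_e$ holds in every case, with the only tight instance being $s_e=1$ and $e$ touching exactly one of $W,Z$, where $2(k-1)/p\le 1$ uses the hypothesis $p\ge 2k-2$ with no slack. This dispenses with the minimal-counterexample reduction and makes the role of the hypothesis transparent as a single probability estimate, at the cost of a short case split on $(s_e,w_e,z_e)$ rather than a structural simplification of the instance. It is worth noting that the paper's source in fact contains, in a suppressed appendix block, an alternative proof of this lemma that is essentially the same averaging argument you wrote; the authors chose the counterexample proof for the body.
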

\begin{proof}
  Suppose that the lemma is false. Pick a counterexample hypergraph
  $G=(V,E)$ such that $|V|+|E|$ is minimum. Hence, for every distinct
  $i_1, \ldots, i_{k-1}\in [p]$, we have
\[
2\cost\left(Y_{i_1}, \ldots, Y_{i_{k-1}}, V\setminus (\cup_{j=1}^{k-1}Y_{i_j})\right)
>
\cost(Y_1, \ldots, Y_p, W, Z) + \alpha(Y_1, \ldots, Y_p, W, Z) + \beta(Y_1, \ldots, Y_p, Z).
\]

Minimality of the counterexample implies that $|Y_i|=1$ for every
$i\in [p]$ and $|W|=1=|Z|$ (otherwise, we can obtain a smaller
counterexample by contracting the corresponding subset).  If there
exists a hyperedge $e\subseteq W\cup Z$ with $e$ intersecting both $W$
and $Z$, then discarding $e$ would still preserve the counterexample
property since $e$ is not counted in LHS but is counted in RHS, hence
no such hyperedge exists in $G$. For similar reasons, if there exists a
hyperedge $e$ that is double counted by RHS (see Figure
\ref{figure:sigma-count}), then discarding this hyperedge would still
preserve the counterexample property. Minimality of the counterexample
implies that no such hyperedge can exist. Consequently, all hyperedges
present in the hypergraph $G$ are in fact edges with one end-vertex in
$Y_i$ for some $i\in [p]$ and another end-vertex in $W$ or $Z$.
Thus,
\begin{align*}
    RHS = \cost(Y_1, \ldots, Y_p, W, Z) = \sum_{i=1}^{p}\deltacard(Y_i).
\end{align*}

Without loss of generality, let $\deltacard(Y_1)\le \deltacard(Y_2)\le \ldots \le \deltacard(Y_p)$. Then,
\begin{align*}
    2\cost\left(Y_{1}, \ldots, Y_{{k-1}}, V\setminus (\cup_{j=1}^{k-1}Y_{i_j})\right)
    &= 2\sum_{i=1}^{k-1} \deltacard(Y_i)
    \le \sum_{i=1}^p \deltacard(Y_i)
    = RHS.
\end{align*}
The inequality above is because $p\ge 2(k-1)$. Thus, $G$ cannot be a counterexample.
\end{proof}

We now restate and prove the main uncrossing theorem of this section.
\thmHypergraphUncrossing*
\begin{proof}
By applying Lemma \ref{lemma:uncrossing}, we obtain a $(p+2)$-partition $(Y_1, \ldots, Y_p, W, Z)$ such that
\[
\sigma(Y_1, \ldots, Y_p, W, Z) \le \min\{\deltacard(A_i) + \deltacard(A_j): i, j\in [p], i\neq j\}
\]
and moreover, $\complement{U}\subseteq W$. We recall that $p\ge 2k-2$. Hence, by applying Lemma \ref{lemma:recover-k-cut-from-p-cut} to the $(p+2)$-partition $(Y_1, \ldots, Y_p, W, Z)$, we obtain a $k$-partition $(P_1, \ldots, P_k)$ of $V$ such that $W\cup Z\subseteq P_k$ and
\[
\cost(P_1, \ldots, P_k) \le \frac{1}{2}\sigma(Y_1, \ldots, Y_p, W, Z) \le \frac{1}{2}\min\{\deltacard(A_i) + \deltacard(A_j): i, j\in [p], i\neq j\}.
\]
We note that $\complement{U}$ is strictly contained in $P_k$ since $\complement{U}\cup Z\subseteq W\cup Z\subseteq P_k$ and $Z$ is non-empty.
\end{proof}

\begin{remark}
The lower bound condition on $p$ (i.e., $p\ge 2k-2$) in the statement of Theorem \ref{theorem:hypergraph-uncrossing} is tight. In particular, the conclusion of the theorem does not hold for $p=2k-3$ as illustrated by the graph in Figure \ref{figure:tight-example}. \end{remark}

\begin{figure}[htb]
\centering
\includegraphics[width=0.4\textwidth]{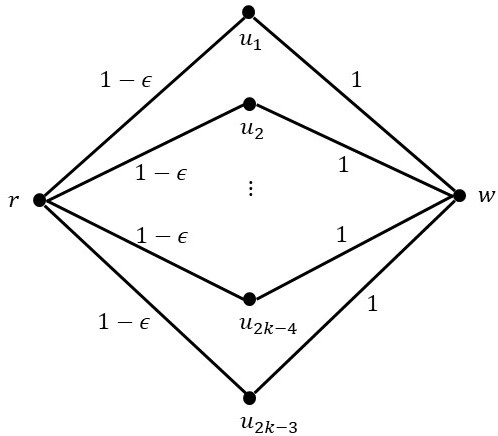}
\caption{An edge-weighted graph showing the necessity of the condition $p\ge 2k-2$ in Theorem \ref{theorem:hypergraph-uncrossing} (where $\epsilon$ is a small positive constant). We consider $U=\set{r, u_1, u_2, \ldots, u_{2k-3}}$ and $R={r}$. Then, the RHS of the theorem is $2k-3-\epsilon$ while the cost of any $k$-cut is at least $2k-2-O(\epsilon)$.}
\label{figure:tight-example}
\end{figure}

\begin{remark}
A natural counterpart of Theorem \ref{theorem:hypergraph-uncrossing} for (symmetric) submodular functions is false. For a submodular function $f:2^{V}\rightarrow \R_+$, by defining $f_{\text{sym}}(U):=f(U)+f(\complement{U})$ to be the value of the $2$-partition $(U, \complement{U})$, and assuming the conditions of the theorem, it is tempting to conjecture that there exists a $k$-partition $(P_1, \ldots, P_k)$ such that
\[
\sum_{i=1}^{k} f(P_i) \le \frac{1}{2}\min\left\{f_{\text{sym}}(A_i) + f_{\text{sym}}(A_j): i, j\in [p], i\neq j\right\}.
\]
Here is a counterexample: Consider the function $f(S):=1$ if $\emptyset\neq S\subsetneq V$, $f(\emptyset):=0$, and $f(V):=0$. Then, for any $k$-partition $(P_1, \ldots, P_k)$, we have $\sum_{i=1}^k f(P_i) = k$. However, the RHS in the above inequality is only $2$.
\end{remark}

\section{Proof of Theorem
  \ref{theorem:small-witness-inside-V_1-for-arbitrary-T}} \label{section:structural-theorem-proof}
In this section, we prove Theorem~\ref{theorem:small-witness-inside-V_1-for-arbitrary-T}.
We start with a useful containment property captured by the next lemma.

\begin{lemma}\label{lemma:V_S-inside-V_1}
  Let $G=(V,E)$ be a hypergraph, $(V_1, \ldots, V_k)$ be a maximal
  minimum $k$-partition in $G$ for an integer $k\ge 2$, and
  $S\subseteq V_1$, $T\subseteq \complement{V_1}$ such that
  $T\cap V_j\neq \emptyset$ for every $j\in \set{2,\ldots,
    k}$. Suppose $(U,\complement{U})$ is a minimum $(S,T)$-terminal
  cut. Then, $U\subseteq V_1$. \end{lemma}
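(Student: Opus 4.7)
The plan is to suppose for contradiction that $U \not\subseteq V_1$ and derive a $k$-partition strictly enlarging $V_1$ that remains a minimum $k$-partition, violating the maximality hypothesis on $(V_1,\ldots,V_k)$.

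First I would uncross $U$ with $V_1$ using submodularity of the hypergraph cut function $\deltacard$. The pair $(U\cap V_1,\complement{U\cap V_1})$ is itself a feasible $(S,T)$-terminal cut: it contains $S$ (since $S\subseteq V_1\cap U$) and is disjoint from $T$ (since $T\cap V_1=\emptyset$). By the minimality of $(U,\complement{U})$ this yields $\deltacard(U\cap V_1)\ge \deltacard(U)$, and submodularity $\deltacard(U)+\deltacard(V_1)\ge \deltacard(U\cap V_1)+\deltacard(U\cup V_1)$ then gives
\[
\deltacard(V_1)\ \ge\ \deltacard(V_1\cup U).
\]

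Next I would construct the candidate partition $\mathcal{P}':=(V_1\cup U,\, V_2\setminus U,\,\ldots,\,V_k\setminus U)$. Each $V_j\setminus U$ for $j\ge 2$ contains $T\cap V_j$, which is non-empty by hypothesis (and $T\cap U=\emptyset$ since $(U,\complement{U})$ is an $(S,T)$-terminal cut), so $\mathcal{P}'$ is a genuine $k$-partition of $V$. The key step is to verify $\cost(\mathcal{P}')\le \cost(V_1,\ldots,V_k)$. For this I would use the natural decomposition
\[
\cost(V_1,\ldots,V_k)\ =\ \deltacard(V_1)\ +\ \cost_{G[\complement{V_1}]}(V_2,\ldots,V_k),
\]
obtained by sorting hyperedges according to whether they cross $V_1$ or lie entirely in $\complement{V_1}$, and similarly for $\mathcal{P}'$. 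The first term drops by the submodularity step. For the second term, every hyperedge $e$ of $G[\complement{V_1\cup U}]$ is also a hyperedge of $G[\complement{V_1}]$ and is disjoint from $U$, so $e\cap V_j=e\cap(V_j\setminus U)$ for each $j\ge 2$; hence $e$ crosses $(V_2\setminus U,\ldots,V_k\setminus U)$ iff it crosses $(V_2,\ldots,V_k)$. Passing from $G[\complement{V_1}]$ to $G[\complement{V_1\cup U}]$ only discards hyperedges intersecting $U\setminus V_1$, so the second term cannot increase either.

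Combining, $\cost(\mathcal{P}')\le \opt$, so $\mathcal{P}'$ is itself a minimum $k$-partition. But $V_1\cup U \supsetneq V_1$ under our contradiction hypothesis, contradicting the maximality of $V_1$ in $(V_1,\ldots,V_k)$. The only point needing a little care is the cost-decomposition bookkeeping across the two induced hypergraphs; the submodular uncrossing itself is standard and is the backbone of the argument, so I do not anticipate a genuine obstacle.
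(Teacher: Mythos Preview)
Your proof is correct and follows essentially the same approach as the paper: uncross $U$ with $V_1$ using the min-$(S,T)$-cut property together with submodularity, form the partition $(V_1\cup U,\,V_2\setminus U,\ldots,V_k\setminus U)$, compare costs via the decomposition $\cost=\deltacard(\text{first part})+\cost_{G[\cdot]}(\text{rest})$, and contradict maximality. Your version is in fact slightly more streamlined than the paper's, which first proves the reverse inequality $\deltacard(V_1)\le\deltacard(V_1\cup U)$ as a separate claim and then deduces equality, whereas you go straight to $\deltacard(V_1)\ge\deltacard(V_1\cup U)$, which is all that is needed.
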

\begin{proof}
  For the sake of contradiction, suppose
  $U\setminus V_1\neq \emptyset$.
  We will obtain another minimum
  $k$-partition that will contradict the maximality of $V_1$ in the
  minimum $k$-partition $(V_1, \ldots, V_k)$.  We observe that
\begin{align}
    \deltacard(U)&\le \deltacard(U\cap V_1) \label{ineq:intersection}
\end{align}
since $(U\cap V_1, \complement{U\cap V_1})$ is a $(S,T)$-terminal cut. We need the following claim:
\begin{claim}\label{claim:union}
\[
\deltacard(V_1)\le \deltacard(U\cup V_1).
\]
\end{claim}
\begin{proof}
  For the sake of contradiction, suppose
  $\deltacard(U\cup V_1)<\deltacard(V_1)$. Then, consider
  $W_1 := U\cup V_1$ and $W_j:= V_j\setminus U$ for every
  $j\in \set{2,\ldots, k}$ (see Figure
  \ref{figure:uncrossing-for-V_S-inside-V_1}).  We have
  $\deltacard(W_1)<\deltacard(V_1)$.  Since $S\subseteq W_1$ and
  $T\cap W_j\neq \emptyset$ for every $j\in \set{2,\ldots, k}$, we
  have that $(W_1, \ldots, W_k)$ is a $k$-partition. We will show that
  $\cost(W_1, \ldots, W_k)$ is strictly smaller than
  $\cost(V_1, \ldots, V_k)$, thus contradicting the optimality of the
  $k$-partition $(V_1, \ldots, V_k)$.

\begin{figure}[htb]
\centering
\includegraphics[width=0.5\textwidth]{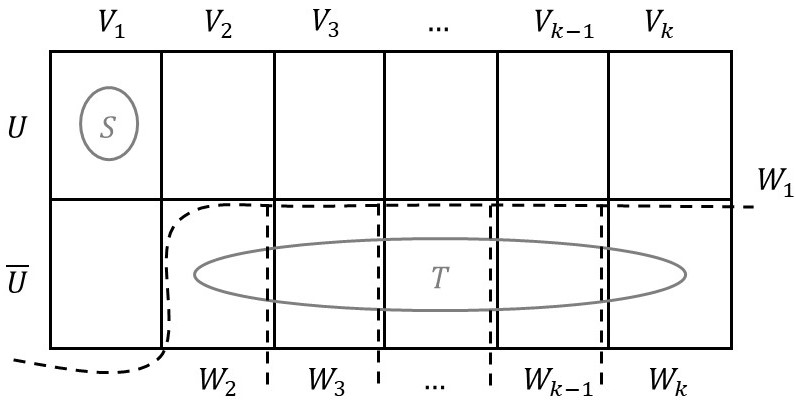}
\caption{Uncrossing in the proof of Claim \ref{claim:union}. }
\label{figure:uncrossing-for-V_S-inside-V_1}
\end{figure}

We recall that for a subset $A$ of vertices, the graph $G[A]$ is
obtained from $G$ by discarding the vertices in $\complement{A}$ and
by discarding the hyperedges that intersect $\complement{A}$. With
this notation, we can write
\begin{align*}
\cost_{G}(W_1, \ldots, W_k) &= \deltacard(W_1) + \cost_{G[\complement{W_1}]}(W_2, \ldots, W_k)\text{ and}\\
\cost_{G}(V_1, \ldots, V_k) &= \deltacard(V_1) + \cost_{G[\complement{V_1}]}(V_2, \ldots, V_k).
\end{align*}
Moreover, every hyperedge that is disjoint from $W_1=U\cup V_1$ but
crosses the $(k-1)$-partition
$(W_2=V_2\setminus U, \ldots, W_k=V_k\setminus U)$ is also disjoint
from $V_1$ but crosses the $(k-1)$-partition $(V_2, \ldots,
V_k)$. Hence,
$\cost_{G[\complement{W_1}]}(W_2, \ldots, W_k)\le
\cost_{G[\complement{V_1}]}(V_2, \ldots, V_k)$.  
We also have $\deltacard(W_1)< \deltacard(V_1)$. Therefore,
\[
\cost(W_1, \ldots, W_k) < \cost(V_1, \ldots, V_k),
\]
a contradiction to optimality of the $k$-partition $(V_1, \ldots, V_k)$.
\end{proof}

By inequality (\ref{ineq:intersection}), Claim \ref{claim:union}, and submodularity of the hypergraph cut function, we have that
\[
\deltacard(U) + \deltacard(V_1) \le \deltacard(U\cap V_1) + \deltacard(U\cup V_1) \le \deltacard(U) + \deltacard(V_1).
\]
Therefore, the inequality in Claim \ref{claim:union} should in fact be an equation, i.e.,
\[
\deltacard(V_1)=\deltacard(U\cup V_1).
\]

Going through the proof of Claim \ref{claim:union} with this additional fact, we obtain that the $k$-partition $(U\cup V_1, V_2\setminus U, \ldots, V_k\setminus U)$ has cost at most that of $(V_1, \ldots, V_k)$. Hence, the $k$-partition $(U\cup V_1, V_2\setminus U, \ldots, V_k\setminus U)$ is also a minimum $k$-partition and it contradicts the maximality of $V_1$.


\end{proof}

\begin{remark}
Lemma \ref{lemma:V_S-inside-V_1} also holds
for \submodkpart. That is, for a submodular
function $f:2^{V}\rightarrow \R_+$ with $(V_1, \ldots, V_k)$
being a maximal minimum $k$-partition for an integer $k\ge 2$,
subsets $S\subseteq V_1$ and $T\subseteq \complement{V_1}$ such that
$T\cap V_j\neq \emptyset$ for every $j\in \set{2,\ldots, k}$,
and $(U,\complement{U})$ being an $S,T$-separating $2$-partition with
minimum $f(U)+f(\complement{U})$ among all $S,T$-separating
$2$-partitions,
we have that $U\subseteq V_1$. This can be shown using the proof of
Theorem 5 in \cite{OFN12}. 
\end{remark}

We now restate and prove Theorem
\ref{theorem:small-witness-inside-V_1-for-arbitrary-T}.
\thmSmallWitness*
\begin{proof}
  For the sake of contradiction, suppose that the theorem is false for
  some subset $T\subseteq \complement{V_1}$ such that
  $T\cap V_j\neq \emptyset$ for all $j\in \set{2,\ldots, k}$. Our
  proof strategy is to obtain a cheaper $k$-partition than
  $(V_1, \ldots, V_k)$, thereby contradicting the optimality of
  $(V_1, \ldots, V_k)$.  For a subset $X\subseteq V_1$, let
  $(V_{X}, \complement{V_X})$ be the source maximal minimum
  $(X,T)$-terminal cut.

  Among all possible subsets of $V_1$ of size $2k-2$, pick a subset
  $S$ such that $\deltacard(V_{S})$ is maximum. By Lemma \ref{lemma:V_S-inside-V_1} and assumption, we have that $V_S\subsetneq V_1$. By source maximality of the minimum $(S,T)$-terminal cut $(V_S, \complement{V_S})$, we have that $\deltacard(V_S)<\deltacard(V_1)$. 
  Let $u_1,\ldots, u_{2k-2}$ be the vertices in $S$. Since
  $V_{S}\subsetneq V_1$, there exists a vertex
  $u_{2k-1}\in V_1\setminus V_{S}$. Let
  $\core:=\set{u_1,\ldots, u_{2k-1}}=S\cup \set{u_{2k-1}}$. For
  $i\in [2k-1]$, let
  $(B_i, \complement{B_i})$ be the source maximal minimum
  $(\core-\set{u_i},T)$-terminal cut. We note
  that $(B_{2k-1}, \complement{B_{2k-1}})=(V_{S}, \complement{V_{S}})$
  and the size of $\core-\set{u_i}$ is $2k-2$ for every $i\in
  [2k-1]$. By Lemma \ref{lemma:V_S-inside-V_1} and assumption, we have that $B_i\subsetneq V_1$ for every $i\in [2k-1]$. Hence, we have
  \begin{align}
    \deltacard(B_i)&\le \deltacard(V_{S}) <\deltacard(V_1)  \text{ and } B_i\subsetneq V_1 \text{ for every } i\in [2k-1]. \label{statement}
  \end{align}
  The next claim will set us up to apply Theorem \ref{theorem:hypergraph-uncrossing}.
  \begin{claim}\label{claim:u_i-in-B_i}
    For every $i\in [2k-1]$, we have that $u_i\in \complement{B_i}$.
  \end{claim}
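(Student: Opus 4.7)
The plan is to proceed by contradiction on $u_i \in B_i$, handling $i = 2k-1$ as an easy base case and leveraging the two extremal choices (the maximality of $\deltacard(V_S)$ over all $(2k-2)$-element seed sets $S$, and the source maximality of $V_S$ itself) for $i \in [2k-2]$.

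First I would dispense with the case $i = 2k-1$: by construction $(B_{2k-1}, \complement{B_{2k-1}}) = (V_S, \complement{V_S})$, and $u_{2k-1}$ was chosen from $V_1 \setminus V_S$, so $u_{2k-1} \in \complement{V_S} = \complement{B_{2k-1}}$ immediately. So the entire content of the claim lies in the range $i \in [2k-2]$.

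For $i \in [2k-2]$, I would suppose toward a contradiction that $u_i \in B_i$. Since $B_i$ is by definition a feasible $(\core \setminus \{u_i\}, T)$-terminal cut, the added assumption $u_i \in B_i$ promotes it to a feasible $(\core, T)$-terminal cut; in particular $S \subseteq \core \subseteq B_i$ and $B_i \cap T = \emptyset$, so $(B_i, \complement{B_i})$ is a valid $(S,T)$-terminal cut. Minimality of $V_S$ then yields $\deltacard(B_i) \ge \deltacard(V_S)$. Simultaneously, $\core \setminus \{u_i\}$ is a $(2k-2)$-element subset of $V_1$, so the choice of $S$ as the subset maximizing $\deltacard(V_{(\cdot)})$ forces $\deltacard(V_S) \ge \deltacard(B_i)$. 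Thus equality holds, and $B_i$ is a \emph{minimum} $(S,T)$-terminal cut.

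The main step, and the one I expect to be the pivot of the argument, is then to invoke source maximality of $V_S$: among all minimum $(S,T)$-terminal cuts, $V_S$ is the (unique) one containing every other source set, so $B_i \subseteq V_S$. But $u_{2k-1} \in \core \setminus \{u_i\} \subseteq B_i$ (since $u_{2k-1}$ is only removed from $\core$ when $i = 2k-1$), whereas $u_{2k-1} \notin V_S$ by the very choice of $u_{2k-1}$. This contradicts $B_i \subseteq V_S$ and completes the proof. The subtle point I would flag is that the two maximality conditions used here pull in opposite directions, so one must be careful that $u_{2k-1}$ witnesses the incompatibility rather than $u_i$ itself.
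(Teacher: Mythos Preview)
Your proof is correct and follows essentially the same approach as the paper: both handle $i=2k-1$ by construction, and for $i\in[2k-2]$ both combine the maximality of $\deltacard(V_S)$ with the feasibility of $B_i$ as an $(S,T)$-terminal cut (under the assumption $u_i\in B_i$) to derive a contradiction to the source maximality of $V_S$ via the witness $u_{2k-1}$. The only cosmetic difference is that the paper explicitly applies submodularity to $V_S$ and $B_i$ to show that $V_S\cup B_i$ is a minimum $(S,T)$-terminal cut strictly containing $V_S$, whereas you equivalently conclude $B_i\subseteq V_S$ from the standard (submodularity-based) fact that the source-maximal minimum cut contains every minimum cut.
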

  \begin{proof}
    The claim holds for $i=2k-1$ by choice of $u_{2k-1}$. For the sake of contradiction, suppose $u_i\in B_i$ for some $i\in [2k-2]$. Then, the $2$-partition $(V_{S}\cap B_i,\complement{V_{S}\cap B_i})$ is a $(S,T)$-terminal cut and hence
    \[
      \deltacard(V_{S}\cap B_i)\ge \deltacard(V_{S}).
    \]
    We also have that
    \[
      \deltacard(V_{S}\cup B_i)\ge \deltacard(V_{S})
    \]
    since $(V_{S}\cup B_i,\complement{V_S\cup B_i})$ is a $(S,T)$-terminal cut.
    Thus,
    \begin{align*}
      2\deltacard(V_S)
      &\ge \deltacard(V_S) + \deltacard(B_i) \quad \quad \quad \quad \quad \quad \text{ (By choice of $S$)}\\
      &\ge \deltacard(V_S \cup B_i) + \deltacard(V_S \cap B_i) \quad \quad \text{(By submodularity)}\\
    &\ge 2\deltacard(V_S).
    \end{align*}
    Therefore, $\deltacard(V_S)=\deltacard(V_S\cup B_i)$. Moreover,
    $B_i\setminus V_S$ is non-empty since the vertex
    $u_{2k-1}\in B_i\setminus V_S$.  Hence, the $2$-partition
    $(V_S\cup B_i, \complement{V_S\cup B_i})$ is a minimum
    $(S,T)$-terminal cut. However, this contradicts source maximality of the
    minimum $(S,T)$-terminal cut $(V_S, \complement{V_S})$ since
    $u_{2k-1} \in B_i$ and $u_{2k-1} \not \in V_S$. 
  \end{proof}

We note that for every $i\in [2k-1]$, the $2$-partition $(B_i, \complement{B_i})$ is a minimum $(\core-\set{u_i}, \complement{V_1})$-terminal cut since $\complement{V_1}\subseteq \complement{B_i}$. 

  We will now apply Theorem \ref{theorem:hypergraph-uncrossing}.
We consider $U:=V_1$, $R:=\set{u_{2k-1}}\subseteq U$, $S=\set{u_1,\ldots, u_{2k-2}}\subseteq U\setminus R$. Let $p:=2k-2$ and let $(\complement{A_i}, A_i):=(B_i, \complement{B_i})$ for every $i\in [p]$.
The $2$-partition $(\complement{A_i}, A_i)$ is a minimum $((S\cup R)\setminus \set{u_i}, \complement{U})$-terminal cut for every $i\in [p]$.
By Claim \ref{claim:u_i-in-B_i}, we have that $u_i\in A_i$ for every $i\in [p]$. Since $(B_j,\complement{B_j})$ is a $(\core-\set{u_j},T)$-terminal cut, we have that $u_i\not\in \complement{B_j}$ for every distinct $i, j\in [p]$. Thus, $u_i\in A_i\setminus (\cup_{j\in [p]\setminus \set{i}}A_j)$ for every $i\in [p]$.
Therefore, the sets $U$, $R$, $S$ and the $2$-partitions $(\complement{A_i}, A_i)$ for $i\in [p]$ satisfy the conditions of Theorem \ref{theorem:hypergraph-uncrossing}. By Theorem \ref{theorem:hypergraph-uncrossing}, symmetry of the cut function, and statement (\ref{statement}), we obtain a $k$-partition $(P_1, \ldots, P_k)$ of $V$ such that
  \begin{align*}
    \cost(P_1, \ldots, P_k)
    &\le \frac{1}{2}\min\set{\deltacard(A_i) + \deltacard(A_j):i, j\in [p], i\neq j}\\
    &= \frac{1}{2}\min\set{\deltacard(B_i) + \deltacard(B_j): i, j \in [p], i\neq j}\\
    &<\deltacard(V_1)\le \opt.
  \end{align*}
  Thus, we have obtained a $k$-partition whose cost is smaller than $\opt$, a contradiction.

\end{proof}

\begin{remark}
The proof techniques in this section relied only on the submodularity of the hypergraph cut function and the use of Theorem \ref{theorem:hypergraph-uncrossing}. The proof of Theorem \ref{theorem:hypergraph-uncrossing} heavily relied on the function of interest being the hypergraph cut function. As we remarked in Section \ref{section:uncrossing-for-hypergraph-cut-function}, there does not seem to be a counterpart of Theorem \ref{theorem:hypergraph-uncrossing} for submodular functions.
\end{remark}

\section{Structural Theorem for Divide and Conquer}\label{section:structural-theorem-for-DC}
We need a slightly stronger structural theorem to design a faster
algorithm that is based on divide and conquer. We remark again that
the proof techniques in this section will rely only on the
submodularity of the hypergraph cut function and the use of Theorem
\ref{theorem:hypergraph-uncrossing}.

We note that the source maximal minimum $(S,T)$-terminal cut is
identical to the sink minimal minimum $(S,T)$-terminal cut. We define
a $2$-partition $(U,\complement{U})$ to be a \emph{balanced minimum
  $k$-partition split} if there exists a minimum $k$-partition
$(V_1, \ldots, V_k)$ such that $U=\cup_{i=1}^{\lfloor
  k/2\rfloor}V_i$. Since there could be multiple balanced minimum
$k$-partition splits, we will be interested in a minimal balanced
minimum $k$-partition split: a balanced minimum $k$-partition split
$(U, \complement{U})$ is \emph{minimal} if there does not exist
another balanced minimum $k$-partition split $(U', \complement{U'})$
such that $U'$ is strictly contained in $U$.


We need the following two theorems. We defer their proofs to Sections
\ref{subsec:proof-of-theorem-complement-U-side} and
\ref{subsec:proof-of-theorem-U-side} respectively.

\begin{restatable}{theorem}{thmComplementUSide}
  \label{theorem:complement-U-side}
  Let $G=(V,E)$ be a hypergraph and let $\opt$ be the value of a
  minimum $k$-cut in $G$ for some integer $k\ge 2$. Suppose
  $(U,\complement{U})$ is a $2$-partition of $V$ with
  $\deltacard(U)\le \opt$. Then, there exists a subset $S\subseteq U$
  with $|S|\le 2k-2$ such that $(U, \complement{U})$ is the source
  maximal minimum $(S, \complement{U})$-terminal cut in $G$.
\end{restatable}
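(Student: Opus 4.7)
The plan is to argue by contradiction, closely mirroring the proof of Theorem \ref{theorem:small-witness-inside-V_1-for-arbitrary-T}. First I would handle the trivial case $|U| < 2k-2$ by taking $S := U$: any $(S, \complement{U})$-terminal source set lies between $S = U$ and $V \setminus \complement{U} = U$, so it is forced to equal $U$, making $(U,\complement{U})$ trivially the source maximal minimum cut. So assume $|U| \ge 2k-2$. For each $S \subseteq U$ of size $2k-2$, let $(V_S, \complement{V_S})$ denote the source maximal minimum $(S, \complement{U})$-terminal cut; note $V_S \subseteq U$ automatically. Under the contradiction hypothesis, $V_S \neq U$ for every such $S$; combined with source maximality (which would force $V_S = U$ if $\deltacard(V_S) = \deltacard(U)$, since $(U,\complement{U})$ would also be a minimum $(S,\complement{U})$-terminal cut), this gives $V_S \subsetneq U$ and $\deltacard(V_S) < \deltacard(U) \le \opt$.

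Next, among all such $S$, I would select one for which $\deltacard(V_S)$ is maximized, and write $S = \set{u_1, \ldots, u_{2k-2}}$. Since $V_S \subsetneq U$, I can pick $u_{2k-1} \in U \setminus V_S$ and set $\core := S \cup \set{u_{2k-1}}$. For each $i \in [2k-1]$, let $(B_i, \complement{B_i})$ be the source maximal minimum $(\core \setminus \set{u_i}, \complement{U})$-terminal cut; note $B_{2k-1} = V_S$ and each $\core \setminus \set{u_i}$ has size $2k-2$, so by the extremal choice of $S$ we have $\deltacard(B_i) \le \deltacard(V_S) < \deltacard(U)$ for all $i$.

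The key step, analogous to Claim \ref{claim:u_i-in-B_i}, is to show $u_i \in \complement{B_i}$ for every $i \in [2k-1]$. This is immediate for $i = 2k-1$ by choice of $u_{2k-1}$. For $i \in [2k-2]$, suppose $u_i \in B_i$; then both $V_S \cap B_i$ and $V_S \cup B_i$ are valid $(S, \complement{U})$-terminal source sets, so each of $\deltacard(V_S \cap B_i)$ and $\deltacard(V_S \cup B_i)$ is at least $\deltacard(V_S)$. Combining with $\deltacard(B_i) \le \deltacard(V_S)$ and submodularity of $\deltacard$ forces equalities throughout, so $(V_S \cup B_i, \complement{V_S \cup B_i})$ is a minimum $(S, \complement{U})$-terminal cut. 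Since $u_{2k-1} \in \core \setminus \set{u_i} \subseteq B_i$ but $u_{2k-1} \notin V_S$, this contradicts the source maximality of $(V_S, \complement{V_S})$.

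Finally, I would invoke Theorem \ref{theorem:hypergraph-uncrossing} with $U$ as given, $R := \set{u_{2k-1}}$, $S$ as chosen, $p := 2k-2$, and $(\complement{A_i}, A_i) := (B_i, \complement{B_i})$ for $i \in [p]$. The required hypotheses hold: $u_i \in \complement{B_i} = A_i$ by the previous step, and for $j \neq i$, $u_i \in \core \setminus \set{u_j} \subseteq B_j$, so $u_i \notin A_j$; hence $u_i \in A_i \setminus (\cup_{j \neq i} A_j)$. The theorem then produces a $k$-partition $(P_1, \ldots, P_k)$ of $V$ with
\[
\cost(P_1, \ldots, P_k) \le \frac{1}{2} \min\set{\deltacard(B_i) + \deltacard(B_j) : i, j \in [p],\, i \neq j} \le \deltacard(V_S) < \deltacard(U) \le \opt,
\]
contradicting the optimality of $\opt$. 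The main obstacle, as in Theorem \ref{theorem:small-witness-inside-V_1-for-arbitrary-T}, is establishing $u_i \in \complement{B_i}$ via the extremality-plus-submodularity argument; the rest is routine once Theorem \ref{theorem:hypergraph-uncrossing} is available. Note also that an analogue of Lemma \ref{lemma:V_S-inside-V_1} is not needed here since $V_S \subseteq U$ holds automatically by the definition of $(S,\complement{U})$-terminal cut.
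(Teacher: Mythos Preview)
Your proof is correct and follows essentially the same approach as the paper's own proof: the same contradiction setup, the same extremal choice of $S$ maximizing $\deltacard(V_S)$, the same auxiliary vertex $u_{2k-1}$ and core $C$, the same submodularity argument to show $u_i\in\complement{B_i}$, and the same invocation of Theorem~\ref{theorem:hypergraph-uncrossing}. Your explicit handling of the trivial case $|U|<2k-2$ and your remark that no analogue of Lemma~\ref{lemma:V_S-inside-V_1} is needed (since $V_S\subseteq U$ is automatic here) are minor clarifications, but the substance is identical.
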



\begin{restatable}{theorem}{thmUSide}
  \label{theorem:U-side}
  Let $G=(V,E)$ be a hypergraph and let $(U, \complement{U})$ be a
  minimal balanced minimum $k$-partition split in $G$ for some integer
  $k\ge 2$.  Then, for every vertex $u_0\in U$, there exists a subset
  $S\subseteq U\setminus \set{u_0}$ with $|S|\le 2k-3$ such that
  $(U, \complement{U})$ is the \emph{unique} minimum
  $(S\cup \set{u_0}, \complement{U})$-terminal cut in $G$.
\end{restatable}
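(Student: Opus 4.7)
The plan is to follow a contradiction-and-uncrossing strategy analogous to the proof of Theorem~\ref{theorem:small-witness-inside-V_1-for-arbitrary-T}, with one crucial modification dictated by the stronger conclusion. Since the conclusion asserts uniqueness of the minimum $(S\cup\{u_0\},\complement{U})$-terminal cut, I would work with the source minimal minimum cut $W_S$ rather than the source maximal one, because $(U,\complement{U})$ is the unique minimum $(S\cup\{u_0\},\complement{U})$-terminal cut if and only if $W_S=U$. The containment $W_S\subseteq U$ holds automatically since $\complement{U}$ is already the sink set, so no analogue of Lemma~\ref{lemma:V_S-inside-V_1} is required.

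Assume for contradiction that the conclusion fails for some $u_0\in U$; the claim is trivial when $|U|\le 2k-2$ (take $S=U\setminus\{u_0\}$), so assume $|U|\ge 2k-1$. Then $W_S\subsetneq U$ for every $S\subseteq U\setminus\{u_0\}$ of size $2k-3$. Pick $S^*$ of size $2k-3$ that first maximizes $\deltacard(W_{S^*})$ and, among ties, maximizes $|W_{S^*}|$. Pick any $u_{2k-2}\in U\setminus W_{S^*}$ (nonempty, and distinct from $u_0$ since $u_0\in W_{S^*}$), label $S^*=\{u_1,\ldots,u_{2k-3}\}$, and set $\core:=S^*\cup\{u_0,u_{2k-2}\}$. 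For $i\in\{1,\ldots,2k-2\}$ let $B_i$ be the source minimal minimum $(\core\setminus\{u_i\},\complement{U})$-terminal cut. I would then apply Theorem~\ref{theorem:hypergraph-uncrossing} with $R=\{u_0\}$, with $\{u_1,\ldots,u_{2k-2}\}$ in the role of the theorem's $S$, and $A_i:=\complement{B_i}$. The containment $u_i\in B_j$ for $j\ne i$ is immediate since $u_i\in\core\setminus\{u_j\}\subseteq B_j$, and $u_{2k-2}\notin B_{2k-2}=W_{S^*}$ by the choice of $u_{2k-2}$.

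The crux is the verification $u_i\notin B_i$ for $i\in\{1,\ldots,2k-3\}$. If instead $u_i\in B_i$, then $B_i\supseteq\core$, so $B_i$ is a $(S^*\cup\{u_0\},\complement{U})$-terminal cut; combining the minimum-cut lower bound with the maximality of $S^*$ applied to $S':=(S^*\cup\{u_{2k-2}\})\setminus\{u_i\}$ (a valid size-$(2k-3)$ subset of $U\setminus\{u_0\}$ whose source minimal minimum cut is exactly $B_i=W_{S'}$) forces $\deltacard(B_i)=\deltacard(W_{S^*})$. Hence $B_i$ is itself a minimum $(S^*\cup\{u_0\},\complement{U})$-terminal cut, and source minimality of $W_{S^*}$ gives $W_{S^*}\subseteq B_i$. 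Since $u_{2k-2}\in B_i\setminus W_{S^*}$ this is strict, so $|W_{S'}|>|W_{S^*}|$ at the same maximum value of $\deltacard$, contradicting the tie-breaking rule. With all hypotheses verified, Theorem~\ref{theorem:hypergraph-uncrossing} yields a $k$-partition $(P_1,\ldots,P_k)$ of $V$ with $\complement{U}\subsetneq P_k$ and
\[
\cost(P_1,\ldots,P_k)\le \tfrac{1}{2}\min_{i\ne j}\bigl(\deltacard(B_i)+\deltacard(B_j)\bigr)\le \deltacard(W_{S^*})\le \deltacard(U)\le \opt.
\]

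If this cost is strictly less than $\opt$, optimality of $\opt$ is contradicted directly. Otherwise $(P_1,\ldots,P_k)$ is itself a minimum $k$-partition; because $P_k\supsetneq\complement{U}$, the union $\bigcup_{i<k}P_i$ is strictly contained in $U$, and taking $U'$ to be the union of any $\lfloor k/2\rfloor$ of the parts $P_1,\ldots,P_{k-1}$ (noting $k-1\ge\lfloor k/2\rfloor$) yields a balanced minimum $k$-partition split with $U'\subsetneq U$, contradicting the minimality of $(U,\complement{U})$. The main obstacle, and the main deviation from the proof of Theorem~\ref{theorem:small-witness-inside-V_1-for-arbitrary-T}, is exactly the verification $u_i\notin B_i$: there the proof relied on source maximality together with a vertex lying outside the source maximal cut, but here the source maximal cut can coincide with $U$ even when uniqueness fails; working with source minimal cuts and breaking ties on $|W_{S^*}|$ is the substitute for the source-maximality argument.
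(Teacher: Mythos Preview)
Your argument is correct, but it follows a genuinely different route from the paper's proof. The paper does not argue by extremal choice of $S$; instead it \emph{constructs} $S$ directly as an inclusion-wise minimal transversal of the collection
\[
\mathcal{C}=\{Q\subseteq V\setminus\{u_0\}:\ \complement{U}\subsetneq Q,\ \deltacard(Q)\le \deltacard(U)\}
\]
inside $U\setminus\{u_0\}$. With this choice, uniqueness of the minimum $(S\cup\{u_0\},\complement{U})$-terminal cut is immediate (any other minimum cut would place its complement in $\mathcal{C}$ yet miss $S$), and the size bound $|S|\le 2k-3$ is proved by contradiction: if $|S|\ge 2k-2$, minimality of the transversal furnishes, for each $u_i\in S$, a set $B_i\in\mathcal{C}$ with $B_i\cap S=\{u_i\}$, and these feed directly into Theorem~\ref{theorem:hypergraph-uncrossing}.

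Both proofs ultimately hinge on Theorem~\ref{theorem:hypergraph-uncrossing} and finish with the same minimality-of-the-split contradiction. The paper's transversal device cleanly separates the two tasks (uniqueness and size), avoids computing any terminal cuts in the analysis, and needs no tie-breaking. Your approach has the virtue of paralleling the proof of Theorem~\ref{theorem:small-witness-inside-V_1-for-arbitrary-T} almost verbatim; the switch to source-minimal cuts together with the secondary maximization of $|W_{S^*}|$ is exactly the right fix for the step where source maximality was used there, and your verification that $u_i\notin B_i$ via the chain $\deltacard(W_{S'})\le \deltacard(W_{S^*})\le \deltacard(B_i)=\deltacard(W_{S'})$ followed by $W_{S^*}\subsetneq B_i=W_{S'}$ is clean.
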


We now state and prove the structural theorem that facilitates the
faster divide and conquer algorithm.

\begin{theorem}\label{theorem:minimal-balanced-split-recovery}
  Let $G=(V,E)$ be a hypergraph and let $(U, \complement{U})$ be a
  minimal balanced minimum $k$-partition split in $G$ for some integer
  $k\ge 2$.
  Then, for every vertex $u_0\in U$, there exist subsets
  $S\subseteq U\setminus \set{u_0}$ and $T\subseteq \complement{U}$
  with $|S|\le 2k-3$ and $|T|\le 2k-2$ such that $(U, \complement{U})$
  is the source minimal minimum $(S\cup \set{u_0}, T)$-terminal cut in
  $G$.
\end{theorem}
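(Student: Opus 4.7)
The plan is to combine the two one-sided theorems (Theorem \ref{theorem:U-side} on the $U$-side and Theorem \ref{theorem:complement-U-side} on the $\complement{U}$-side) via a standard submodular uncrossing argument. First, Theorem \ref{theorem:U-side} applied to $(U, \complement{U})$ and the vertex $u_0$ yields $S\subseteq U\setminus \set{u_0}$ with $|S|\le 2k-3$ such that $(U,\complement{U})$ is the \emph{unique} minimum $(S\cup\set{u_0}, \complement{U})$-terminal cut. Next, since $U$ is the union of $\lfloor k/2\rfloor$ parts of a minimum $k$-partition, every hyperedge crossing $(U,\complement{U})$ also crosses that $k$-partition, so $\deltacard(U)\le \opt$; by symmetry of the hypergraph cut function, $\deltacard(\complement{U}) = \deltacard(U)\le \opt$. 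Thus Theorem \ref{theorem:complement-U-side} applies to the $2$-partition $(\complement{U}, U)$ and produces $T\subseteq \complement{U}$ with $|T|\le 2k-2$ such that $(\complement{U}, U)$ is the source maximal minimum $(T, U)$-terminal cut. Reversing the roles of source and sink, this is equivalent to saying that $(U,\complement{U})$ is the source minimal minimum $(U, T)$-terminal cut.

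I then claim that the chosen $S$ and $T$ witness the conclusion. Let $(X,\complement{X})$ be any $(S\cup\set{u_0}, T)$-terminal cut. The $2$-partition $(X\cap U, \complement{X\cap U})$ is a $(S\cup\set{u_0}, \complement{U})$-terminal cut, so $\deltacard(X\cap U)\ge \deltacard(U)$; and $(X\cup U, \complement{X\cup U})$ is a $(U, T)$-terminal cut (using $T\subseteq \complement{X}\cap \complement{U}$), so $\deltacard(X\cup U)\ge \deltacard(U)$. Submodularity of $\deltacard$ then gives
\[
\deltacard(X) + \deltacard(U) \ge \deltacard(X\cap U) + \deltacard(X\cup U) \ge 2\deltacard(U),
\]
hence $\deltacard(X)\ge \deltacard(U)$. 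Since $(U,\complement{U})$ itself is an $(S\cup\set{u_0}, T)$-terminal cut of value $\deltacard(U)$, this is the minimum value, and at every minimum $(X, \complement{X})$ the chain above is tight; in particular $\deltacard(X\cap U) = \deltacard(U)$. The \emph{uniqueness} guarantee in Theorem \ref{theorem:U-side} then forces $X\cap U = U$, i.e., $U\subseteq X$. Since $(U, \complement{U})$ is itself a minimum $(S\cup\set{u_0}, T)$-terminal cut and every source of a minimum $(S\cup\set{u_0}, T)$-terminal cut contains $U$, the source minimal minimum $(S\cup\set{u_0}, T)$-terminal cut is exactly $(U, \complement{U})$.

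The combination step is essentially the standard uncrossing trick and offers no real obstacle; the genuine work is hidden inside the proofs of Theorems \ref{theorem:U-side} and \ref{theorem:complement-U-side}, which I expect to derive from Theorem \ref{theorem:hypergraph-uncrossing} in the same spirit as the proof of Theorem \ref{theorem:small-witness-inside-V_1-for-arbitrary-T}. The two delicate points in the combination are (i) correctly translating between source maximal and source minimal when reorienting the partition $(\complement{U}, U)$, and (ii) invoking \emph{uniqueness} (rather than merely source maximality) in Theorem \ref{theorem:U-side} to upgrade $X\cap U\subseteq U$ with $\deltacard(X\cap U)=\deltacard(U)$ to the equality $X\cap U = U$ that we actually need.
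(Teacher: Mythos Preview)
Your proposal is correct and follows essentially the same approach as the paper: invoke Theorem~\ref{theorem:U-side} for $S$ and Theorem~\ref{theorem:complement-U-side} (applied to $(\complement{U},U)$, using $\deltacard(\complement{U})=\deltacard(U)\le\opt$) for $T$, then combine via submodular uncrossing. Your source-minimality step is in fact slightly cleaner than the paper's three-case analysis: you use the tightness $\deltacard(X\cap U)=\deltacard(U)$ together with the \emph{uniqueness} from Theorem~\ref{theorem:U-side} to conclude directly that $U\subseteq X$ for every minimum $(S\cup\{u_0\},T)$-terminal cut, whereas the paper argues by separately ruling out $X\supsetneq U$, $X\subsetneq U$, and $X$ crossing $U$.
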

\begin{proof}
  Let $u_0\in U$. Applying Theorem \ref{theorem:U-side} to
  $(U, \complement{U})$ with respect to vertex $u_0\in U$, we obtain a
  set $S\subseteq U$ with $|S|\le 2k-3$ such that
  $(U, \complement{U})$ is the unique minimum
  $(S\cup \set{u_0}, \complement{U})$-terminal cut in $G$.

  Applying Theorem \ref{theorem:complement-U-side} to
  $(\complement{U}, U)$, we obtain a set $T\subseteq \complement{U}$
  with $|T|\le 2k-2$ 
  such that
    $(\complement{U},U)$ is source-maximal minimum $(T, U)$ cut in
    $G$. Hence, by interchanging source and sink, $(U,\complement{U})$
    is the source-minimal minimum $(U,T)$ cut in $G$.
  
  We will show that $(U, \complement{U})$ is the source minimal
  minimum $(S\cup \set{u_0}, T)$-terminal cut in $G$. We first show
  that $(U, \complement{U})$ is a minimum
  $(S\cup \set{u_0}, T)$-terminal cut. Let $(X, \complement{X})$ be a
  minimum $(S\cup \set{u_0}, T)$-terminal cut.  Then,
  \[
    \deltacard(U)\ge \deltacard(X)
  \]
  since $(U, \complement{U})$ is a $(S\cup \set{u_0}, T)$-terminal cut.
  Since $(X\cap U, \complement{X\cap U})$ is a $(S\cup \set{u_0}, \complement{U})$-terminal cut, we have
  \[
    \deltacard(X\cap U)\ge \deltacard(U).
  \]
  Since $(X\cup U, \complement{X\cup U})$ is a $(U, T)$-terminal cut, we have
  \[
    \deltacard(X\cup U)\ge \deltacard(U).
  \]
  The above three inequalities in conjunction with the submodularity of the cut function imply that
  \[
    2\deltacard(U) \ge \deltacard(X) + \deltacard(U) \ge \deltacard(X\cap U) + \deltacard(X\cup U) \ge 2\deltacard(U).
  \]
  Hence, all the above inequalities should be equations and therefore, $\deltacard(U)=\deltacard(X)$.

  Next, we show that $(U, \complement{U})$ is the source minimal
  minimum $(S\cup \set{u_0}, T)$-terminal cut. For the sake of
  contradiction, suppose $(X, \complement{X})$ is the source minimal
  minimum $(S\cup \set{u_0}, T)$-terminal cut with $X\neq U$. We have
  the following cases.

\noindent \textbf{Case 1.} Suppose $X\supsetneq U$. Then, $(U, \complement{U})$ contradicts source minimality of the  minimum $(S\cup \set{u_0}, T)$-terminal cut $(X, \complement{X})$.

\noindent \textbf{Case 2.} Suppose $X\subsetneq U$. Then,
$(X, \complement{X})$ is also a minimum $(S\cup \set{u_0},
\complement{U})$-terminal cut, 
a contradiction since the choice of $S$
  implies that $(U,\complement{U})$ is unique minimum $(S\cup \set{u_0},
\complement{U})$-terminal cut.

\noindent \textbf{Case 3.} Suppose $X\setminus U\neq \emptyset$ and $X\setminus \complement{U}\neq \emptyset$. Then, we have
\[
\deltacard(X\cap U)\ge \deltacard(X)
\]
since $(X\cap U, \complement{X\cap U})$ is a $(S\cup \set{u_0}, T)$-cut. We also have
\[
\deltacard(X\cup U)\ge \deltacard(X)
\]
since $(X\cup U, \complement{X\cup U})$ is a $(S\cup \set{u_0}, T)$-cut. The above two  inequalities in conjunction with the submodularity of the cut function imply that
\[
2\deltacard(X) = \deltacard(X) + \deltacard(U) \ge \deltacard(X\cap U) + \deltacard(X\cup U) \ge 2\deltacard(X).
\]
Therefore, $\deltacard(X\cap U)=\deltacard(X)$. Thus, the $2$-partition $(X\cap U, \complement{X\cap U})$ contradicts source minimality of the minimum $(S\cup \set{u_0}, T)$-terminal cut $(X, \complement{X})$.
\end{proof}

\subsection{Proof of Theorem \ref{theorem:complement-U-side}}\label{subsec:proof-of-theorem-complement-U-side}
We restate and prove Theorem \ref{theorem:complement-U-side} in this section.
\thmComplementUSide*
\begin{proof}
For the sake of contradiction, suppose that the theorem is false. Our proof strategy is to obtain a cheaper $k$-partition with cost strictly less than $\opt$, thereby contradicting optimality.
For a subset $X\subseteq U$, let $(V_{X}, \complement{V_X})$ be the source maximal minimum $(X,\complement{U})$-terminal cut.

Let $X$ be an arbitrary subset of $U$ with $|X|=2k-2$. Since we are assuming that the theorem is false, it follows that $V_X\neq U$. By definition, we have that $V_X\subsetneq U$. By source maximality of the minimum $(X,\complement{U})$-terminal cut $(V_X,\complement{V_X})$, we have that $\deltacard(V_X)<\deltacard(U)$.


Among all possible subsets of $U$ of size $2k-2$, pick a subset $S$ such that $\deltacard(V_{S})$ is maximum. Then, $V_S\subsetneq U$ and
\[\deltacard(V_X)\le \deltacard(V_{S}) <\deltacard(U) \text{ for every $X\subseteq U$ with $|X|=2k-2$}. \]

The rest of the proof is identical to the proof of Theorem \ref{theorem:small-witness-inside-V_1-for-arbitrary-T}.
Let $u_1,\ldots, u_{2k-2}$ be the vertices in $S$. Since $V_{S}\subsetneq U$, there exists a vertex $u_{2k-1}\in U\setminus V_{S}$. Let $\core:=\set{u_1,\ldots, u_{2k-1}}=S\cup \set{u_{2k-1}}$. Also, let $(B_i, \complement{B_i})$ be the source maximal minimum $(\core-\set{u_i},\complement{U})$-terminal cut for every $i\in [2k-1]$. We note that $(B_{2k-1}, \complement{B_{2k-1}})=(V_{S}, \complement{V_{S}})$ and the size of $\core-\set{u_i}$ is $2k-2$ for every $i\in [2k-1]$. Hence, we have
\begin{align}
    \deltacard(B_i)&\le \deltacard(V_{S}) <\deltacard(U)  \text{ and } B_i\subsetneq U \text{ for every } i\in [2k-1]. \label{statement:complement-U-side}
\end{align}
The next claim will set us up to apply Theorem \ref{theorem:hypergraph-uncrossing}.
\begin{claim}\label{claim:u_i-in-B_i-for-D-C}
For every $i\in [2k-1]$, we have that $u_i\in \complement{B_i}$.
\end{claim}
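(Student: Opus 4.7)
The plan is to mirror the structure of Claim \ref{claim:u_i-in-B_i} from the proof of Theorem \ref{theorem:small-witness-inside-V_1-for-arbitrary-T}, since the setup here is essentially the analogue with $\complement{U}$ playing the role of the terminal set $T$ and $U$ playing the role of $V_1$.

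First, I would dispose of the easy case $i=2k-1$: by the choice of $u_{2k-1}\in U\setminus V_S$ and the fact that $(B_{2k-1},\complement{B_{2k-1}})=(V_S,\complement{V_S})$, we immediately get $u_{2k-1}\in \complement{B_{2k-1}}$. The substance is therefore the case $i\in[2k-2]$, which I handle by contradiction: assume $u_i\in B_i$ and derive a violation of the source maximality of $V_S$.

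Under the assumption $u_i\in B_i$, both $V_S\cap B_i$ and $V_S\cup B_i$ contain $S$ (the former uses $S\subseteq V_S$ together with $S\setminus\set{u_i}\subseteq C\setminus\set{u_i}\subseteq B_i$ and our assumption $u_i\in B_i$), and both are disjoint from $\complement{U}$ (since $V_S,B_i\subseteq U$ by statement (\ref{statement:complement-U-side})). Hence $(V_S\cap B_i,\complement{V_S\cap B_i})$ and $(V_S\cup B_i,\complement{V_S\cup B_i})$ are both feasible $(S,\complement{U})$-terminal cuts, yielding
\[
\deltacard(V_S\cap B_i)\ge\deltacard(V_S) \quad \text{and} \quad \deltacard(V_S\cup B_i)\ge\deltacard(V_S).
\]
Combining this with submodularity of $\deltacard$ and the maximality in the choice of $S$ (which gives $\deltacard(V_S)\ge\deltacard(B_i)$ since $B_i$ is the source maximal minimum $(C\setminus\set{u_i},\complement{U})$-terminal cut and $|C\setminus\set{u_i}|=2k-2$), I would chain
\[
2\deltacard(V_S)\ge\deltacard(V_S)+\deltacard(B_i)\ge\deltacard(V_S\cap B_i)+\deltacard(V_S\cup B_i)\ge 2\deltacard(V_S),
\]
forcing equality throughout. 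In particular $\deltacard(V_S\cup B_i)=\deltacard(V_S)$, so $(V_S\cup B_i,\complement{V_S\cup B_i})$ is also a minimum $(S,\complement{U})$-terminal cut.

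The contradiction to source maximality of $(V_S,\complement{V_S})$ is then obtained by noting that $u_{2k-1}\in C\setminus\set{u_i}\subseteq B_i$ (because $i\le 2k-2$), while $u_{2k-1}\notin V_S$ by definition; thus $V_S\cup B_i\supsetneq V_S$. There is no real obstacle here: the argument is an essentially verbatim adaptation of Claim \ref{claim:u_i-in-B_i}, with the roles of $V_1$ and $T$ replaced by $U$ and $\complement{U}$. The only subtlety worth double-checking is that the maximality choice of $S$ and the source maximality of $V_S$ are used \emph{in the same direction} as in the original argument, and that the witness vertex for the strict containment $V_S\cup B_i\supsetneq V_S$ continues to be $u_{2k-1}$.
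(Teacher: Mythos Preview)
Your proposal is correct and follows essentially the same approach as the paper's proof: handle $i=2k-1$ directly, then for $i\in[2k-2]$ assume $u_i\in B_i$, use that both $V_S\cap B_i$ and $V_S\cup B_i$ are $(S,\complement{U})$-terminal cuts together with the maximality of $S$ and submodularity to force $\deltacard(V_S\cup B_i)=\deltacard(V_S)$, and contradict source maximality via $u_{2k-1}\in B_i\setminus V_S$. Your write-up is slightly more explicit about why $V_S\cap B_i$ contains $S$ and is contained in $U$, but the argument is otherwise identical.
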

\begin{proof}
The claim holds for $i=2k-1$ by choice of $u_{2k-1}$. For the sake of contradiction, suppose $u_i\in B_i$ for some $i\in [2k-2]$. Then, the $2$-partition $(V_{S}\cap B_i,\complement{V_{S}\cap B_i})$ is a $(S,\complement{U})$-terminal cut and hence
\[
\deltacard(V_{S}\cap B_i)\ge \deltacard(V_{S}).
\]
We also have
\[
\deltacard(V_{S}\cup B_i)\ge \deltacard(V_{S})
\]
since $(V_{S}\cup B_i,\complement{V_S\cup B_i})$ is a $(S,\complement{U})$-terminal cut.
Thus,
\begin{align*}
    2\deltacard(V_S)
    &\ge \deltacard(V_S) + \deltacard(B_i) \quad \quad \quad \quad \quad \quad \text{  (By choice of $S$)}\\
    &\ge \deltacard(V_S \cup B_i) + \deltacard(V_S \cap B_i) \quad \quad \text{(By submodularity)}\\
    &\ge 2\deltacard(V_S).
\end{align*}
Therefore, $\deltacard(V_S)=\deltacard(V_S\cup B_i)$. Moreover, $B_i\setminus V_S$ is non-empty since the vertex $u_{2k-1}\in B_i\setminus V_S$.
Hence, the $2$-partition $(V_S\cup B_i, \complement{V_S\cup B_i})$ is a minimum $(S,\complement{U})$-terminal cut and it contradicts source maximality of the minimum $(S,\complement{U})$-terminal cut $(V_S, \complement{V_S})$.
\end{proof}

Let $p:=2k-2$. Using Claim \ref{claim:u_i-in-B_i-for-D-C}, we observe that the sets $U$, $R:=\set{u_{2k-1}}$, $S=\set{u_1, \ldots, u_{2k-2}}$, and the partitions $(\complement{A_i}, A_i):=(B_i, \complement{B_i})$ for $i\in [p]$ satisfy the conditions of Theorem \ref{theorem:hypergraph-uncrossing}. By Theorem \ref{theorem:hypergraph-uncrossing}, symmetry of the cut function, and statement (\ref{statement:complement-U-side}), we obtain a $k$-partition $(P_1, \ldots, P_k)$ of $V$ such that
\begin{align*}
\cost(P_1, \ldots, P_k)
&\le \frac{1}{2}\min\set{\deltacard(A_i) + \deltacard(A_j):i, j\in [p], i\neq j}\\
&= \frac{1}{2}\min\set{\deltacard(B_i) + \deltacard(B_j): i, j \in [p], i\neq j}\\
&<\deltacard(U)\le \opt.
\end{align*}
Thus, we have obtained a $k$-partition whose cost is smaller than $\opt$, a contradiction.

\end{proof}

\subsection{Proof of Theorem \ref{theorem:U-side}}\label{subsec:proof-of-theorem-U-side}
We restate and prove Theorem \ref{theorem:U-side} in this section.
\thmUSide*
\begin{proof}
Let $u_0\in U$ and let $\opt$ be the value of a minimum $k$-cut in $G$.
Consider the collection
\[
\collection :=\set{Q\subseteq V\setminus \set{u_0}:\ \complement{U}\subsetneq Q,\ \deltacard(Q)\le \deltacard(U)}.
\]
Let $S$ be an inclusion-wise minimal subset of $U\setminus \set{u_0}$ such that $S\cap Q\neq \emptyset$ for all $S\in \collection$ i.e., the set $S$ is completely contained in $U\setminus \set{u_0}$ and is a minimal transversal of the collection $\collection$. Proposition \ref{prop:unique-min} and Lemma \ref{lemma:transversal-size} complete the proof of the theorem for this choice of $S$.
\end{proof}

\begin{proposition}\label{prop:unique-min}
The $2$-partition $(U, \complement{U})$ is the unique minimum $(S\cup \set{u_0}, \complement{U})$-terminal cut in $G$.
\end{proposition}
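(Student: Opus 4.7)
The plan is to unwind the definitions and observe that $\collection$ is exactly parametrizing the ``competitors'' that could tie or beat $(U,\complement{U})$ as minimum $(\{u_0\},\complement{U})$-terminal cuts from the inside. A set $Q \in \collection$ satisfies $\complement{U} \subsetneq Q \subseteq V\setminus\set{u_0}$ and $\deltacard(Q)\le \deltacard(U)$, so its complement $X := \complement{Q}$ is a strict subset of $U$ that contains $u_0$ and has $\deltacard(X)=\deltacard(Q)\le \deltacard(U)$. Hence, by symmetry of the cut function, $\collection$ is in bijection with the source sets $X$ of $(\set{u_0},\complement{U})$-terminal cuts with $X\subsetneq U$ and $\deltacard(X)\le \deltacard(U)$.

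First I would observe that $(U,\complement{U})$ is itself a feasible $(S\cup\set{u_0},\complement{U})$-terminal cut, since $S\cup\set{u_0}\subseteq U$ by construction. Thus the minimum value of any $(S\cup\set{u_0},\complement{U})$-terminal cut is at most $\deltacard(U)$. Next I would argue that no other feasible source set can achieve value $\le \deltacard(U)$. Let $(X,\complement{X})$ be any $(S\cup\set{u_0},\complement{U})$-terminal cut with $X\ne U$. Then $S\cup\set{u_0}\subseteq X$ and $X\subseteq V\setminus \complement{U}=U$, so $X\subsetneq U$. Suppose for contradiction that $\deltacard(X)\le \deltacard(U)$. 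Setting $Q:=\complement{X}$, we have $\complement{U}\subsetneq Q$ (since $X\subsetneq U$), $u_0\notin Q$ (since $u_0\in X$), and $\deltacard(Q)=\deltacard(X)\le \deltacard(U)$; hence $Q\in \collection$.

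Now I invoke the transversal property of $S$: since $S$ meets every member of $\collection$, we must have $S\cap Q\neq \emptyset$, i.e.\ $S\not\subseteq X$. This contradicts $S\subseteq X$, establishing that $\deltacard(X)>\deltacard(U)$ for every feasible source set $X\ne U$. Therefore $(U,\complement{U})$ is the unique minimum $(S\cup\set{u_0},\complement{U})$-terminal cut. There is no real obstacle here: the proposition is essentially a direct consequence of the definition of $\collection$ together with $S$ being a transversal; the inclusion-wise minimality of $S$ is not needed for uniqueness (it will only be needed afterwards, in Lemma~\ref{lemma:transversal-size}, to bound $|S|\le 2k-3$). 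The one subtlety to flag explicitly is that the feasibility constraint $X\subseteq U$ is forced by $T=\complement{U}$ lying on the sink side, which is what allows the bijection with $\collection$ to be used at all.
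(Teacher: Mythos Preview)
Your proof is correct and follows essentially the same line as the paper's: both observe that any competing source set $X\subsetneq U$ with $\deltacard(X)\le \deltacard(U)$ yields $\complement{X}\in\collection$, whereupon the transversal property of $S$ forces $S\cap\complement{X}\neq\emptyset$, contradicting $S\subseteq X$. Your remark that the inclusion-wise minimality of $S$ is not needed here (only later for the size bound) is also in line with the paper.
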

\begin{proof}
For the sake of contradiction, suppose $(X, \complement{X})$ is a minimum $(S\cup \set{u_0}, \complement{U})$-terminal cut in $G$ such that $X\neq U$. Then, $\deltacard(\complement{X})\le \deltacard(U)$ since $(U,\complement{U})$ is a feasible $(S\cup \set{u_0}, \complement{U})$-terminal cut. By definition, $\complement{U}\subsetneq \complement{X}\subseteq V\setminus \set{u_0}$. Hence, the set $\complement{X}$ is in the collection $\collection$. Since $S$ is a transversal of the collection $\collection$, we have that $S\cap \complement{X}\neq \emptyset$. This contradicts the fact that $S$ is contained in $X$.
\end{proof}

\begin{lemma}\label{lemma:transversal-size}
The size of the transversal $S$ is at most $2k-3$.
\end{lemma}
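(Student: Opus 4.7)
My plan is to assume $|S| \ge 2k-2$ for contradiction and feed suitable minimum terminal cuts into Theorem \ref{theorem:hypergraph-uncrossing}. Let $p := |S| \ge 2k-2$ and enumerate $S = \{u_1, \ldots, u_p\}$. By minimality of $S$ as a transversal of $\collection$, for each $i$ there is a witness $Q_i \in \collection$ with $Q_i \cap S = \{u_i\}$; in particular $\complement{U} \subsetneq Q_i$, $u_0 \notin Q_i$, and $\deltacard(Q_i) \le \deltacard(U)$. I will set $R := \{u_0\}$ and, for each $i$, take $A_i$ to be the sink-maximal (equivalently source-minimal) minimum $((S \cup R) \setminus \{u_i\}, \complement{U})$-terminal cut, so that $\complement{U} \subseteq A_i$ and $\deltacard(A_i) \le \deltacard(Q_i) \le \deltacard(U)$.

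The first crucial verification will be that $u_i \in A_i$ for every $i$, and this is the step I expect to be the main obstacle, as it is where Proposition \ref{prop:unique-min} must be exploited rather than mere minimality of the cut. I will split on whether $\deltacard(A_i) < \deltacard(U)$ or $\deltacard(A_i) = \deltacard(U)$. In the strict case, if $u_i$ lay on the source side $\complement{A_i}$, then $(\complement{A_i}, A_i)$ would itself be a $(S \cup \{u_0\}, \complement{U})$-terminal cut of cost strictly less than $\deltacard(U)$, contradicting Proposition \ref{prop:unique-min}. In the equality case, $\deltacard(Q_i) = \deltacard(U) = \deltacard(A_i)$ forces $Q_i$ to be a minimum $((S \cup R) \setminus \{u_i\}, \complement{U})$-terminal cut whose sink contains $u_i$; by submodularity, the union (on the sink side) of two minimum terminal cuts is again minimum, so the sink-maximal cut $A_i$ inherits $u_i$ from $Q_i$.

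With $u_i \in A_i$ established, all hypotheses of Theorem \ref{theorem:hypergraph-uncrossing} will be in place: $\emptyset \ne R \subsetneq U \subsetneq V$ (we may harmlessly assume $|U| \ge 2$, since $|U| \le 1$ forces $|S| = 0$ and the conclusion holds trivially), $S \subseteq U \setminus R$ has $p \ge 2k-2$ elements, each $(\complement{A_i}, A_i)$ is a minimum $((S \cup R) \setminus \{u_i\}, \complement{U})$-terminal cut by construction, and $u_i \notin A_j$ for every $j \ne i$ because $u_i$ is a source terminal of the $j$-th cut. The theorem then yields a $k$-partition $(P_1, \ldots, P_k)$ with $\complement{U} \subsetneq P_k$ and
\[
\cost(P_1, \ldots, P_k) \;\le\; \tfrac{1}{2}\min_{i \ne j}\bigl(\deltacard(A_i) + \deltacard(A_j)\bigr) \;\le\; \deltacard(U) \;\le\; \opt.
\]

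Finally, I will convert this bound into the desired contradiction in two cases. If the inequality is strict, the resulting $k$-partition is cheaper than $\opt$, contradicting optimality. Otherwise the inequalities are tight, so $(P_1, \ldots, P_k)$ is itself a minimum $k$-partition. Since $\complement{U} \subseteq P_k$, each $P_i$ with $i \in [k-1]$ lies inside $U$; and since $\complement{U} \subsetneq P_k$ is strict, $P_k \cap U \ne \emptyset$, whence $\bigcup_{i=1}^{k-1} P_i = V \setminus P_k$ is a proper subset of $U$. Taking $U' := \bigcup_{i=1}^{\lfloor k/2 \rfloor} P_i$ then produces a balanced minimum $k$-partition split with $U' \subsetneq U$, contradicting the minimality of $(U, \complement{U})$ and completing the proof.
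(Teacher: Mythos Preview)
Your proof is correct and follows essentially the same route as the paper: assume $|S|\ge 2k-2$, build the source-minimal minimum $((S\cup\{u_0\})\setminus\{u_i\},\complement{U})$-terminal cuts, verify $u_i\in A_i$, invoke Theorem~\ref{theorem:hypergraph-uncrossing}, and extract a balanced minimum $k$-partition split strictly inside $U$. The only noteworthy difference is in the verification of $u_i\in A_i$: the paper shows directly that $A_i\in\collection$ (arguing $\complement{U}\subsetneq A_i$ via source-minimality against the witness $Q_i$) and then uses that $S$ is a transversal, whereas you split on $\deltacard(A_i)<\deltacard(U)$ versus $\deltacard(A_i)=\deltacard(U)$ and invoke Proposition~\ref{prop:unique-min} and the lattice property of minimum cuts respectively; both arguments are valid, the paper's being marginally cleaner since it avoids the case split and the appeal to Proposition~\ref{prop:unique-min}. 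Your final strict/equality split is harmless but unnecessary, since $\cost(P_1,\ldots,P_k)<\opt$ is impossible by definition of $\opt$.
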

\begin{proof}
For the sake of contradiction, suppose $|S|\ge 2k-2$. We will construct a balanced minimum $k$-partition split in $G$ that contradicts the minimality of the balanced minimum $k$-partition split $(U, \complement{U})$.
Let $S=\set{u_1, \ldots, u_p}$ for $p\ge 2k-2$. For each $i\in [p]$, let $(\complement{A_i}, A_i)$ be the source minimal minimum $((S\cup \set{u_0})\setminus \set{u_i}, \complement{U})$-terminal cut.

\begin{claim}\label{claim:min-separating-cuts-miss-u_i}
For every $i\in [p]$, we have that $\deltacard(A_i)\le \deltacard(U)$ and $u_i\in A_i$.
\end{claim}
\begin{proof}
Let $i\in [p]$. Since $S$ is a minimal transversal for the collection $\collection$, there exists a set $B_i\in \collection$ such that $B_i \cap S = \set{u_i}$. Hence, $(\complement{B_i}, B_i)$ is a feasible $(S\cup \set{u_0}\setminus \set{u_i}, \complement{U})$-terminal cut. Therefore,
\[
\deltacard(A_i) \le \deltacard(B_i)\le \deltacard(U).
\]

We will show that $A_i$ is in the collection $\collection$.
By definition, $A_i\subseteq V\setminus \set{u_0}$ and $\complement{U}\subseteq A_i$. If $\complement{U}=A_i$, then the above inequalities are equations implying that $(B_i, \complement{B_i})$ is a minimum $((S\cup \set{u_0})\setminus \set{u_i}, \complement{U})$-terminal cut, and consequently, $(B_i, \complement{B_i})$ contradicts source minimality of the minimum $((S\cup \set{u_0})\setminus \set{u_i}, \complement{U})$-terminal cut $(\complement{A_i}, A_i)$. Therefore, $\complement{U}\subsetneq A_i$. Hence, $A_i$ is in the collection $\collection$.

We recall that the set $S$ is a transversal for the collection $\collection$ and none of the elements of $S\setminus \set{u_i}$ are in $A_i$. Hence, the element $u_i$ must be in $A_i$.
\end{proof}

Using Claim \ref{claim:min-separating-cuts-miss-u_i}, we observe that the sets $U$, $R:=\set{u_0}$, $S$, and the partitions $(\complement{A_i}, A_i)$ for $i\in [p]$ satisfy the conditions of Theorem \ref{theorem:hypergraph-uncrossing}. By Theorem \ref{theorem:hypergraph-uncrossing} and Claim \ref{claim:min-separating-cuts-miss-u_i}, we obtain $k$-partition $(P_1, \ldots, P_k)$ of $V$ such that $\complement{U}\subsetneq P_k$ and
\begin{align*}
\cost(P_1, \ldots, P_k)
&\le \frac{1}{2}\min\set{\deltacard(A_i) + \deltacard(A_j):i, j\in [p], i\neq j}
\le \deltacard(U)
\le \opt.
\end{align*}
Thus, we have obtained a minimum $k$-partition $(P_1, \ldots, P_k)$ such that $\complement{U}\subsetneq P_k$. Now, consider $U':=\cup_{i=1}^{\lfloor k/2 \rfloor} P_i$. We observe that $(U', \complement{U'})$ is a balanced minimum $k$-partition split such that $U'$ is strictly contained in $U$, a contradiction to minimality of the balanced minimum $k$-partition split $(U, \complement{U})$.

\end{proof}

\section{Divide and Conquer Algorithm}
\label{section:DC-algo}
In this section, we design an $n^{O(k)}$-time algorithm based on
divide and conquer. We describe the algorithm in Figure
\ref{fig:k-cut-divide-and-conquer-algorithm} and its run-time
guarantee in Theorem \ref{theorem:min-k-cut-divide-and-conquer-algo}.
To recap from the introduction, the high-level idea is to use minimum $(S,T)$-terminal cuts to find a balanced minimum $k$-partition split
$(U,\complement{U})$; the balance helps in cutting the recursion depth
which results in savings in the overall run-time. 


\begin{figure*}[ht]
\centering\small
\begin{algorithm}
\textul{Algorithm DIVIDE-AND-CONQUER-CUT$(G,k)$}\+
\\  {\bf Input: } Hypergraph $G=(V,E)$ and an integer $k\ge 1$
\\  {\bf Output: } A $k$-partition corresponding to a minimum $k$-cut in $G$
\\  If $k=1$\+
\\      Return $V$\-
\\  Initialize $\mathcal{R}\leftarrow \emptyset$ and $p\leftarrow \lfloor k/2\rfloor $
\\  For every disjoint $S, T\subset V$ with $|S|, |T|\le 2k-2$\+
\\      Compute the source minimal minimum $(S, T)$-terminal cut $(U,\complement{U})$
\\      If $|U|\ge p$ and $\complement{U}\ge k-p$\+
\\          $\mathcal{R}\leftarrow \mathcal{R}\cup \set{(U, \complement{U})}$
\\          $\mathcal{P}_U:= \text{DIVIDE-AND-CONQUER-CUT}(G[U],p)$
\\          $\mathcal{P}_{\complement{U}}:= \text{DIVIDE-AND-CONQUER-CUT}(G[\complement{U}],k-p)$
\\          $C_U:=$ Partition of $V$ obtained by concatenating the parts in $\mathcal{P}_U$ and $\mathcal{P}_{\complement{U}}$\-\-
\\  Among all $k$-partitions $C_U$ with $(U,\complement{U})\in \mathcal{R}$, pick the one with minimum cost and return it \-
\end{algorithm}
\caption{Divide and conquer algorithm to compute minimum $k$-cut in hypergraphs.}
\label{fig:k-cut-divide-and-conquer-algorithm}
\end{figure*}

\begin{theorem}\label{theorem:min-k-cut-divide-and-conquer-algo}
  Let $G=(V,E)$ be a $n$-vertex hypergraph of size $p$ and let $k$ be
  an integer. Then, algorithm DIVIDE-AND-CONQUER-CUT$(G,k)$ returns a
  partition corresponding to a minimum $k$-cut in $G$ and it can be
  implemented to run in $O(n^{8k}T(n, p))$ time, where $T(n,p)$
  denotes the time complexity for computing the source minimal minimum $(S,T)$-terminal
  cut in a $n$-vertex hypergraph of size $p$.
\end{theorem}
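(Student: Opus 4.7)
The plan is to verify correctness by induction on $k$, using Theorem~\ref{theorem:minimal-balanced-split-recovery} as the pivotal structural guarantee, and then to analyze the running time via a standard divide-and-conquer recurrence.

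For correctness, the base case $k=1$ is immediate since the algorithm returns $V$. For the inductive step with $k \ge 2$, I would fix an arbitrary minimum $k$-partition $(V_1, \ldots, V_k)$ of $G$. By relabeling the parts we may assume that the $2$-partition $(U^*, \complement{U^*})$ with $U^* = V_1 \cup \cdots \cup V_p$ (where $p = \lfloor k/2 \rfloor$) is a minimal balanced minimum $k$-partition split. Pick any $u_0 \in U^*$. Theorem~\ref{theorem:minimal-balanced-split-recovery} then yields sets $S \subseteq U^* \setminus \set{u_0}$ and $T \subseteq \complement{U^*}$ with $|S| \le 2k-3$ and $|T| \le 2k-2$ such that $(U^*, \complement{U^*})$ is the source minimal minimum $(S \cup \set{u_0}, T)$-terminal cut. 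Since $|S \cup \set{u_0}| \le 2k-2$, the algorithm's enumeration over disjoint pairs of sizes at most $2k-2$ considers the pair $(S \cup \set{u_0}, T)$, and the min-cut computation returns precisely $(U^*, \complement{U^*})$. Because $|U^*| \ge p$ and $|\complement{U^*}| \ge k - p$, this pair enters $\mathcal{R}$. By the induction hypothesis, the recursive call on $G[U^*]$ returns a $p$-partition of cost at most $\cost_{G[U^*]}(V_1, \ldots, V_p)$, and similarly on the other side. Concatenation yields a $k$-partition $C_{U^*}$ with cost at most $\deltacard(U^*) + \cost_{G[U^*]}(V_1, \ldots, V_p) + \cost_{G[\complement{U^*}]}(V_{p+1}, \ldots, V_k) = \cost(V_1, \ldots, V_k) = \opt$. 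Since the algorithm returns the minimum over $\mathcal{R}$, it returns an optimum $k$-partition.

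For the running time, let $M(k, n)$ denote the number of source minimal min-cut computations performed by DIVIDE-AND-CONQUER-CUT on an $n$-vertex hypergraph with parameter $k$. The outer loop enumerates $O(n^{4k-4})$ ordered disjoint pairs $(S, T)$ with $|S|, |T| \le 2k-2$, each triggering one min-cut computation and at most two recursive calls with parameters $\lfloor k/2 \rfloor$ and $\lceil k/2 \rceil$, yielding the recurrence
\[
M(k, n) \le O(n^{4k-4}) \bigl(1 + M(\lfloor k/2 \rfloor, n) + M(\lceil k/2 \rceil, n)\bigr), \qquad M(1, n) = 0.
\]
Unrolling across the $O(\log k)$ levels of recursion, the exponents of $n$ accumulate as $4k + 4(k/2) + 4(k/4) + \cdots \le 8k$ by the geometric sum, while the branching contributes only a $2^{O(\log k)}$ factor. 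Thus $M(k, n) = O(n^{8k})$ for fixed $k$, and charging $T(n, p)$ per min-cut computation yields the claimed $O(n^{8k} T(n, p))$ total running time.

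The substantive structural content is entirely encapsulated in Theorem~\ref{theorem:minimal-balanced-split-recovery}, so I do not anticipate a serious technical obstacle here; the only care needed is in the recurrence unrolling and in noting that the size-threshold check $|U| \ge p$, $|\complement{U}| \ge k-p$ is automatically satisfied by any balanced minimum $k$-partition split of an optimum solution.
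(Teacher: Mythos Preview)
Your proposal is correct and follows essentially the same approach as the paper: induction on $k$ via Theorem~\ref{theorem:minimal-balanced-split-recovery} for correctness, and the identical recurrence $M(k,n)\le O(n^{4k-4})(1+M(\lfloor k/2\rfloor,n)+M(\lceil k/2\rceil,n))$ for the running time. One small wording issue: you cannot fix an \emph{arbitrary} minimum $k$-partition and then ``relabel'' to obtain a minimal balanced split, since the minimal balanced split may only be realized by a different minimum $k$-partition; as in the paper, you should instead directly choose a minimum $k$-partition $(V_1,\ldots,V_k)$ whose first $\lfloor k/2\rfloor$ parts form a minimal balanced minimum $k$-partition split (such a partition exists by definition of the latter).
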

\begin{proof}
  We first show the correctness of the algorithm. All candidates
  considered by the algorithm correspond to a $k$-partition, so we
  only have to show that the algorithm returns a $k$-partition
  corresponding to a minimum $k$-cut.  We show this by induction on
  $k$. The base case of $k=1$ is trivial. We show the induction step.
  Let $(P_1, \ldots, P_k)$ be a minimum $k$-partition in $G$ such that
  for $p=\lfloor k/2 \rfloor$, the $2$-partition
  $(U_0:=\cup_{i=1}^p P_i, \complement{U_0}=\cup_{i=p+1}^k P_i)$ is a
  minimal balanced minimum $k$-partition split.  Let $\opt$ denote the
  value of a minimum $k$-partition in $G$.

  We observe that $|U_0|\ge p$ and $|\complement{U_0}|\ge k-p$.  By
  Theorem \ref{theorem:minimal-balanced-split-recovery}, the
  $2$-partition $(U_0, \complement{U_0})$ is in $\mathcal{R}$.  By
  induction hypothesis, the algorithm will return a $p$-partition
  $\mathcal{P}_{U_0}=(Q_1, \ldots, Q_p)$ of $U_0$ and a $(k-p)$-partition
  $\mathcal{P}_{\complement{U_0}}=(Q_{p+1}, \ldots, Q_k)$ of
  $\complement{U_0}$ such that
  \begin{align*}
    \cost_{G[U_0]}(Q_1, \ldots, Q_p)&\le \cost_{G[U_0]}(P_1, \ldots, P_p) \text{ and }\\
    \cost_{G[\complement{U_0}]}(Q_{p+1}, \ldots, Q_k)&\le \cost_{G[U_0]}(P_{p+1}, \ldots, P_k).
  \end{align*}
  Hence, the cost of the partition $(Q_1, \ldots, Q_k)$ returned by
  the algorithm is
  \begin{align*}
    \deltacard(U_0)+\cost_{G[U_0]}(Q_1, \ldots, Q_p)&+\cost_{G[\complement{U_0}]}(Q_{p+1}, \ldots, Q_k) \\
                                                &\le\deltacard(U_0)+\cost_{G[U_0]}(P_1, \ldots, P_p)+\cost_{G[\complement{U_0}]}(P_{p+1}, \ldots, P_k) \\
                                                &= \cost_G(P_1, \ldots, P_k) \\
                                                &= \opt.
  \end{align*}

  Next, we prove the run-time bound. We will derive an upper bound
  $N(k, n)$ on the number of source minimal minimum $(S,T)$-terminal cut computations
  executed by the algorithm, where we assume that $N(k, n)$ is an
  increasing function of $k$ and $n$. We know that $N(1, n)=O(1)$. We
  have
  \begin{align*}
    N(k, n) = O\left( n^{4k-4}\right) \left(1+N\left(\left\lceil \frac{k}{2} \right\rceil, n\right) + N\left(\left\lfloor \frac{k}{2} \right\rfloor, n\right)\right).
  \end{align*}
  By substitution, it can be verified that $N(k, n)=O(n^{8k})$.  The
  running time is dominated by the number of terminal cut
  computations and this yields the desired time bound on the algorithm.

\end{proof}

\paragraph{Acknowledgements.} We thank Krist\'{o}f B\'{e}rczi, Tam\'{a}s Kir\'{a}ly, and Chao Xu 
for helpful discussions on alternative approaches for deterministic hypergraph $k$-cut during preliminary stages of this work.
We also thank Sagemath (\url{www.sagemath.org}) and CoCalc (\url{www.cocalc.com}) 
for providing software platforms to conduct experiments that led us towards our main structural theorem. 

\bibliographystyle{amsplain}
\bibliography{references}

\providecommand{\bysame}{\leavevmode\hbox to3em{\hrulefill}\thinspace}
\providecommand{\MR}{\relax\ifhmode\unskip\space\fi MR }
\providecommand{\MRhref}[2]{%
  \href{http://www.ams.org/mathscinet-getitem?mr=#1}{#2}
}
\providecommand{\href}[2]{#2}
\begin{thebibliography}{10}

\bibitem{BSW19}
N.~Buchbinder, R.~Schwartz, and B.~Weizman, \emph{A simple algorithm for the
  multiway cut problem}, Operations Research Letters \textbf{47} (2019), no.~6,
  587--593.

\bibitem{CXY19}
K.~Chandrasekaran, C.~Xu, and X.~Yu, \emph{Hypergraph $k$-cut in randomized
  polynomial time}, Mathematical Programming (Preliminary version in SODA 2018)
  (2019).

\bibitem{CE11}
C.~Chekuri and A.~Ene, \emph{{Approximation Algorithms for Submodular Multiway
  Partition}}, Proceedings of the 52nd IEEE Annual Symposium on Foundations of
  Computer Science, FOCS, 2011, pp.~807--816.

\bibitem{CL15}
C.~Chekuri and S.~Li, \emph{{A note on the hardness of approximating the
  $k$-way Hypergraph Cut problem}}, Manuscript,
  \url{http://chekuri.cs.illinois.edu/papers/hypergraph-kcut.pdf}, 2015.

\bibitem{CQX19}
C.~Chekuri, K.~Quanrud, and C.~Xu, \emph{{LP Relaxation and Tree Packing for
  Minimum $k$-cuts}}, 2nd Symposium on Simplicity in Algorithms, SOSA, 2019,
  pp.~7:1--7:18.

\bibitem{ChekuriX18}
C.~Chekuri and C.~Xu, \emph{Minimum cuts and sparsification in hypergraphs},
  SIAM Journal on Computing \textbf{47} (2018), no.~6, 2118--2156.

\bibitem{DEFPR03}
R.~Downey, V.~Estivill-Castro, M.~Fellows, E.~Prieto, and F.~Rosamund,
  \emph{Cutting up is hard to do: The parameterised complexity of k-cut and
  related problems}, Electronic Notes in Theoretical Computer Science
  \textbf{78} (2003), 209--222.

\bibitem{EneVW13}
A.~Ene, J.~Vondr{\'a}k, and Y.~Wu, \emph{Local distribution and the symmetry
  gap: Approximability of multiway partitioning problems}, Proceedings of the
  24th annual ACM-SIAM symposium on Discrete algorithms, SODA, 2013,
  pp.~306--325.

\bibitem{FPZ19}
K.~Fox, D.~Panigrahi, and F.~Zhang, \emph{Minimum cut and minimum $k$-cut in
  hypergraphs via branching contractions}, Proceedings of the 30th Annual
  ACM-SIAM Symposium on Discrete Algorithms, SODA, 2019, pp.~881--896.

\bibitem{F10}
T.~Fukunaga, \emph{Computing minimum multiway cuts in hypergraphs}, Discrete
  Optimization \textbf{10} (2013), no.~4, 371--382.

\bibitem{GKP17}
M.~Ghaffari, D.~Karger, and D.~Panigrahi, \emph{Random contractions and
  sampling for hypergraph and hedge connectivity}, Proceedings of the 28th
  Annual ACM-SIAM Symposium on Discrete Algorithms, SODA, 2017, p.~1101–1114.

\bibitem{GH94}
O.~Goldschmidt and D.~Hochbaum, \emph{{A Polynomial Algorithm for the $k$-cut
  Problem for Fixed $k$}}, Mathematics of Operations Research \textbf{19}
  (1994), no.~1, 24--37.

\bibitem{GH88}
O.~{Goldschmidt} and D.~S. {Hochbaum}, \emph{Polynomial algorithm for the k-cut
  problem}, Proceedings of the 29th Annual Symposium on Foundations of Computer
  Science, 1988, pp.~444--451.

\bibitem{GQ}
F.~Guinez and M.~Queyranne, \emph{The size-constrained submodular k-partition
  problem}, Unpublished manuscript. Available at
  \url{https://docs.google.com/viewer?a=v&pid=sites&srcid=ZGVmYXVsdGRvbWFpbnxmbGF2aW9ndWluZXpob21lcGFnZXxneDo0NDVlMThkMDg4ZWRlOGI1}.
  See also
  \url{https://smartech.gatech.edu/bitstream/handle/1853/43309/Queyranne.pdf},
  2012.

\bibitem{GLL20-STOC}
A.~Gupta, E.~Lee, and J.~Li, \emph{The {Karger-Stein Algorithm is Optimal for}
  $k$-cut}, Preprint in arXiv: 1911.09165, 2019.

\bibitem{KYN07}
Y.~Kamidoi, N.~Yoshida, and H.~Nagamochi, \emph{{A Deterministic Algorithm for
  Finding All Minimum $k$-Way Cuts}}, SIAM Journal on Computing \textbf{36}
  (2007), no.~5, 1329--1341.

\bibitem{Karger00}
D.~Karger, \emph{Minimum cuts in near-linear time}, Journal of the ACM
  \textbf{47} (2000), no.~1, 46--76.

\bibitem{KS96}
D.~Karger and C.~Stein, \emph{A new approach to the minimum cut problem},
  Journal of the ACM \textbf{43} (1996), no.~4, 601--640.

\bibitem{KT11}
K.~Kawarabayashi and M.~Thorup, \emph{The minimum k-way cut of bounded size is
  fixed-parameter tractable}, Proceedings of the 52nd Annual Symposium on
  Foundations of Computer Science, FOCS, 2011, pp.~160--169.

\bibitem{La73}
E.~Lawler, \emph{{Cutsets and Partitions of Hypergraphs}}, {Networks}
  \textbf{3} (1973), 275--285.

\bibitem{Li19}
J.~Li, \emph{Faster minimum k-cut of a simple graph}, Proceedings of the 60th
  Annual Symposium on Foundations of Computer Science, FOCS, 2019,
  pp.~1056--1077.

\bibitem{Ma17dks}
P.~Manurangsi, \emph{{Almost-polynomial Ratio ETH-hardness of Approximating
  Densest $k$-subgraph}}, Proceedings of the 49th Annual ACM Symposium on
  Theory of Computing, STOC, 2017, pp.~954--961.

\bibitem{Ma17}
\bysame, \emph{{Inapproximability of Maximum Biclique Problems, Minimum $k$-Cut
  and Densest At-Least-$k$-Subgraph from the Small Set Expansion Hypothesis}},
  Proceedings of the 44th International Colloquium on Automata, Languages, and
  Programming, ICALP, 2017, pp.~79:1--79:14.

\bibitem{M06}
D.~Marx, \emph{Parameterized graph separation problems}, Theoretical Computer
  Science \textbf{351} (2006), no.~3, 394--406.

\bibitem{NI92}
H.~Nagamochi and T.~Ibaraki, \emph{A linear-time algorithm for finding a sparse
  $k$-connected spanning subgraph of a $k$-connected graph}, Algorithmica
  \textbf{7} (1992), no.~1-6, 583--596.

\bibitem{OFN12}
K.~Okumoto, T.~Fukunaga, and H.~Nagamochi, \emph{Divide-and-conquer algorithms
  for partitioning hypergraphs and submodular systems}, Algorithmica
  \textbf{62} (2012), no.~3, 787--806.

\bibitem{Q19}
K.~Quanrud, \emph{{Fast and Deterministic Approximations for k-Cut}},
  Proceedings of Approximation, Randomization, and Combinatorial Optimization.
  Algorithms and Techniques, APPROX, 2019, pp.~23:1--23:20.

\bibitem{Q99}
M.~Queyranne, \emph{On optimum size-constrained set partitions}, 1999, Talk at
  Aussiois workshop on Combinatorial Optimization.

\bibitem{RS10}
P.~Raghavendra and D.~Steurer, \emph{{Graph Expansion and the Unique Games
  Conjecture}}, Proceedings of the 42nd Annual ACM Symposium on Theory of
  Computing, STOC, 2010, pp.~755--764.

\bibitem{SV95}
H.~Saran and V.~Vazirani, \emph{{Finding k Cuts within Twice the Optimal}},
  SIAM Journal on Computing \textbf{24} (1995), no.~1, 101--108.

\bibitem{SW97}
M.~Stoer and F.~Wagner, \emph{A simple min-cut algorithm}, Journal of the ACM
  (JACM) \textbf{44} (1997), no.~4, 585--591.

\bibitem{Th08}
M.~Thorup, \emph{{Minimum $k$-way Cuts via Deterministic Greedy Tree Packing}},
  Proceedings of the 40th Annual ACM Symposium on Theory of Computing, STOC,
  2008, pp.~159--166.

\bibitem{Xi08}
M.~Xiao, \emph{{An Improved Divide-and-Conquer Algorithm for Finding All
  Minimum k-Way Cuts}}, Proceedings of 19th International Symposium on
  Algorithms and Computation, ISAAC, 2008, pp.~208--219.

\bibitem{ZNI05}
L.~Zhao, H.~Nagamochi, and T.~Ibaraki, \emph{{Greedy splitting algorithms for
  approximating multiway partition problems}}, Mathematical Programming
  \textbf{102} (2005), no.~1, 167--183.

\end{thebibliography}

\end{document}